\newtheorem{theorem}{Theorem}
\newtheorem{lemma}[theorem]{Lemma}
\newtheorem{definition}{Definition}
\newtheorem{remark}{Remark}
\newtheorem{example}{Example}
\newcommand{\comment}[1]{}
\def\bX{{\bf X}}
\def\bY{{\bf Y}}
\def\bZ{{\bf Z}}
\def\bS{{\bf S}}
\def\bs{{\bf s}}
\def\bD{{\bf D}}
\def\bd{{\bf d}}
\def\bq{{\bf q}}
\def\bx{{\bf x}}
\def\by{{\bf y}}
\def\bW{{\bf W}}
\def\bhW{{\bf \hat{W}}}
\def\CR{\dot{R}}
\def\CC{\dot{C}}
\def\CCP{\dot{C}_{\bot} }
\def\CCPP{\dot{C}_{\bot \bot}}
\def\by{{\bf y}}
\def\bc{{\bf c}}
\def\bw{{\bf w}}
\def\bv{{\bf v}}
\def\bu{{\bf u}}
\def\Pm{P_m(k'_n)}
\def\bH{{\bf H}}
\def\CR{\dot{R}}
\def\CC{\dot{C}}
\def\CCe{\dot{C}_{\epsilon}}
\def\PR{\dot{R^{A}}}
\def\CCP{\dot{C}_{\bot}}
\def\el{\ell}
\def\cA{\mbox{$\cal{A}$}}
\def\cC{\mbox{$\cal{C}$}}
\def\cD{\mbox{$\cal{D }$}}
\def\cE{\mbox{$\cal{E }$}}
\def\cM{\mbox{$\cal{M }$}}
\def\cB{\mbox{$\cal{B}$}}
\def\cY{\mbox{$\cal{Y}$}}
\def\cW{\mbox{$\cal{W}$}}
\def\cK{\mbox{$\cal{K}$}}
\def\cT{\mbox{$\cal{T}$}}
\def\cI{\mbox{$\cal{I}$}}
\def\Pr{\text{Pr} }
\def\MS{\cT_{t}}
\def\cS{\mbox{$\cal{S}$}}
\def\cW{\mbox{$\cal{W}$}}
\def\cE{\mbox{$\cal{E}$}}
\def\cK{\mbox{$\cal{K}$}}
\def\cU{\mbox{$\cal{U}$}}
\def\cN{\mbox{$\cal{N}$}}
\newcommand{\edit}[1]{{\color{black}#1}}
\newcommand{\I}[1]{\mathbf{1} (#1)}
\begin{document}
	\IEEEoverridecommandlockouts
	\title{Scaling Laws for Gaussian  Random  \\Many-Access  Channels 
	}
	\author{
		\IEEEauthorblockN{Jithin~Ravi  and Tobias~Koch}
		\thanks{J.~Ravi and T.~Koch have received funding from the European Research Council (ERC) under the European Union's Horizon 2020 research and innovation programme (Grant No.~714161). T.~Koch has further received funding from the Spanish Ministerio de Econom\'ia y Competitividad under Grants RYC-2014-16332 and TEC2016-78434-C3-3-R (AEI/FEDER, EU).  The material in this paper was presented in part at the IEEE International Symposium on Information Theory, Paris, France, July 2019, at the International Zurich Seminar
			on Information and Communication, Zurich, Switzerland, February 2020, and at the  IEEE International Symposium on Information Theory, Los Angeles, CA, USA, June 2020.
			
 J. Ravi and	T. Koch are with the Signal Theory and Communications Department, Universidad Carlos III de Madrid, Spain, and with the Gregorio Mara\~n\'on  Health Research Institute, Madrid, Spain (emails: rjithin@tsc.uc3m.es, koch@tsc.uc3m.es).}
	}
	
	\maketitle
	\begin{abstract}
	This paper considers a Gaussian multiple-access channel  with random user activity where the total number of users $\ell_n$ and the average number of active users $k_n$ may grow with the blocklength $n$. For this channel, it studies the maximum number of bits that can be transmitted reliably per unit-energy as a function of $\ell_n$ and $k_n$. When all users are active with probability one, i.e., $\ell_n = k_n$, it is demonstrated that if $k_n$ is of an order strictly below $n/\log n$, then each user can achieve the single-user capacity per unit-energy $(\log e)/N_0$ (where $N_0/ 2$ is the noise power) by using an orthogonal-access scheme.  In contrast, if $k_n$ is of an order strictly above $n/\log n$, then the users cannot achieve any positive rate per unit-energy.  Consequently, there is a sharp transition between orders of growth where interference-free communication is feasible and orders of growth where reliable communication at a positive rate per unit-energy is infeasible. It is further demonstrated that \edit{orthogonal-access schemes in combination with orthogonal codebooks}, which achieve the capacity per unit-energy when the number of users is bounded, can be strictly suboptimal. 
	
	When the user activity is random, i.e., when $\ell_n$ and $k_n$ are different, it is demonstrated that if $k_n\log \ell_n$ is sublinear in $n$, then each user can achieve the single-user capacity per unit-energy $(\log e)/N_0$.  Conversely, if $k_n\log \ell_n$ is superlinear in $n$, then the users cannot achieve any positive rate per unit-energy. Consequently, there is again a sharp transition between orders of growth where interference-free communication is feasible and orders of growth where reliable communication at a positive rate is infeasible that depends on the asymptotic behaviours of both $\ell_n$ and $k_n$.  It is further demonstrated that orthogonal-access schemes, which are optimal when all users are active with probability one, can be strictly suboptimal in general.
\end{abstract}


\section{Introduction}

Chen~\emph{et al.} \cite{ChenCG17} introduced the many-access channel (MnAC) as a multiple-access channel (MAC) where the number of users grows with the blocklength and each user is active with a given probability. This model is motivated by systems consisting of a single receiver and many transmitters, the number of which is comparable or even larger than the blocklength. This situation may occur, \emph{e.g.}, in a machine-to-machine communication scenario with many thousands of devices in a given cell that are active only sporadically. In \cite{ChenCG17}, Chen~\emph{et al.} considered a Gaussian MnAC with $\el_n$ users, each of which is active with probability $\alpha_n$, and determined the number of messages $M_n$ each user can transmit reliably with a codebook of average power not exceeding $P$. Since then, MnACs have been studied in various papers under different settings.  

An example of a MAC is the uplink connection in a cellular network. Current cellular networks follow a \emph{grant-based} access protocols, i.e., an active device has to obtain permission from the base station to transmit data. In MnACs, this will lead to a large signalling overhead. \emph{Grant-free} access protocols, where active devices can access the network without a permission, were proposed to overcome this~\cite{LiuPopovski18}. 
The synchronization issues \edit{arising} in such scenarios have been studied by Shahi \emph{et al.}~\cite{ShahiTD18}.
 In some of the MnACs, such as sensor networks, detecting the identity of a device that sent a particular message may not be important. This scenario was studied under the name of \emph{unsourced} massive access by Polyanskiy~\cite{Polyanskiy17}, who further introduced the notion of \emph{per-user probability of error}.
  Specifically, \cite{Polyanskiy17} analyzed the minimum \emph{energy-per-bit} required to reliably transmit a message over an unsourced Gaussian MnAC where the number of active users grows linearly in the blocklength and each user's payload is fixed.  Low-complexity schemes for this setting were studied in many works~\cite{OrdentlichP17, Vem19, Amalladinne20, Fengler19, Fengler20}. Generalizations to quasi-static fading MnACs can be found in~\cite{KowshikPISIT19,KowshikP19,KowshikTCOM20}. Zadik \emph{et al.}~\cite{ZadikPT19} presented improved bounds on the tradeoff between user density and energy-per-bit for the many-access channel introduced in~\cite{Polyanskiy17}.
  

Related to energy per-bit is the \emph{capacity per unit-energy} $\CC$ which is defined as the largest number of bits per unit-energy that can be transmitted reliably over a channel. Verd\'u \cite{Verdu90} showed that $\CC$ can be obtained from the capacity-cost function $C(P)$, defined as the largest number of bits per channel use that can be transmitted reliably with average power per symbol not exceeding $P$, as 
\begin{align*}
\CC=\sup_{P>0} \frac{C(P)}{P}.
\end{align*}
 For the Gaussian channel with noise power $N_0/2$, this is equal to $\frac{\log e}{N_0}$. Verd\'u further showed that the capacity per unit-energy can be achieved by a codebook that is orthogonal in the sense that the nonzero components of different codewords do not overlap. \edit{Such a codebook corresponds to pulse position modulation (PPM) either in time or frequency domains.} In general, we shall say that a codebook is orthogonal if the inner product between different codewords is zero. The two-user Gaussian multiple access channel (MAC) was also studied in~\cite{Verdu90}, and it was demonstrated that both users can achieve the single-user capacity per unit-energy by timesharing the channel between the users, i.e., while one user transmits the other user remains silent. This is an \emph{orthogonal-access scheme} in the sense that the inner product between codewords of different users is zero.\footnote{Note, however, that in an orthogonal-access scheme the codebooks are not required to be orthogonal. That is, codewords of different codebooks are orthogonal to each other, but codewords of the same codebook need not be.} To summarize, in a two-user Gaussian MAC, both users can achieve the rate per unit-energy $\frac{\log e}{N_0} $ by combining an orthogonal-access scheme with orthogonal codebooks. This result can be directly generalized to any finite number of users.

 The picture changes when the number of users grows without bound with the blocklength $n$.
 In this paper,  we consider a setting where the total number of users $\el_n$ may grow as an arbitrary function of the blocklength and the probability $\alpha_n$ that a user is active may be a function of the blocklength, too. Contributions of this paper are as follows.

\begin{enumerate}
	\item First, we consider the capacity per unit-energy of the Gaussian MnAC as a function of the order of growth of users when all users are active with probability one. In Theorem~\ref{Thm_nonrandom}, we show that, if the order of growth is above $n/ \log n$, then the capacity per unit-energy is zero, and if the order of growth is below $n/ \log n$, then each user can achieve the singe-user capacity per unit-energy $\frac{\log e}{N_0}$. Thus, there is a sharp transition between orders of growth where interference-free communication is feasible and orders of growth where reliable communication at a positive rate is infeasible. \edit{
	We further show that, if the order of growth is proportional to $n/\log n$, then the capacity per unit-energy is strictly between zero and $\frac{\log e}{N_0}$. Finally, we  show that the capacity per unit-energy can be achieved by an orthogonal-access scheme.}
	
	\item Since an orthogonal-access scheme in combination with orthogonal codebooks is optimal in achieving the capacity per unit-energy for a finite number of users,  we study the performance of such a scheme for an unbounded number of users. In particular,  we characterize in Theorem~\ref{Thm_ortho_code} the largest rate per unit-energy that can be achieved with an orthogonal-access scheme and orthogonal codebooks. Our characterization shows that this scheme is only optimal if the number of users grows more slowly than any positive power of $n$.
	
	\item We then analyze the behaviour of the capacity per unit-energy of the Gaussian MnAC as a function of \edit{the} order of growth of the number of users for the per-user probability of error which, in this paper,  we shall refer to as \emph{average probability of error} (APE). In contrast, we refer to the classical probability of error as \emph{joint probability of error} (JPE). We demonstrate that, if the order of growth  is sublinear, then each user can achieve the capacity per unit-energy $\frac{\log e}{N_0}$ of the single-user Gaussian channel. Conversely, if the growth is linear or above, then the capacity per unit-energy is zero (Theorem~\ref{Thm_capac_APE}).  Comparing with the results in Theorem~\ref{Thm_nonrandom}, we observe that relaxing the error probability from JPE to APE shifts the transition threshold separating the two regimes of interference-free communication and no reliable communication from $n/\log n$ to $n$.
	
	\item We next consider MnACs with random user activity. As before, we consider a setting where the total number of users $\el_n$ may grow as an arbitrary function of the blocklength. Furthermore, the probability $\alpha_n$ that a user is active may be a function of the blocklength, too. Let $k_n = \alpha_n \el_n$ denote the average number of active users. We demonstrate in Theorem~\ref{Thm_random_JPE} that, if $k_n \log \el_n$ is sublinear in $n$, then each user can achieve the single-user capacity per unit-energy. Conversely, if $k_n \log \el_n$ is superlinear in $n$, then the capacity per unit-energy is zero. \edit{We also demonstrate that, if $k_n \log \ell_n$ is linear in $n$, then the capacity per unit-energy is strictly between zero and $\frac{\log e}{N_0}$.}
	
	\item We further show in Theorem~\ref{Thm_ortho_accs} that orthogonal-access schemes, which are optimal when $\alpha_n=1$, are strictly suboptimal when $\alpha_n \to 0$.  In Theorem~\ref{Thm_capac_PUPE}, we then characterize the behaviour of \edit{the} random MnAC under APE.
	
	\item	\edit{We conclude the paper with a comparison of the setting considered in this paper and the setting proposed by Polyanskiy in \cite{Polyanskiy17}. Since an important aspect of the Polyanskiy setting is that the probability of error does not vanish as the blocklength tends to infinity, we briefly discuss the behavior of the $\epsilon$-capacity per unit-energy, i.e., the largest rate per unit-energy for which the error probability does not exceed a given $\epsilon$.} For the case where the users are active with probability one, we show that, for JPE and an unbounded number of users, the $\epsilon$-capacity per unit energy coincides with $\CC$. In other words, the strong converse holds in this case.  In contrast, for APE, the $\epsilon$-capacity per unit-energy can be strictly larger than $\CC$, so the strong converse does not hold.
\end{enumerate}

The rest of the paper is organized as follows. Section~\ref{Sec_model} introduces the system model and the different notions of probability of error.  Section~\ref{Sec_nonrandom} presents our results for
the case where all users are active with probability one (``non-random MnAC"). Section~\ref{sec_random_MnAC} presents our results for the case where the user activity is random (``random MnAC"). 
\edit{
Section~\ref{Sec_discuss}
 briefly discusses our results with that obtained in~\cite{Polyanskiy17}  under non-vanishing probability of error.} Section~\ref{Sec_conclusion} concludes the paper with a summary and discussion of our results.

\section{Problem Formulation and Preliminaries}
\label{Sec_model}
\subsection{Model and Definitions}
\label{Sec_Def}

Consider a network with $\el$ users that, if they are active, wish to transmit their messages $W_i, i=1, \ldots, \el$ to one common receiver, see Fig.~\ref{Fig_many_acc}. The messages are assumed to be independent and uniformly distributed on $ \cM_n^{(i)} \triangleq \{1,\ldots,M_n^{(i)}\}$. To transmit their messages, the users send a codeword of $n$ symbols over the channel, where $n$ is referred to as the \emph{blocklength}. We consider a many-access scenario where the number of users $\el$ may grow with $n$, hence, we denote it as $\el_n$. 
We  assume that a user is active with probability $\alpha_n$. 
\edit{
We denote the average number of active users at blocklength $n$ by $k_n$, i.e., $k_n = \alpha_n \ell_n$.} 

Let $\cU_n$ denote the set of active users at blocklength $n$, defined as
\begin{align*}
\cU_n \triangleq \{\edit{i=1,\ldots, \ell_n} : \mbox{user } i \mbox{ is active} \}.
\end{align*}
We  consider a Gaussian channel model where the received vector $\bY$ is given by
\begin{align}
\bY & = \sum_{i\in \cU_n } \bx_i(W_i) + \bZ. \label{Eq_model}
\end{align}
Here $ \bx_i(W_i)$ is the length-$n$ transmitted codeword from user $i$ for message $W_i$, and $\bZ$ is 
a vector of $n$ i.i.d. Gaussian components $Z_j \sim \cN(0, N_0/2)$ (where $\cN(\mu, \sigma^2)$ denotes the Gaussian distribution with mean $\mu$ and variance $\sigma^2$) independent of $\bX_i\triangleq \bx_i(W_i)$. The decoder produces 
\edit{an estimate of the users that are active and estimates of their transmitted messages.}

\begin{figure}[htb]
	\centering
	\includegraphics[scale=0.5]{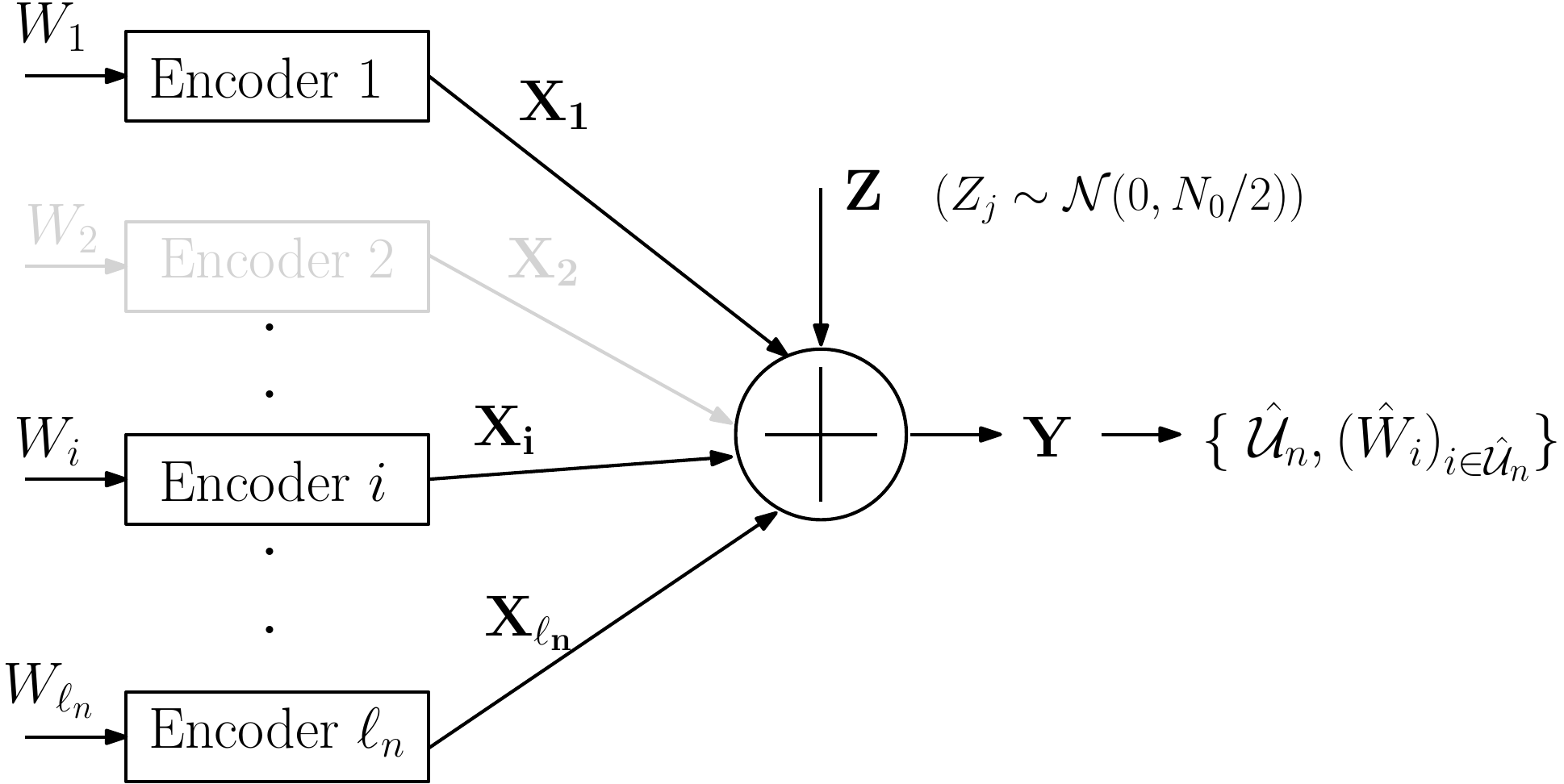}
	\caption{Many-access channel with $\ell_n$ users at blocklength $n$.}
	\label{Fig_many_acc}
\end{figure}

There are different ways to define the overall probability of error.
Let $\hat{\mathcal{U}}_n$ and $\hat{W}_i$ denote the decoder's \edit{estimates} of the set of active users and the message transmitted by user $i$, respectively. Further let $\mathcal{D}_E$ denote the event \edit{that} the set of active users was detected erroneously, i.e., that $\hat{\mathcal{U}}_n \neq \mathcal{U}_n$, and let $\mathcal{M}_E$ denote the event 
that $\hat{W}_i\neq W_i$ for some $i\in\mathcal{U}_n$ (where we set $\hat{W}_i=0$ for every $i\notin \hat{\mathcal{U}}_n$).  One possibility to measure the likelihood of these two events is via the probability of error
\begin{equation}
P_e^{(n)} = \Pr(\mathcal{D}_E \cup\mathcal{M}_E). \label{Eq_prob_err_union}
\end{equation}
Another possibility is to consider
\begin{equation}
P_{\textnormal{max}}^{(n)} = \max(\Pr(\cD_E),\Pr(\cM_E)).
\end{equation}
We have \edit{$P_{\textnormal{max}}^{(n)} \leq P_e^{(n)} \leq 2 P_{\textnormal{max}}^{(n)}$}. Indeed, the left inequality follows since $\Pr(\cD_E\cup \mathcal{M}_E)\geq \Pr(\cD_E ) $ and $\Pr(\cM_E\cup \cM_E)\geq \Pr(\cM_E)$. The right-most inequality follows because, by the union bound, $P_e^{(n)}\leq \Pr(\mathcal{D}_E)$ + $\Pr(\mathcal{M}_E) \leq 2 P_{\textnormal{max}}^{(n)}$. So, in general, $P_{\textnormal{max}}^{(n)}$ is more optimistic than $P_{e}^{(n)}$. However, if we wish the probability of error to vanish as $n\to\infty$, then the two definitions are equivalent since $P_{e}^{(n)}$ vanishes if, and only if, $P_{\textnormal{max}}^{(n)}$ vanishes.

In this paper, we will mainly consider the more pessimistic definition of probability of error $P_e^{(n)}$. For this definition, one can model an inactive user by an active user that transmits message $W_i=0$ and an encoder that maps the zero message to the all-zero codeword. The decoder then simply guesses the transmitted message, and the error probability $P_e^{(n)}$ is given by the probability that the decoder's guess $\hat{W}_i$ is different from $W_i$. Mathematically, this can be described as follows. We enhance the message set to
\begin{align*}
\overline{\cM}_n^{(i)}\triangleq \cM_n^{(i)} \cup \{0\}
\end{align*} 
and define the distribution of the $i$-th user's message as
\begin{align}
\Pr\{W_i = w\}  =
\begin{cases}
1 - \alpha_n,  & \quad w=0 \\
\frac{\alpha_n}{M_n^{(i)}}, & \quad w \in \{1,\ldots,M_n^{(i)}\}.
\end{cases}
\label{Eq_messge_def}
\end{align}
We assume that the codebook is such that message $0$ is mapped to the all-zero codeword. Then, the channel model~\eqref{Eq_model} can be written as 
 \begin{align*}
 \bY & = \sum_{i=1}^{\el_n} \bx_i(W_i) + \bZ. 
 \end{align*}
 We next introduce the notion of an  $(n,\bigl\{M_n^{(\cdot)}\bigr\},\bigl\{E_n^{(\cdot)}\bigr\}, \epsilon)$ code.
 \begin{definition}
 	\label{Def_nMCode}
 	For $0 \leq \epsilon \leq 1$, an  $(n,\bigl\{M_n^{(\cdot)}\bigr\},\bigl\{E_n^{(\cdot)}\bigr\}, \epsilon)$ code for the Gaussian MnAC consists of:
 	\begin{enumerate}
 		\item Encoding functions $f_i: \{0, 1,\ldots,M_n^{(i)}\} \rightarrow \mathbb{R}^n$, \mbox{$i =1,\ldots, \el_n$} which map user $i$'s message to the codeword $\bx_i(W_i)$, satisfying the energy constraint
 		\begin{align}
 		\label{Eq_energy_consrnt}
 		\sum_{j=1}^{n} x_{ij}^2(W_i) \leq E_n^{(i)}, \quad \textnormal{ with probability one}
 		\end{align}
 		where $x_{ij}(W_i)$ is the $j$-th symbol of the transmitted codeword. We set $x_{ij}(0) = 0$, $j=1, \ldots, n$ for all users $i=1,\ldots, \ell_n$.
 		\item Decoding function $g: \mathbb{R}^n \rightarrow \{ 0,1,\ldots,M_n^{(1)}\} \times \ldots \times \{ 0,1,\ldots,M_n^{(\el_n)}\} $ which maps the received vector $\bY$ to the messages of all users and whose probability of error $P_{e}^{(n)}$ satisfies
 	\end{enumerate} 
 	\begin{align}
 	\label{Eq_prob_err}
 	P_{e}^{(n)} \triangleq  \Pr\{ g(\bY) \neq (W_1,\ldots,W_{\el_n}) \} \leq \epsilon.
 	\end{align}
 \end{definition}
 
 \edit{The probability of error in~\eqref{Eq_prob_err} is equal to $P_e^{(n)}$ defined in~\eqref{Eq_prob_err_union}. Indeed, the event $g(\bY) \neq (W_1,\ldots,W_{\el_n})$ occurs if, and only if, there exists at least one index $i=1,\ldots, \ell_n$ for which $\hat{W_i}\neq W_i$. This in turn implies that either event $\cD_E$ occurs (if $W_i=0$) or event $\cM_E$ occurs (if $W_i \neq 0$). Conversely, if the event $\cD_E \cup \cM_E$ occurs, then there exists either some $i \notin \cU_n$ for which $\hat{W}_i \neq 0$ or some $i \in \cU_n$ for which $\hat{W}_i \neq W_i$. Consequently, there exists at least one index $i=1,\ldots, \ell_n$ for which $\hat{W_i}\neq W_i$. It  follows that the events $ g(\bY) \neq (W_1,\ldots,W_{\el_n})$ and $\cD_E \cup \cM_E$ are equivalent.}
 
 \edit{We shall say that the \emph{codebook of user $i$ is orthogonal} if the inner product between $\bx_i(w)$ and $\bx_i(w')$ is zero for every $w\neq w'$, where $w,w'=1,\ldots,M_n^{(i)}$. Similarly, we shall say that an \emph{access scheme is orthogonal} if, for any two users $i$ and $j$, the inner product between $\bx_i(w)$ and $\bx_j(w')$ is zero for every $w=1,\ldots,M_n^{(i)}$ and $w'=1,\ldots, M_n^{(j)}$.}
An $(n,\{M_n^{(\cdot)}\},\{E_n^{(\cdot)}\}, \epsilon)$ code is said to be \emph{symmetric} if $M_n^{(i)} = M_n$ and $E_n^{(i)} = E_n$ for all $i=1, \ldots, \el_n$. For compactness, we denote such a code by $(n, M_n, E_n, \epsilon)$. In this paper, we restrict ourselves to symmetric codes.
\begin{definition}
	\label{Def_Sym_Rate_Cost}
	For a symmetric code, the rate per unit-energy $\CR$ is said to be $\epsilon$-achievable if for every $\delta > 0$ there exists an $n_0$ such that, if $n \geq n_0$, then an $(n,M_n,E_n, \epsilon)$ code can be found whose rate per unit-energy satisfies $\frac{\log M_n}{ E_n} > \CR - \delta$. Furthermore, $\CR$ is said to be achievable if it is $\epsilon$-achievable for all $0 < \epsilon < 1$.  The capacity per unit-energy $\CC$ is the supremum of all achievable rates per unit-energy. The $\epsilon$-capacity per unit-energy $\CCe$ is the supremum of all $\epsilon$-achievable rates per unit-energy.
\end{definition}

\begin{remark}
	\label{remark}
	In \cite[Def.~2]{Verdu90}, a rate per unit-energy $\CR$ is said to be $\epsilon$-achievable if for every $\alpha>0$ there exists an $E_0$ such that, if $E \geq E_0$, then an $(n,M,E,\epsilon)$ code can be found whose rate per unit-energy satisfies $\frac{\log M}{E} > \CR - \alpha$. \edit{Thus, in contrast to Definition~\ref{Def_Sym_Rate_Cost}, the energy $E$ is required to be larger than some threshold, rather than the blocklength $n$.} For the MnAC, where the number of users grows with the blocklength, we believe it is more natural to impose \edit{a threshold on $n$.} Definition~\ref{Def_Sym_Rate_Cost} is also consistent with the definition of energy-per-bit in \cite{Polyanskiy17, ZadikPT19}. Further note that, for the capacity per unit-energy, where a vanishing error probability is required,
	Definition~\ref{Def_Sym_Rate_Cost} is in fact equivalent to 
	\cite[Def.~2]{Verdu90}, since $P_{e}^{(n)}\to 0$ only if $E_n \to \infty$ (see Lemma~\ref{Lem_energy_infty} ahead).
\end{remark}

\begin{remark}
	Many works in the literature on many-access channels, including \cite{Polyanskiy17, OrdentlichP17,ZadikPT19,KowshikPISIT19,KowshikP19,KowshikTCOM20}, consider a \emph{per-user probability of error}
	\begin{equation}
	\label{eq_Pe_A}
	P_{e,A}^{(n)} \triangleq \frac{1}{\el_n} \sum_{i=1}^{\el_n} \textnormal{Pr}\{\hat{W_i} \neq W_i\}
	\end{equation}
	rather than the probability of error in~\eqref{Eq_prob_err}.  In this paper, we shall refer to \eqref{Eq_prob_err} as \emph{joint error probability (JPE)} and to \eqref{eq_Pe_A} as \emph{average error probability (APE)}. While we mainly consider the JPE, we also discuss the capacity per unit-energy for APE. To this end, we define an $(n,\{M_n^{(\cdot)}\},\{E_n^{(\cdot)}\}, \epsilon)$ code for  APE with the same encoding and decoding functions described in Definition~\ref{Def_nMCode}, but with the probability of error \eqref{Eq_prob_err} replaced with~\eqref{eq_Pe_A}. The capacity per unit-energy \edit{and the $\epsilon$-capacity per unit-energy} for APE, denoted by $\CC^A$ \edit{and  $\CCe^A$ respectively, are} then defined as in Definition~\ref{Def_Sym_Rate_Cost}.
\end{remark}

\subsection{Order Notation}
Let $\{a_n\}$ and $\{b_n\}$ be two sequences of nonnegative real numbers.
We write $a_n = O(b_n)$  if $\limsup_{n \to \infty} \frac{a_n}{b_n}	< \infty$. Similarly, we write $a_n = o(b_n)$ if $ \lim_{n\rightarrow \infty} \frac{a_n}{b_n} = 0$, and $a_n = \Omega(b_n)$ if $\liminf\limits_{n \rightarrow \infty} \frac{a_n}{b_n} >0$. 
The notation $a_n = \Theta (b_n)$ indicates that $a_n = O(b_n)$ and $a_n = \Omega(b_n)$.
Finally, we write
$a_n = \omega (b_n)$ if $\lim\limits_{n\rightarrow \infty}  \frac{a_n}{b_n} = \infty$.

 \section{Capacity per Unit-Energy of Non-Random Many-Access Channels}
 \label{Sec_nonrandom}

In this section, we study the Gaussian MnAC \edit{under the assumption that all users are active, i.e., $\alpha_n =1$ and $k_n = \el_n$.} 
We present our results on the non-random MnACs in Subsection~\ref{Sec_feasble_infeasble}. In particular, in Theorem~\ref{Thm_nonrandom}, we  characterize the capacity per unit-energy as a function of the order of $k_n$. Then,  in Theorem~\ref{Thm_ortho_code}, we give a characterization of  the largest rate per unit-energy achievable using an orthogonal-access scheme with orthogonal codebooks. Finally, in Theorem~\ref{Thm_capac_APE}, we present our results on the capacity per unit-energy for APE. The proofs of Theorems~\ref{Thm_nonrandom}--\ref{Thm_capac_APE} are given in Subsections~\ref{Sec_proof_Thm1}, \ref{sec_ortho}, and~\ref{sec_APE}, respectively.

\subsection{Main Results}
\label{Sec_feasble_infeasble}

\begin{theorem}
	\label{Thm_nonrandom}
	The capacity per unit-energy of the \edit{Gaussian} non-random \edit{MnAC} has the following behaviour:
	\begin{enumerate}
		\item 	If $k_n = o(n/\log n)$, then any rate per unit-energy satisfying $\CR < \frac{\log e}{N_0}$ is achievable. Moreover, this rate can be achieved by an orthogonal-access scheme.	\label{Thm_Infeasble_achv}
		\item 	 If $k_n =\omega(n / \log n)$, then $\CC =0$. In words, if the order of $k_n$ is strictly above $n/\log n$, then no coding scheme achieves a positive rate per unit-energy. 	\label{Thm_Infeasble_convrs}
		\edit{\item 	If $k_n = \Theta( \frac{n}{\log n})$, then $0< \CC < \frac{\log e}{N_0} $. In words, if the order of $k_n$ is exactly $n/\log n$, then a positive rate per unit-energy, but strictly less than $\frac{\log e}{N_0}$ is achievable.
		\label{Thm_exact_order}}
	\end{enumerate}
\end{theorem}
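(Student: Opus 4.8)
The plan is to prove the two bounds $\CC>0$ and $\CC<\tfrac{\log e}{N_0}$ separately; the regime $k_n=\Theta(n/\log n)$ is the threshold case, in which an orthogonal-access scheme (as used for part~\ref{Thm_Infeasble_achv}, but now tuned for a fixed positive rate rather than for the full $\tfrac{\log e}{N_0}$) still yields a positive rate per unit-energy, and in which the converse mechanism behind part~\ref{Thm_Infeasble_convrs} degrades from ``$\CC=0$'' to a strict-but-nonzero gap. Throughout, fix constants $0<\beta'\le\beta<\infty$ with $\beta'\tfrac{n}{\log n}\le k_n\le\beta\tfrac{n}{\log n}$ for all large $n$.

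For $\CC>0$ I would use an orthogonal-access scheme that partitions the $n$ channel uses into $k_n$ disjoint blocks of length $L_n=\lfloor n/k_n\rfloor=\Theta(\log n)$, one per user. Fix a power level $P>0$ and a rate $R\in(0,C(P))$, and let each user employ a good length-$L_n$ Gaussian code: $M_n=\lceil 2^{L_nR}\rceil$ codewords, per-codeword energy at most $E_n:=L_nP$, and maximal error probability at most $e^{-L_nE_r(R,P)}$, where $E_r(R,P)>0$ is the random-coding exponent. Since the blocks are disjoint, the JPE is at most $k_n e^{-L_nE_r(R,P)}\le\beta\,e^{E_r(R,P)}\,n^{\,1-E_r(R,P)/(\beta\ln2)}$, which vanishes as soon as $E_r(R,P)>\beta\ln2$. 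Since the zero-rate exponent $E_r(0,P)$ tends to infinity with $P$ and $E_r(\cdot,P)$ is continuous at $0$, I can first choose $P$ large and then $R>0$ small enough that this holds; the resulting scheme has $\tfrac{\log M_n}{E_n}\ge\tfrac{L_nR}{L_nP}=\tfrac RP>0$ and vanishing JPE, so $\CC\ge\tfrac RP>0$. The precise value of $R/P$ (which worsens as $\beta$ grows) is immaterial; only positivity matters.

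For $\CC<\tfrac{\log e}{N_0}$ the plan is to combine two estimates. First, for any sequence of $(n,M_n,E_n,\epsilon_n)$ codes with $\epsilon_n\to0$, Fano's inequality together with the single-letter Gaussian sum-capacity bound (after the harmless reduction to zero-mean codebooks) gives $(1-\epsilon_n)\,k_n\log M_n\le\tfrac n2\log\!\bigl(1+\tfrac{2k_nE_n}{nN_0}\bigr)+1$; dividing by $(1-\epsilon_n)k_nE_n$ and setting $v_n:=k_nE_n/n$ this reads $\tfrac{\log M_n}{E_n}\le\tfrac{1}{1-\epsilon_n}\,\phi(v_n)+\tfrac{1}{(1-\epsilon_n)nv_n}$, where $\phi(v):=\tfrac{\log(1+2v/N_0)}{2v}=\tfrac{\log e}{N_0}\cdot\tfrac{\ln(1+2v/N_0)}{2v/N_0}$ is strictly decreasing on $(0,\infty)$ with $\phi(0^+)=\tfrac{\log e}{N_0}$. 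Second, and this is the key input, I would show that a vanishing JPE forces the energy to grow at least logarithmically, $E_n=\Omega(\log k_n)$ (a strengthening of Lemma~\ref{Lem_energy_infty}, and the same estimate that drives the converse of part~\ref{Thm_Infeasble_convrs}). Granting this (say $E_n\ge c\log k_n$ for a constant $c>0$ and all large $n$), together with $k_n\ge\beta'\tfrac{n}{\log n}$ it gives $v_n=k_nE_n/n\ge c\beta'\,\tfrac{\log k_n}{\log n}\ge v_0>0$ for all large $n$ with $v_0$ a positive constant; substituting $v_n\ge v_0$ into the displayed bound and letting $n\to\infty$ yields $\CR\le\phi(v_0)$ for every achievable $\CR$, and hence $\CC\le\phi(v_0)<\tfrac{\log e}{N_0}$.

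The step I expect to be the main obstacle is the energy lower bound $E_n=\Omega(\log k_n)$ for general codes. For orthogonal-access codes it is immediate: a genie revealing all but one user's message (and restricting that message to two candidates) turns the decoding of user $i$ into a binary hypothesis test between two of its codewords, whose error probability is at least a constant times $Q(\sqrt{2E_n/N_0})$ for a constant fraction of message pairs, and because the $k_n$ such tests then involve disjoint blocks of noise they are independent, so the JPE is at least $1-(1-\Omega(e^{-E_n/N_0}))^{k_n}$ (up to polynomial factors), which cannot vanish unless $E_n\gtrsim N_0\ln k_n$. The difficulty is to make this rigorous without orthogonality, i.e., to argue that the mere separability of the $k_n$ users forces their near-minimum-distance confusion events to be spread out enough that a union over $k_n$ of them still behaves like $k_n$ independent events rather than like a single one. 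The remaining ingredients --- Fano's inequality, the Gaussian sum-capacity bound, the strict concavity of $x\mapsto\log(1+x)$, and the random-coding construction --- are standard.
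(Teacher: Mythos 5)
Your architecture for Part~\ref{Thm_exact_order}) coincides with the paper's: for $\CC>0$ you use the same orthogonal-access scheme with blocks of length $\Theta(\log n)$, constant power, and a positive error exponent beating the $k_n=\Theta(n/\log n)$ union bound (the paper instantiates this with Gallager's bound and an explicit choice $E_n=c\log n$); for $\CC<\frac{\log e}{N_0}$ you use the same Fano/sum-capacity bound and the same observation that it suffices to bound $v_n=k_nE_n/n$ away from zero, which you even phrase slightly more cleanly via monotonicity of $\phi$, avoiding the paper's reduction to $E_n=\Theta(\log n)$. (You do not separately address Parts~\ref{Thm_Infeasble_achv}) and~\ref{Thm_Infeasble_convrs}), though the same two components yield them.)

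The genuine gap is exactly the one you flag: the lower bound $E_n=\Omega(\log k_n)$ for \emph{arbitrary} codes, without which the converse collapses --- nothing then prevents $v_n\to 0$ and $\CR\to\frac{\log e}{N_0}$. Your sketch for this step relies on the $k_n$ per-user confusion events being (near-)independent, which is genuinely tied to orthogonality; for general codes the users' signals overlap and the union of error events can a priori behave like a single event, in which case one only recovers $P_e^{(n)}\geq Q(\sqrt{2E_n/N_0})$, i.e.\ $E_n\to\infty$ but not $E_n=\Omega(\log k_n)$. The paper's resolution (Lemma~\ref{Lem_convrs_err_prob}, proved as a special case of Lemma~\ref{Lem_energy_bound} in Appendix~\ref{Append_prob_lemma}) does not use independence at all. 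Instead it partitions the set of message tuples into subsets $\cS_d^t$ of cardinality at least $k_n+1$ whose elements are pairwise within Hamming distance $8$, so that the corresponding noiseless output points differ by at most $16\sqrt{E_n}$ in Euclidean norm and hence the pairwise relative entropies are at most $256E_n/N_0$; Birg\'e's inequality (a multi-hypothesis strengthening of Fano) then gives $P_e^{(n)}\geq 1-\frac{256E_n/N_0+\log 2}{\log k_n}$ on each subset, and averaging yields the claim. If you want to complete your proof along your own lines you would need some substitute for this packing-plus-Birg\'e argument; the independence heuristic does not extend.
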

\begin{proof}
See Subsection~\ref{Sec_proof_Thm1}.	
\end{proof}

Theorem~\ref{Thm_nonrandom} demonstrates that there is a sharp transition between orders of growth of $k_n$ where
each user can achieve the single-user capacity per unit-energy $\frac{\log e}{N_0}$, i.e., \edit{where} users can communicate as if free of interference, and orders of growth where no positive rate per unit-energy is feasible. The transition threshold separating these two regimes is at the order of growth $n/ \log n$.
 The capacity per unit-energy can be achieved using an orthogonal-access scheme where each user is assigned an exclusive time slot.
 As we shall show in Section~\ref{sec_random_MnAC}, such an access scheme is wasteful in terms of resources and strictly suboptimal when users are active only sporadically.
 \edit{The theorem also demonstrates that, when the order of growth of $k_n$ is exactly equal to $n/\log n$, the rate per unit-energy is strictly positive, but also strictly less than $\frac{\log e}{N_0}$.}

As mentioned in the introduction, when the number of users is finite, all users can achieve the single-user capacity per unit-energy $\frac{\log e}{N_0}$ by an orthogonal-access scheme where each user uses an orthogonal codebook. In the following theorem, we show that this is not necessarily the case \edit{anymore} when the number of users grows with the blocklength.

\begin{theorem} 
	\label{Thm_ortho_code}
	The largest rate per unit-energy $\CCPP$ achievable with an orthogonal-access scheme and orthogonal codebooks has the following behaviour: 
	\begin{enumerate}[1)]
		\item 	 If $k_n = o(n^{c})$ for every $c>0$, then  $\CCPP = \frac{\log e}{N_0}$. \label{Thm_ortho_part1}
		\item If $k_n=\Theta\left({n^c}\right)$, then 
		\begin{equation*}
		\CCPP = \begin{cases} \frac{\log e}{N_0}  \frac{1}{\left(1+\sqrt{\frac{c}{1-c}}\right)^2}, \quad & \textnormal{if $0<c\leq 1/2$} \\ \frac{\log e}{2 N_0} (1-c), \quad & \textnormal{if $1/2<c<1$}.\end{cases}
		\end{equation*}
		\label{Thm_ortho_part2}
	\end{enumerate}
\end{theorem}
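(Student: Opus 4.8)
The plan is to reduce the scheme to $k_n$ parallel $M_n$-ary orthogonal-signaling (pulse-position modulation, PPM) channels and then to optimize jointly over the alphabet size $M_n$ and the symbol energy $E_n$. Since the $M_n$ nonzero codewords of each user are mutually orthogonal, they span an $M_n$-dimensional subspace, and since codewords of distinct users are orthogonal, these $k_n$ subspaces are mutually orthogonal; hence $k_nM_n\le n$. By rotational invariance of the Gaussian noise one may assume that user $i$'s codewords occupy the $i$-th block of $M_n$ coordinates and, after a rotation inside the block, equal $\sqrt{E_{i,w}}$ times a standard basis vector, $E_{i,w}\le E_n$. The received vector then decomposes into $k_n$ independent sub-vectors, the $i$-th being the output of an $M_n$-ary PPM channel. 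Since the $i$-th sub-vector is a sufficient statistic for $W_i$, for any decoder the probability that all messages are decoded correctly is at most $\prod_{i}(1-P_{e,i}^{\mathrm{ML}})$, and a monotonicity argument (raising codeword energies up to $E_n$ cannot increase the ML error of orthogonal signaling) gives $P_{e,i}^{\mathrm{ML}}\ge P_{e,1}(M_n,E_n)$, the error probability of equal-energy $M_n$-ary PPM with symbol energy $E_n$ and noise variance $N_0/2$. Hence $P_e^{(n)}\le\epsilon$ forces $k_nP_{e,1}(M_n,E_n)\le\ln\frac{1}{1-\epsilon}$, while equal-energy PPM on disjoint blocks attains this; so a pair $(M_n,E_n)$ is feasible if and only if $P_{e,1}(M_n,E_n)=O(1/k_n)$.

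The core input is the sharp exponent of the orthogonal-signaling error probability. Setting $\gamma_n:=E_n/(N_0\ln M_n)$, one has, as $M_n\to\infty$ with $\gamma_n$ bounded,
\[
P_{e,1}(M_n,E_n)=\exp\!\bigl(-(f(\gamma_n)+o(1))\ln M_n\bigr),\qquad
f(\gamma)=\begin{cases}0,& 0\le\gamma\le1,\\ (\sqrt{\gamma}-1)^2,& 1\le\gamma\le4,\\ \gamma/2-1,& \gamma\ge4,\end{cases}
\]
where $f$ is the reliability function of the infinite-bandwidth AWGN channel; the upper bound (needed for achievability) and the matching lower bound (needed for the converse) are classical — for $1\le\gamma\le4$ the dominant error event is that the noise in the transmitted coordinate is atypically small, while for $\gamma\ge4$ it is that the maximum of the remaining coordinates is atypically large. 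The decisive structural fact is that the rate per unit-energy of such a scheme is $\frac{\log M_n}{E_n}=\frac{\log e}{\gamma_n N_0}$, which depends only on $\gamma_n$; thus maximizing the rate is equivalent to minimizing $\gamma_n$. Since $k_nM_n\le n$ and $k_n=\Theta(n^{c})$ give $\ln M_n\le(1-c)\ln n+O(1)$ and $\ln k_n=c\ln n+O(1)$, the feasibility condition $k_nP_{e,1}\to0$ reduces to $(1-c)f(\gamma_n)\ge c-o(1)$.

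For achievability, take $M_n=\lfloor n/k_n\rfloor$ and $E_n=\gamma N_0\ln M_n$ with $\gamma>\gamma^{\ast}$, where $\gamma^{\ast}$ is defined by $(1-c)f(\gamma^{\ast})=c$. Then $k_nP_{e,1}=\exp\!\bigl((c-(1-c)f(\gamma)+o(1))\ln n\bigr)\to0$ because $f(\gamma)>f(\gamma^{\ast})$, so $P_e^{(n)}\le k_nP_{e,1}\to0$, and the rate equals $\frac{\log e}{\gamma N_0}$; letting $\gamma\downarrow\gamma^{\ast}$ shows that every rate below $\frac{\log e}{\gamma^{\ast}N_0}$ is achievable. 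For the converse, an achievable rate yields, for each $\epsilon\in(0,1)$ and all large $n$, codes with $P_{e,1}(M_n,E_n)=O(1/k_n)$; the lower bound on $P_{e,1}$ then forces $(1-c)f(\gamma_n)\ge c-o(1)$, hence $\gamma_n\ge\gamma^{\ast}-o(1)$ by the continuity and strict monotonicity of $f$ on $[1,\infty)$, hence the rate $\frac{\log e}{\gamma_nN_0}\le\frac{\log e}{\gamma^{\ast}N_0}(1+o(1))$. Therefore $\CCPP=\frac{\log e}{\gamma^{\ast}N_0}$ with $\gamma^{\ast}=f^{-1}\bigl(\tfrac{c}{1-c}\bigr)$.

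It remains to solve $f(\gamma^{\ast})=\frac{c}{1-c}$. If $\frac{c}{1-c}\le1$, i.e.\ $0<c\le\tfrac12$, then $\gamma^{\ast}=\bigl(1+\sqrt{c/(1-c)}\bigr)^{2}$ lies in the branch $f(\gamma)=(\sqrt{\gamma}-1)^{2}$; if $\frac{c}{1-c}\ge1$, i.e.\ $\tfrac12<c<1$, then $\gamma^{\ast}=\frac{2}{1-c}$ lies in the branch $f(\gamma)=\gamma/2-1$. Substituting into $\CCPP=\frac{\log e}{\gamma^{\ast}N_0}$ yields the two expressions in the statement, which agree at $c=\tfrac12$. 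The step I expect to be the main obstacle is the two-sided estimate of $P_{e,1}$ with the exact exponent $f(\gamma)$ and $o(\ln M_n)$ slack — especially the sphere-packing-type lower bound in the range $1\le\gamma\le4$, since that is what produces the precise constant $\bigl(1+\sqrt{c/(1-c)}\bigr)^{-2}$ rather than the weaker exponent obtainable from a plain union bound; the monotonicity reduction to equal-energy PPM and the arithmetic with $\lfloor n/k_n\rfloor$ are comparatively routine.
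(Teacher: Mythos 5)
Your proposal is correct and follows essentially the same route as the paper: reduce to $k_n$ independent equal-energy $M_n$-ary orthogonal-signaling channels with $k_nM_n\le n$, apply two-sided exponentially tight bounds on the single-user error probability (your exponent $f(\gamma)$ is exactly the paper's $a(\CR)$ from Lemma~\ref{Lem_ortho_code} under the substitution $\gamma=\frac{\log e}{N_0\CR}$), and read off the feasibility criterion $f(\gamma)>c/(1-c)$. The one place where your ``classical'' citation hides real work is the low-rate sphere-packing lower bound: in the fixed-energy (rather than fixed-blocklength) regime the Shannon--Gallager--Berlekamp splitting argument degenerates for orthogonal codes because one subcodeword can carry zero energy, and the paper has to rotate the codebook by a normalized Hadamard matrix before splitting; you correctly flag this step as the main obstacle, so the gap is one of deferred proof rather than of approach.
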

\begin{proof}
	See Subsection~\ref{sec_ortho}.	
\end{proof}

Theorem~\ref{Thm_ortho_code} shows that an orthogonal-access scheme in combination with orthogonal codebooks is  optimal  only if $k_n$ grows more slowly than any positive power of $n$. Part~\ref{Thm_ortho_part2}) of Theorem~\ref{Thm_ortho_code} gives the largest rate per unit-energy achievable when the order of $k_n$ is a positive power of $n$. 

\edit{
\begin{remark}
	\label{Rem_orth_code}
	Observe that the behavior of $\CCPP$ as a function of $c$ can be divided into two regimes: if $1/2<c<1$, then $\CCPP$ decays linearly in $c$; if $0<c\leq 1/2$, then the dependence of $\CCPP$ on $c$ is nonlinear. This is a consequence of the behavior of the error exponent achievable with orthogonal codebooks. More specifically, Theorem~\ref{Thm_ortho_code} follows from lower and upper bounds on the probability of error that become asymptotically tight as $E\to\infty$; see Lemma~\ref{Lem_ortho_code}. The lower bound follows from the sphere-packing bound \cite{ShannonGB67}. The upper bound is obtained by applying Gallager's $\rho$-trick to improve upon the union bound \cite[Sec.~2.5]{ViterbiO79}, followed by an optimization over the parameter $0\leq\rho\leq 1$. When the rate per unit-energy is smaller than $\frac{1}{4} \frac{\log e}{N_0}$, the optimal value of $\rho$ is $1$, and the exponent of the upper bound depends linearly on the rate per unit-energy. For rates per unit-energy above $\frac{1}{4} \frac{\log e}{N_0}$, the optimal value of $\rho$ depends on the rate per unit-energy, which results in a nonlinear dependence of the exponent on the rate per unit-energy. This behavior of the error exponent as a function of the rate per unit-energy translates to the two regimes of $\CCPP$ observed in Theorem~\ref{Thm_ortho_code}.
\end{remark}
}

 Next we discuss the behaviour of the capacity per unit-energy for APE. We
 show that, if the order of growth of $k_n$ is sublinear,
 then each user can achieve the single-user capacity per unit-energy $\frac{\log e}{N_0}$. Conversely, if the growth of $k_n$ is linear or above, then the capacity per unit-energy is zero. We have the following theorem.

\begin{theorem}
	\label{Thm_capac_APE}
	The capacity per unit-energy $\CC^A$ for APE has the following behavior:
	\begin{enumerate}
		\item  If  $k_n  = o(n)$, then $\CC^A = \frac{\log e}{N_0}$. Furthermore, the capacity per unit-energy can be achieved by an orthogonal-access scheme where each user uses an   orthogonal codebook.	\label{Thm_APE_achv_part}
		\item If $k_n  = \Omega(n)$, then $\CC^A =0$.	\label{Thm_APE_conv_part}
	\end{enumerate}
\end{theorem}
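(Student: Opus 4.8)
Both the bound $\CC^A\le\frac{\log e}{N_0}$ (needed for part~\ref{Thm_APE_achv_part})) and, in a stronger form, the whole of part~\ref{Thm_APE_conv_part}) I would derive from a per-user Fano inequality. Fix a symmetric $(n,M_n,E_n,\epsilon)$ code. Since $W_i$ is uniform on $\{1,\ldots,M_n\}$ and $\hat W_i$ is a function of $\bY$, Fano's inequality gives $\Pr\{\hat W_i\neq W_i\}\ge 1-\frac{I(W_i;\bY)+1}{\log M_n}$, so that, averaging over $i$,
\begin{equation}
\label{eq:plan_fano}
P_{e,A}^{(n)}\;\ge\;1-\frac{1}{\log M_n}\left(\frac{1}{\ell_n}\sum_{i=1}^{\ell_n}I(W_i;\bY)+1\right).
\end{equation}
Since $W_i$ is independent of $\{W_j\}_{j\neq i}$, we have $I(W_i;\bY)\le I(W_i;\bY,\{W_j\}_{j\neq i})=I(W_i;\bX_i+\bZ)\le I(\bX_i;\bX_i+\bZ)\le\frac{\log e}{N_0}E_n$, the last step using the energy constraint and $\log(1+x)\le x\log e$. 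Substituting into~\eqref{eq:plan_fano}, together with the fact that $P_{e,A}^{(n)}\to 0$ forces $E_n\to\infty$ — pick the user with the smallest error probability, reveal the other messages to obtain a single-user Gaussian code of energy at most $E_n$ with vanishing error probability, and apply Lemma~\ref{Lem_energy_infty} (specialized to one user) — yields $\limsup_n\frac{\log M_n}{E_n}\le\frac{\log e}{N_0}$ along any sequence of codes with vanishing APE, hence $\CC^A\le\frac{\log e}{N_0}$.

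For the achievability in part~\ref{Thm_APE_achv_part}) I would reuse the orthogonal-access scheme from the achievability part of Theorem~\ref{Thm_nonrandom}: partition the $n$ channel uses into $\ell_n=k_n$ blocks of length $n_i=\lfloor n/\ell_n\rfloor$ and have user~$i$ transmit a PPM (hence orthogonal) codebook of $M_n$ positions supported on block~$i$. Fixing $\dot R<\frac{\log e}{N_0}$: since $k_n=o(n)$ gives $n_i\to\infty$, one may let $E_n\to\infty$ slowly enough that $M_n=\lceil 2^{\dot R E_n}\rceil\le n_i$ for all large $n$. Distinct users' codewords do not overlap, so the decoder for user~$i$ sees an interference-free single-user Gaussian channel, and the standard error bound for $M_n$-ary PPM (as in the proof of Theorem~\ref{Thm_nonrandom}) gives $\Pr\{\hat W_i\neq W_i\}\to 0$ whenever $\dot R<\frac{\log e}{N_0}$. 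The key point is that for APE it is enough that each \emph{individual} per-user error probability vanish, which requires only $n/\ell_n\to\infty$; one does not need the sum of these probabilities (the JPE) to vanish, which is why the achievability threshold relaxes from $n/\log n$ to $n$. By symmetry all per-user error probabilities coincide, so $P_{e,A}^{(n)}\to 0$ while the rate per unit-energy is $\dot R$; with the converse above this proves part~\ref{Thm_APE_achv_part}).

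The real work is the converse in part~\ref{Thm_APE_conv_part}), where $\ell_n=\Omega(n)$, say $\ell_n\ge cn$ for all large $n$. Genie-aiding every user to remove all interference, as above, is too wasteful and only recovers $\CC^A\le\frac{\log e}{N_0}$; instead I would keep the interference. First replace each codeword by its mean-subtracted version: the sum of the means is deterministic and can be undone at the decoder, so the error probability is unchanged, the per-user expected energy stays at most $E_n$, and the cross-terms in $\E\big[(\sum_iX_{ij})^2\big]$ vanish, leaving the total transmit power at most $\ell_nE_n/n$ per symbol. Then, via $I(W_i;\bY)\le I(W_i;\bY\mid W_1,\ldots,W_{i-1})$ and the chain rule, $\sum_iI(W_i;\bY)\le I\big((W_1,\ldots,W_{\ell_n});\bY\big)\le\frac n2\log\big(1+\frac{2\ell_nE_n}{nN_0}\big)$ by concavity of the logarithm. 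Since $t\mapsto\frac1{2t}\log\!\big(1+\frac{2tE_n}{N_0}\big)$ is decreasing, the bound $\ell_n\ge cn$ removes the dependence on $\ell_n$ and $n$, and~\eqref{eq:plan_fano} becomes
\begin{equation*}
P_{e,A}^{(n)}\;\ge\;1-\frac{1}{\log M_n}\left(\frac{1}{2c}\log\Big(1+\frac{2cE_n}{N_0}\Big)+1\right).
\end{equation*}
For any fixed target $\dot R>0$ the codes under consideration satisfy $\log M_n\ge(\dot R-\delta)E_n$; since vanishing APE again forces $E_n\to\infty$, the right-hand side tends to $1$, contradicting $P_{e,A}^{(n)}\to 0$. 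Hence no positive rate per unit-energy is achievable, i.e.\ $\CC^A=0$.

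I expect the delicate point to be precisely this last regime, in which $E_n$ may grow far more slowly than $\log\ell_n$: a crude estimate of $\sum_iI(W_i;\bY)$, or the Cauchy--Schwarz bound $(\sum_iX_{ij})^2\le\ell_n\sum_iX_{ij}^2$ (which costs a factor $\ell_n$), would leave the bound vacuous there. It is the combination of the mean-subtraction reduction, the inequality $\sum_iI(W_i;\bY)\le I\big((W_1,\ldots,W_{\ell_n});\bY\big)$, and the monotonicity of $\frac1{2t}\log\!\big(1+\frac{2tE_n}{N_0}\big)$ that makes the bound depend on $E_n$ alone, after which $E_n\to\infty$ finishes the proof; checking that mean-subtraction respects the energy budget in the (weaker, in-expectation) form needed for the converse is a routine verification I would include.
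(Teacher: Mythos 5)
Your proposal is correct and follows essentially the same route as the paper: the achievability uses the identical orthogonal-access scheme with orthogonal (PPM) per-user codebooks, exploiting that APE only requires each individual per-user error probability to vanish (hence only $n/k_n\to\infty$ and $E_n\to\infty$), and the converse is the same averaged Fano argument with $\sum_i I(W_i;\bY)\le I(\bW;\bY)\le\frac{n}{2}\log\bigl(1+\frac{2k_nE_n}{nN_0}\bigr)$ combined with the fact that vanishing APE forces $E_n\to\infty$. The only cosmetic differences are that you derive $\CC^A\le\frac{\log e}{N_0}$ by a self-contained per-user Fano computation where the paper invokes Verd\'u's single-user result, and your mean-subtraction step is not actually needed since the variance bound on the output entropy is already translation-invariant and the users' inputs are independent.
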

\begin{proof}
	See Subsection~\ref{sec_APE}.	
\end{proof}

Theorem~\ref{Thm_capac_APE} demonstrates that under APE the capacity per unit-energy has a similar behaviour as under JPE. Again, there is a sharp transition between orders of growth of $k_n$ where interference-free communication is possible and orders of growth where no positive rate per unit-energy is feasible. The main difference is that the transition threshold is shifted from $n/\log n$ to $n$. Such an improvement on the order of growth is possible because, for the probability of error to vanish as $n\to \infty$, the energy $E_n$ needs to satisfy different  necessary constraints under JPE and APE. Indeed, we show in the proof of Theorem~\ref{Thm_nonrandom} that 
the JPE vanishes only if the energy $E_n$ scales logarithmically in the  number of users (Lemma~\ref{Lem_convrs_err_prob}), and a positive rate per unit-energy is feasible only if the total power $k_n E_n/n$ is bounded in $n$. No sequence $\{E_n\}$ can satisfy both these conditions if $k_n = \omega(n/\log n)$. In contrast, for the APE to vanish asymptotically,  the energy $E_n$ does not need to grow logarithmically in the number of users, it suffices that it tends to infinity as $n \to \infty$. We can then find sequences $\{E_n\}$ that tend to infinity and for which $k_nE_n/n$ is bounded if, and only if, $k_n$ is sublinear in $n$. Also note that, for APE, an orthogonal-access scheme with orthogonal codebooks is optimal for all orders of $k_n$, whereas for JPE it is only optimal if the order of $k_n$ is not a positive power of $n$.

\subsection{Proof of Theorem~\ref{Thm_nonrandom}}
\label{Sec_proof_Thm1}

We first give an outline of the proof of Theorem~\ref{Thm_nonrandom}. To prove Part~\ref{Thm_Infeasble_achv}), we use an orthogonal-access scheme where the total number of channel uses is divided equally among all the users. Each user
uses the same single-user code in the assigned channel uses. The receiver decodes the message of each user separately, which is possible because the access scheme is orthogonal. We \edit{next} express the overall probability of error in terms of the number of users $k_n$ and the
probability of error achieved by the single-user code in an AWGN channel, which we then show vanishes as $n \to \infty$ if $k_n = o(n/\log n)$. The proof of Part~\ref{Thm_Infeasble_convrs}) hinges mainly on two facts. The first one is that the probability of error vanishes only if the energy $E_n$ scales at least
\edit{logarithmically} in the number of users, i.e., $E_n = \Omega(\log k_n)$.\footnote{A similar bound was presented in \cite[p.~82]{Polyanskiy18} for the case where $M_n=2$.} The second one is that we have $\CR>0$ only if the total power $k_nE_n/n$ is bounded as $n \to \infty$, which
is a direct consequence of Fano's inequality. If $k_n = \omega(n/\log n)$, then there is no sequence $\{E_n\}$ that simultaneously satisfies these two conditions. \edit{Part~\ref{Thm_exact_order}) follows by revisiting the proofs of Parts~\ref{Thm_Infeasble_achv}) and \ref{Thm_Infeasble_convrs}) for the case where $k_n=\Theta(n/\log n)$.}

\subsubsection{Proof of Part~\ref{Thm_Infeasble_achv})} The achievability uses  an orthogonal-access scheme \edit{where, in each time step,} only one user transmits, \edit{the} other users remain silent. We first note that the probability of correct decoding of any orthogonal-access scheme is given by 
	\begin{align*}
	P_c^{(n)} = \prod_{i=1}^{k_n}\left(1-P_{e,i}\right)
	\end{align*}
	where $P_{e,i} = \Pr(\hat{W}_i \neq W_i)$ denotes the probability of error in decoding  user $i$'s message. 
	In addition, if each user follows the same coding scheme, then the probability of correct decoding is given by 
	\begin{align}
	P_c^{(n)} & = \left(1-P_{e,1}\right)^{k_n}. \label{Eq_ortho_prob_err}
	\end{align}
	
		For a Gaussian point-to-point channel with \edit{ blocklength $N$ and  power constraint $P$, i.e., $\frac{E_N}{N} \leq P$,} there exists an encoding and decoding scheme whose
	average probability of error is upper-bounded by
	\begin{align}
	P(\cE) & \leq  M_N^{ \rho}  \exp[-NE_0(\rho, P)],  \; \mbox{ for every } 0< \rho \leq 1 \label{Eq_upp_prob_AWGN} 
	\end{align}
	where 
	\begin{align}
	E_0(\rho, P) & \triangleq \frac{\rho}{2} \ln \left(1+\frac{2P}{(1+\rho)N_0}\right). \notag
	\end{align} 
	This bound is due to Gallager and can be found in~\cite[Sec.~7.4]{Gallager68}.
	
	Now let us consider an orthogonal-access scheme in which each user gets $n/k_n$ channel uses, and we timeshare between users. Each user follows
	the coding scheme that achieves~\eqref{Eq_upp_prob_AWGN} with  power constraint $P_n = \frac{E_n}{n/k_n}$. Note that this coding scheme satisfies also the energy constraint~\eqref{Eq_energy_consrnt}. 
	\edit{
	Then, we obtain the following upper bound $P_{e,1}$ for a fixed rate per unit-energy $\CR = \frac{\log M_n}{E_n}$, 	by substituting in~\eqref{Eq_upp_prob_AWGN}
	$N$ by $n/k_n$ and  $P$ by $P_n = \frac{E_n}{n/k_n}$:}
	\begin{align}
	P_{e,1} & \leq  M_n^{ \rho}  \exp\left[-\frac{ n}{k_n}E_0(\rho, P_n)\right] \nonumber \\
	& = \exp\left[   \rho  \ln M_n -   \frac{ n}{k_n} \frac{\rho}{2} \ln \left(1+\frac{ 2E_nk_n/n}{(1+\rho)N_0}\right) \right] \nonumber \\
	& = \exp\left[ -E_n \rho   \left(  \frac{\ln (1+\frac{ 2E_nk_n/n}{(1+\rho)N_0})}{2E_nk_n/n } -\frac{\CR}{ \log e} \right)\right]. \label{Eq_err_uppr1}
	\end{align}
	Combining \eqref{Eq_err_uppr1} with \eqref{Eq_ortho_prob_err}, we obtain that the probability of correct decoding can be lower-bounded as 
	\begin{align}
	&  1 -P_e^{(n)}   \geq \Biggl(1 -   \exp\Biggl[ -E_n \rho   \Biggl(  \frac{\ln (1+\frac{ 2E_nk_n/n}{(1+\rho)N_0})}{2E_nk_n/n } -\frac{\CR}{ \log e} \Biggr)\Biggr]\Biggr)^{k_n}. \label{Eq_ortho_prob_lower}
	\end{align}
	We next choose $E_n = c_n \ln n$ with $c_n \triangleq \ln\bigl(\frac{n}{k_n\ln n}\bigr)$. Since, by assumption, $k_n = o(n / \log n)$, this implies that $\frac{k_nE_n}{n} \to 0$ as $n \to \infty$. Consequently, the first term in the inner-most bracket in \eqref{Eq_ortho_prob_lower} tends to $1/((1+\rho)N_0)$ as $n \to \infty$. It follows that for $\CR < \frac{\log e}{N_0}$, there exists a sufficiently large $n_0$, a $0<\rho \leq 1$, and a $\delta>0$ such that, for all $n\geq n_0$, the right-hand side (RHS) of \eqref{Eq_ortho_prob_lower} is lower-bounded by $\left(1-\exp[-E_n \rho \delta]\right)^{k_n}$. Since $c_n\delta \rho \to \infty$ as $n\to\infty$, we have
	\begin{align}
	\left(1-\exp[-E_n \rho \delta]\right)^{k_n}& \geq \left(1-\frac{1}{n^{2}}\right)^{k_n} \notag \\
	& \geq	\left(1-\frac{1}{n^{2}}\right)^{\frac{n}{\log n}} \notag \\
	& =  \left[\left(1 - \frac{1}{n^{2}}\right)^{n^{2}}\right]^{\frac{1}{n\log n}} \label{Eq_prob_corrct}
	\end{align}
	for $n \geq n_0$ and sufficiently large $n_0$, such that $c_n\delta \rho\geq2$ and  $k_n \leq \frac{n}{\log n}$. Noting that $(1 - \frac{1}{n^{2}})^{n^{2}} \to 1/e$ and $\frac{1}{n\log n} \to 0$ as $n\to\infty$, we obtain that the	RHS of~\eqref{Eq_prob_corrct} tends to one as $n \to \infty$. This implies that, if $k_n = o(n/\log n)$, then any rate per unit-energy  $\CR < \frac{\log e}{ N_0} $ is achievable.
	
	\subsubsection{Proof of Part~\ref{Thm_Infeasble_convrs})}
	Let $\bW$ and $\bhW$ denote the vectors $(W_1,\ldots, W_{k_n})$ and $(\hat{W_1},\ldots,\hat{W}_{k_n})$, respectively. Then 
	\begin{align}
	k_n \log M_n& = H(\bW) \nonumber\\
	& = H(\bW|\bhW)+I(\bW;\bhW)\nonumber\\
	& \leq 1+P_e^{(n)}k_n \log M_n + I(\bX;\bY)  \nonumber 
	\end{align}
	by Fano's inequality and the data processing inequality. By following~\cite[Sec.~15.3]{CoverJ06},
	it can be shown that 
	\mbox{$I(\bX;\bY) \leq  \frac{n}{2} \log \left(1+\frac{2 k_nE_n}{nN_0}\right)$}. Consequently,
	\begin{equation}
	 \frac{\log M_n }{E_n} \leq \frac{1}{k_nE_n}+ \frac{ P_e^{(n)} \log M_n}{E_n}  + \frac{n}{2 k_nE_n} \log \left(\!1+\frac{ 2k_nE_n}{nN_0}\!\right)\!. \notag
	\end{equation}
	This implies that the rate per unit-energy $\CR=(\log M_n)/E_n$ is upper-bounded by 
	\begin{align}
	\CR\leq \frac{ \frac{1}{k_nE_n} + \frac{n}{2 k_nE_n}\log(1+\frac{ 2k_nE_n}{nN_0})}{1 -P_e^{(n)}}.\label{Eq_R_avg1}
	\end{align}
	
	We next show by contradiction that, if $k_n =\omega(n / \log n)$, then $P_e^{(n)} \to 0$ as $n \to \infty$ only if $\CC=0$. Thus, assume that $k_n =\omega(n / \log n)$ and that there exists a code with rate per unit-energy $\CR >0$ such that $P_e^{(n)} \to 0$ as $n \to \infty$.
	To prove that there is a contradiction we need the following lemma. 
	\begin{lemma}
		\label{Lem_energy_infty}
		If $M_n \geq 2$, then  $P_{e}^{(n)}  \to 0$ only if $E_n \to \infty$.
	\end{lemma}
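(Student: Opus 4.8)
The plan is to reduce the multi-user problem to a single-user Gaussian channel by a genie argument, and then to show that on such a channel the error probability cannot vanish while the energy stays bounded, no matter how many messages are sent. Throughout I use that we are in the regime $\alpha_n=1$, so that $W_1$ is uniform on $\{1,\ldots,M_n\}$.

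\emph{Genie reduction.} First I would hand the messages $W_2,\ldots,W_{\ell_n}$ of all other users to the decoder. This can only decrease $\Pr\{\hat W_1\neq W_1\}$, and since $P_e^{(n)}\geq \Pr\{\hat W_1\neq W_1\}$, it follows that $P_e^{(n)}\geq P_{e,1}$, where $P_{e,1}$ is the smallest error probability achievable when estimating $W_1$ from $\bY'\triangleq \bY-\sum_{i=2}^{\ell_n}\bx_i(W_i)=\bx_1(W_1)+\bZ$; the side information $W_2,\ldots,W_{\ell_n}$ is independent of $(W_1,\bZ)$ and hence useless beyond forming $\bY'$. So it suffices to show that, on the single-user channel $\bY'=\bx_1(W_1)+\bZ$ with $\|\bx_1(w)\|^2\leq E_n$ and $M_n$ equiprobable codewords, $\liminf_n P_{e,1}>0$ whenever $\limsup_n E_n<\infty$.

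\emph{Two regimes of $M_n$.} I would split according to the size of $M_n$. If $M_n$ is small, a pairwise argument works: averaging the identity $\sum_{w\neq w'}\|\bx_1(w)-\bx_1(w')\|^2 = 2M_n\sum_w\|\bx_1(w)\|^2-2\|\sum_w\bx_1(w)\|^2\leq 2M_n^2E_n$ over the $M_n(M_n-1)$ ordered pairs shows that some two codewords $\bx_1(a),\bx_1(b)$ lie at distance at most $2\sqrt{E_n}$ (this uses $M_n\geq 2$). The sum of the conditional error probabilities of any decoder on messages $a$ and $b$ is at least the minimal error-sum of the binary hypothesis test between $\mathcal{N}(\bx_1(a),\tfrac{N_0}{2}I)$ and $\mathcal{N}(\bx_1(b),\tfrac{N_0}{2}I)$, which equals $2Q\!\left(\|\bx_1(a)-\bx_1(b)\|/\sqrt{2N_0}\right)$; hence $P_{e,1}\geq \tfrac{2}{M_n}Q\!\left(\sqrt{2E_n/N_0}\right)$. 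If instead $M_n$ is large, this bound degrades because of the $1/M_n$ factor, and I would switch to Fano's inequality: using $I(W_1;\bY')\leq I(\bX_1;\bY')\leq \tfrac n2\log\!\left(1+\tfrac{2E_n}{nN_0}\right)\leq \tfrac{E_n\log e}{N_0}$ (the last step by $\log(1+x)\leq x\log e$), one gets $1-P_{e,1}\leq \bigl(1+\tfrac{E_n\log e}{N_0}\bigr)/\log M_n$, so $P_{e,1}\to1$ as soon as $E_n$ stays bounded and $M_n\to\infty$.

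\emph{Conclusion and main obstacle.} Suppose $P_e^{(n)}\to0$ but $E_n\not\to\infty$; pass to a subsequence along which $E_n\leq E_0<\infty$. Along a further subsequence either $M_n\to\infty$, in which case the Fano branch gives $P_e^{(n)}\geq P_{e,1}\to1$, or $M_n\leq M^\ast<\infty$, in which case the pairwise branch gives $P_e^{(n)}\geq \tfrac{2}{M^\ast}Q(\sqrt{2E_0/N_0})>0$; either way this contradicts $P_e^{(n)}\to0$, so $E_n\to\infty$. The main obstacle is precisely the regime $M_n\to\infty$: the elementary pairwise/sphere-packing bound is vacuous there, and one must recognize that packing many codewords into a bounded-energy ball behaves like orthogonal signaling operated above its capacity per unit-energy, so that a Fano/converse argument forces the error probability up to one. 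A secondary point to get right is the validity of the genie reduction, which rests on the independence of the users' messages and of the noise.
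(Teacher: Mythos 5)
Your proof is correct, and it reaches the result by a genuinely different route from the paper after the common first step. Both arguments open with the same genie reduction to a single-user Gaussian channel with per-codeword energy at most $E_n$. From there the paper simply invokes the finite-energy converse of Polyanskiy, Poor, and Verd\'u for the point-to-point AWGN channel, $\tfrac{1}{M_n}\geq Q\bigl(\sqrt{2E_n/N_0}+Q^{-1}(1-\epsilon)\bigr)$, solves for $\epsilon$, and notes that since $Q^{-1}(1/M_n)\geq 0$ for $M_n\geq 2$ the error probability can vanish only if $E_n\to\infty$ --- a uniform three-line argument covering all $M_n\geq 2$ at once. You instead build the single-user converse from elementary pieces: an averaging argument producing two codewords at distance at most $2\sqrt{E_n}$, whence $P_{e,1}\geq \tfrac{2}{M_n}Q\bigl(\sqrt{2E_n/N_0}\bigr)$ (enough when $M_n$ stays bounded), and a Fano bound $1-P_{e,1}\leq \bigl(1+E_n(\log e)/N_0\bigr)/\log M_n$ forcing $P_{e,1}\to 1$ when $M_n\to\infty$ with $E_n$ bounded, the two branches glued by a subsequence extraction. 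Your individual steps all check out: the identity behind the pairwise bound, the reduction of the $M_n$-ary error sum on two messages to a binary test, and the mutual-information bound $I(\bX_1;\bY')\leq \tfrac{n}{2}\log\bigl(1+\tfrac{2E_n}{nN_0}\bigr)\leq E_n(\log e)/N_0$ are all valid under the per-codeword energy constraint and the uniformity of $W_1$ (which holds in the non-random setting where the lemma lives). What the paper's route buys is brevity and a single quantitative bound valid for every $n$; what yours buys is self-containedness --- no appeal to a finite-blocklength converse --- together with an explicit identification of the large-$M_n$ regime as the one where naive distance bounds fail and a capacity-per-unit-energy (Fano) argument is needed.
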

\begin{proof}
	See Appendix~\ref{Sec_energy_infty}.
\end{proof}

By the assumption $\CR > 0$, we have that $M_n \geq 2$.
Since we further assumed that 
 $P_{e}^{(n)}  \to 0$, Lemma~\ref{Lem_energy_infty} implies that $E_n \to \infty$. 
 Together with \eqref{Eq_R_avg1},  this in turn implies that  $\CR > 0$ is only possible if
 $k_nE_n/n$ is bounded in $n$. Thus,
	\begin{align}
	E_n = O(n/k_n). \label{Eq_energy_bnd}
	\end{align}
The next lemma presents another necessary condition on the order of $E_n$ which contradicts \eqref{Eq_energy_bnd}.
	\begin{lemma}
		\label{Lem_convrs_err_prob}
		 If $\CR > 0$ and $k_n\geq 5$, then $P_{e}^{(n)} \to 0$ only if $E_n = \Omega(\log k_n)$.
	\end{lemma}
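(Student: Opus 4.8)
The plan is to extract from the code a \emph{small} sub-code consisting of only $k_n+1$ messages whose noiseless channel outputs all lie inside a ball of radius $2\sqrt{E_n}$, and then to apply Fano's inequality to this sub-code. Heuristically, a Gaussian channel whose input is confined to a ball of radius $2\sqrt{E_n}$ can carry only $O(E_n)$ bits reliably, whereas a $(k_n+1)$-message sub-code carries $\log(k_n+1)$ bits; forcing reliable decoding of the latter therefore pins down $E_n=\Omega(\log k_n)$.

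To construct the sub-code I would proceed as follows. Since $\CR>0$ we have $M_n\geq 2$ for all large $n$, so fix a fixed-point-free permutation $\pi_n$ of $\{1,\ldots,M_n\}$ (e.g.\ a cyclic shift). Write $\bS(\bw)\triangleq\sum_{i=1}^{k_n}\bx_i(w_i)$ for the noiseless output corresponding to the message vector $\bw=(w_1,\ldots,w_{k_n})$, and let $\Phi_j(\bw)$ be the message vector obtained from $\bw$ by replacing $w_j$ with $\pi_n(w_j)$. Put $\cS(\bw)\triangleq\{\bw\}\cup\{\Phi_j(\bw):j=1,\ldots,k_n\}$. Because $\pi_n$ has no fixed point, $\Phi_j(\bw)$ differs from $\bw$ exactly in coordinate $j$, so the $k_n+1$ messages in $\cS(\bw)$ are distinct and $\bS(\Phi_j(\bw))-\bS(\bw)=\bx_j(\pi_n(w_j))-\bx_j(w_j)$ has norm at most $2\sqrt{E_n}$ by the energy constraint; hence all $k_n+1$ outputs $\{\bS(\bv):\bv\in\cS(\bw)\}$ sit in a ball of radius $2\sqrt{E_n}$ centred at $\bS(\bw)$. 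The reason for perturbing each user by a \emph{fixed} rule (a permutation), rather than, say, by the nearest competing codeword, is that each $\Phi_j$ is then a \emph{bijection} of the message space. Consequently, if $p(\bw)\triangleq\Pr\{g(\bY)\neq\bW\mid\bW=\bw\}$ denotes the conditional error probability, so that the average of $p(\bw)$ over the $M_n^{k_n}$ equiprobable messages equals $P_e^{(n)}$, then the average of $p(\Phi_j(\bw))$ also equals $P_e^{(n)}$ for every $j$, and therefore the average over $\bw$ of $\frac{1}{k_n+1}\sum_{\bv\in\cS(\bw)}p(\bv)$ equals $P_e^{(n)}$. In particular some reference message $\bw^{\star}$ makes this sub-code average at most $P_e^{(n)}$; restricting $g$ to $\cS(\bw^{\star})$ (declaring $\bw^{\star}$ whenever $g(\bY)\notin\cS(\bw^{\star})$) then yields a decoder for the $(k_n+1)$-ary hypothesis-testing problem of identifying which $\bv\in\cS(\bw^{\star})$ was transmitted, with average error probability at most $P_e^{(n)}$.

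It then remains to run Fano's inequality on this sub-code. Let $V$ be uniform on $\cS(\bw^{\star})$ and set $\bX\triangleq\bS(V)-\bS(\bw^{\star})$, which satisfies $\|\bX\|^2\leq 4E_n$ with probability one. By the data-processing inequality and the standard Gaussian mutual-information bound (as used above, following~\cite[Sec.~15.3]{CoverJ06}), $I(V;\bY)\leq I(\bX;\bX+\bZ)\leq\frac{n}{2}\log\bigl(1+\tfrac{8E_n}{nN_0}\bigr)\leq\frac{4E_n\log e}{N_0}$, the last step using $\ln(1+x)\leq x$. Fano's inequality then gives $\log(k_n+1)\leq I(V;\bY)+1+P_e^{(n)}\log(k_n+1)$, and combining the two bounds yields $E_n\geq\frac{N_0}{4\log e}\bigl((1-P_e^{(n)})\log(k_n+1)-1\bigr)$. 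If $P_e^{(n)}\to 0$, then for $k_n$ large this lower bound is a positive multiple of $\log k_n$ (here $k_n\geq 5$ only serves to make $\log(k_n+1)$ safely exceed $1$); along any subsequence where $k_n$ stays bounded, $E_n=\Omega(\log k_n)$ follows instead from $E_n\to\infty$, which holds by Lemma~\ref{Lem_energy_infty}. Either way, $E_n=\Omega(\log k_n)$. The one step I expect to require care is the construction of the sub-code together with the averaging argument that supplies the good reference message $\bw^{\star}$ — in particular the use of a fixed permutation to make the perturbation maps bijective; once the sub-code is isolated, the mutual-information estimate and Fano step are routine.
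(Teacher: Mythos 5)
Your proof is correct, but it takes a genuinely different route from the paper's. The paper proves Lemma~\ref{Lem_convrs_err_prob} as the special case $\alpha_n=1$ of Lemma~\ref{Lem_energy_bound}, whose proof (Appendix~\ref{Append_prob_lemma}) partitions the \emph{entire} message set into clusters of at least $\el_n+1$ message vectors with pairwise Hamming distance at most $8$ --- the clusters are built from a covering code of minimum distance $5$ --- and then applies Birg\'e's inequality (Lemma~\ref{Lem_Berge}) to each cluster, using that the relative entropy between the output distributions of two clustered messages is at most $256E_n/N_0$. You instead isolate a single $(k_n+1)$-message sub-code by one-coordinate perturbations of a well-chosen reference message, with the bijection/averaging trick standing in for the covering-code construction, and then run ordinary Fano together with the Gaussian capacity-cost bound on inputs confined to a ball of radius $2\sqrt{E_n}$. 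I checked the delicate steps: the fixed-point-free permutation makes each $\Phi_j$ a measure-preserving bijection of the uniform message distribution (which is exactly the distribution in the non-random setting of this lemma), so the existence of a good $\bw^{\star}$ follows; the modified decoder has average error at most $P_e^{(n)}$ on the sub-code; and the chain $I(V;\bY)\leq I(\bX;\bX+\bZ)\leq \frac{n}{2}\log(1+\frac{8E_n}{nN_0})\leq \frac{4E_n\log e}{N_0}$ is valid. Your approach is more elementary (no Birg\'e, no covering codes), yields a better constant, and in fact only needs $k_n\geq 2$ rather than $k_n\geq 5$ (the latter is an artifact of the paper's covering-code cases). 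What the paper's heavier machinery buys is generality: Lemma~\ref{Lem_energy_bound} must handle the random-activity case, where the message distribution is non-uniform (inactive users send $0$) and message vectors have varying numbers of active users, so the clean ``perturb one coordinate by a distribution-preserving bijection'' argument does not carry over directly; the type-by-type partition plus Birg\'e does. For Lemma~\ref{Lem_convrs_err_prob} as stated, your proof is complete and self-contained.
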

\begin{proof}
	This lemma is a special case of Lemma~\ref{Lem_energy_bound} stated in the proof of Theorem~\ref{Thm_random_JPE} in Section~\ref{sec_random_MnAC} and proven in  Appendix~\ref{Append_prob_lemma}.
\end{proof}

We finish the proof by showing that, if $k_n =\omega(n / \log n)$, then there exists no sequence $\{E_n\}$ of order $\Omega(\log k_n)$ that satisfies \eqref{Eq_energy_bnd}. Indeed, $E_n = \Omega(\log k_n)$ and $k_n =\omega(n / \log n)$ imply that
\begin{align}
E_n =\Omega(\log n) \label{Eq_energy_bnd1}
\end{align}
because the order of $E_n$ is lower-bounded by the order of  $\log n - \log \log n$, and  $\log n - \log \log n = \Theta(\log n)$. Furthermore, \eqref{Eq_energy_bnd} and $k_n =\omega(n / \log n)$ imply that 
\begin{align}
E_n & = o(\log n). \label{Eq_energy_bnd2}
\end{align}
Since no sequence $\{E_n\}$ can simultaneously satisfy~\eqref{Eq_energy_bnd1} and~\eqref{Eq_energy_bnd2}, this contradicts the assumption that there exists a code with a positive rate per unit-energy such that the probability of error vanishes as $n$ tends to infinity. Consequently,
 if  $k_n = \omega(n/\log n)$, then no positive rate per unit-energy is achievable. This proves Part~\ref{Thm_Infeasble_convrs}) of Theorem~\ref{Thm_nonrandom}.
 
 \edit{
 	\subsubsection{Proof of Part~\ref{Thm_exact_order})}
 	\label{Sec_prop_exact}
 	
 	To show that $\dot{C}>0$, we use the same orthogonal-access scheme as in the proof of Part~\ref{Thm_Infeasble_achv}) of Theorem~\ref{Thm_nonrandom}. Thus, each user is assigned $n/k_n$ channel uses, and only one user transmits at a time. We further assume that each user uses energy $E_n=c\log n$, where $c$ is some positive constant to be determined later. By the assumption $k_n = \Theta(n/\log n)$, there exist $n_0>0$ and $0 < a_1 \leq a_2$ such that, for all $n \geq n_0$, we have
	\begin{equation}
	\label{eq:kn_Theta(n/logn)}
	a_1 \frac{n}{\log n} \leq k_n\leq a_2 \frac{n}{\log n}. 
	\end{equation}
	The probability of error in decoding the first user's message is then given by \eqref{Eq_err_uppr1}, namely,
 	\begin{align}
 	P_{e,1} & \leq \exp\left[ -E_n \rho   \left(  \frac{\ln \bigl(1+\frac{ 2E_nk_n/n}{(1+\rho)N_0}\bigr)}{2E_nk_n/n } -\frac{\CR}{ \log e} \right)\right] \nonumber\\
 	& \leq  \exp\left[ -c \log n \, \rho  \left(  \frac{\ln \bigl(1+\frac{ 2a_2 c}{(1+\rho)N_0}\bigr)}{2 a_2 c} -\frac{\CR}{ \log e} \right)\right], \quad \textnormal{for every $0<\rho \leq 1, \; n \geq n_0$}\label{eq:blabla2_t1}
 	\end{align}
 	where the last inequality follows since $k_nE_n/n \leq a_2c$ for $n \geq n_0$.
 	
 	We next set
 	\begin{equation*}
 	\dot{R} = \frac{\log e}{2}  \frac{\ln \bigl(1+\frac{ 2a_2 c}{(1+\rho)N_0}\bigr)}{2a_2 c}
 	\end{equation*}
 	which is clearly positive for fixed $a_2$, $c$, and $\rho$. The upper bound \eqref{eq:blabla2_t1} then becomes
 	\begin{equation}
 	\label{eq:blabla3_t1}
 	P_{e,1} \leq \exp\left[ -\log n \, \frac{\rho}{2}  c\frac{\ln \bigl(1+\frac{ 2a_2 c}{(1+\rho)N_0}\bigr)}{2a_2 c}\right], \quad n \geq n_0.
 	\end{equation}
 	For every fixed $a_2$ and $\rho$, the term
 	\begin{equation*}
 	\frac{\rho}{2}  c\frac{\ln \bigl(1+\frac{ 2a_2c}{(1+\rho)N_0}\bigr)}{2a_2 c} = \frac{\rho}{2}\frac{\ln \bigl(1+\frac{ 2a_2 c}{(1+\rho)N_0}\bigr)}{2a_2}
 	\end{equation*}
 	is a continuous, monotonically increasing, function of $c$ that is independent of $n$ and ranges from zero to infinity. We can therefore find a $c $ such that \eqref{eq:blabla3_t1} simplifies to
 	\begin{equation*}
 	P_{e,1} \leq \exp[-\ln n] = \frac{1}{n}, \quad n \geq n_0.
 	\end{equation*}
 	
 	The above scheme has a positive rate per unit-energy. It remains to show that this rate per unit-energy is also achievable, i.e., that the overall probability of correct decoding tends to one as $n\to\infty$. To this end, we use \eqref{Eq_ortho_prob_err} to obtain that
 	\begin{align}
 	1- P_e^{(n)} &= (1-P_{e,1})^{k_n} \notag\\
 	 &\geq \left(1-\frac{1}{n}\right)^{a_2n/\log n}, \quad n \geq n_0.  	\label{eq:blabla4_t1}
 	\end{align}
 	Since $(1-\frac{1}{n})^n\to 1/e$ and $\frac{a_2}{\log n}\to 0$ as $n\to\infty$, the RHS of \eqref{eq:blabla4_t1} tends to one as $n\to\infty$, hence so does the probability of correct decoding. 
 	
 	We next show that $\CC <\frac{\log e}{N_0}$. Lemma~\ref{Lem_convrs_err_prob} implies that,  if $k_n = \Theta(n/\log n)$, then 
 	$P_{e}^{(n)} $ vanishes only if $E_n = \Omega(\log n)$. Furthermore, if $E_n = \omega(\log n)$, then it follows from~\eqref{Eq_R_avg1} that $\CC =0$ since, in this case, $k_nE_n/n$ tends to infinity as $n \to \infty$. Without loss of generality, we can thus assume that $E_n$ must satisfy $E_n = \Theta(\log n)$. Thus, there exist $n'_0>0$ and $0<l_1 \leq l_2$ such that, for all $n \geq n'_0$, we have $l_1\log n \leq E_n \leq l_2 \log n$. Together with \eqref{eq:kn_Theta(n/logn)}, this implies that $\frac{k_n E_n}{n} \geq a_1 l_l$ for all $n \geq \max(n_0,n'_0)$. The claim that $\dot{C}<\frac{\log e}{N_0}$ follows then directly from \eqref{Eq_R_avg1}. Indeed, using that $\frac{\log (1+x)}{x}< \log e$ for every $x>0$, we obtain that
 	\begin{equation}
 	\label{eq:blabla_t1}
 	\frac{n}{2 k_nE_n} \log \left(1+\frac{ 2k_nE_n}{nN_0}\right) \leq \frac{1}{2 a_1 l_1}\log\left(1+\frac{2 a_1 l_1}{N_0}\right)< \frac{\log e}{N_0}, \quad n \geq \max(n_0,n'_0).
 	\end{equation}
 	By \eqref{Eq_R_avg1}, in the limit as $P_e^{(n)}\to 0$ and $E_n\to \infty$, the rate per unit-energy is upper-bounded by \eqref{eq:blabla_t1}. It thus follows that $\CC < \frac{\log e}{N_0}$, which concludes the proof of Part~\ref{Thm_exact_order}) of Theorem~\ref{Thm_nonrandom}.
 }

\subsection{Proof of Theorem~\ref{Thm_ortho_code}}
\label{sec_ortho}
The proof of Theorem~\ref{Thm_ortho_code} is based on the following lemma, which presents bounds on the probability of error achievable over a Gaussian point-to-point channel with an orthogonal codebook.
\begin{lemma}
	\label{Lem_ortho_code}
	The probability of error $P_{e,1} = \Pr(\hat{W} \neq W )$ achievable over a Gaussian point-to-point channel with an orthogonal codebook with $M$ codewords and energy less than or equal to $E$ satisfies the following bounds:
	\begin{enumerate}
		\item For  $0 < \CR \leq \frac{1}{4} \frac{\log e}{N_0}$,
		\begin{align}
		&  \exp\left[- \frac{\ln M}{\CR}\left(\frac{\log e}{2 N_0} - \CR + \beta_E \right) \right] \leq P_{e,1} \leq  \exp\left[- \frac{\ln M}{\CR}\left(\frac{\log e}{2 N_0} - \CR \right) \right]. \label{Eq_orth_sinlg_uppr1}
		\end{align}
		\item For $\frac{1}{4} \frac{\log e}{N_0}  \leq \CR \leq \frac{\log e}{N_0}$,
		\begin{align}
		&  \exp\left[- \frac{\ln M}{\CR}\left(\left(\sqrt{\frac{\log e}{N_0}}- \sqrt{\CR }\right)^2 +  \beta'_E \right) \right] \leq P_{e,1}  \leq  \exp\left[- \frac{\ln M}{\CR}  \left(\sqrt{\frac{\log e}{N_0}} - \sqrt{ \CR}\right)^2  \right]. \label{Eq_orth_sinlg_uppr2}
		\end{align}
	\end{enumerate}
\edit{
	In~\eqref{Eq_orth_sinlg_uppr1} and~\eqref{Eq_orth_sinlg_uppr2}, $\beta_E$ and $\beta'_E$ are some constants of order   $O(\frac{1}{\sqrt{E}})$.}
\end{lemma}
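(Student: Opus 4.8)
The plan is to establish the two-sided bounds by matching the (classical, essentially exactly known) reliability function of $M$-ary orthogonal signaling at finite $M$ and $E$: the upper bound comes from Gallager's $\rho$-trick applied to the union bound, and the lower bound from the sphere-packing bound of Shannon, Gallager and Berlekamp~\cite{ShannonGB67}. Throughout one uses the two elementary identities $\frac{\ln M}{\CR}=E\ln 2$ (which is just the definition $\CR=(\log M)/E$ rewritten) and $\frac{\log e}{N_0}=\frac{1}{N_0\ln 2}$, which convert the bounds from the natural ``$(\ln M,E)$'' form into the stated ``$\CR$'' form and, incidentally, identify the threshold $\ln M\leq\frac{E}{4N_0}$ with $\CR\leq\frac14\frac{\log e}{N_0}$.

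For the upper bound I would start from the refined union bound for a set of $M$ equal-energy orthogonal signals of energy $E$ over the AWGN channel with noise power $N_0/2$,
\begin{equation*}
P_{e,1}\leq (M-1)^{\rho}\exp\!\left[-\frac{\rho}{1+\rho}\,\frac{E}{N_0}\right],\qquad 0\leq\rho\leq 1,
\end{equation*}
which is Gallager's $\rho$-trick applied to the union--Bhattacharyya bound (recovered at $\rho=1$); see~\cite[Sec.~2.5]{ViterbiO79}. Bounding $(M-1)^{\rho}\leq M^{\rho}=e^{\rho\ln M}$ and maximizing the exponent $\frac{\rho}{1+\rho}\frac{E}{N_0}-\rho\ln M$ over $\rho\in[0,1]$, the unconstrained optimum is $\rho^{*}=\sqrt{(E/N_0)/\ln M}-1$. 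One checks that $\rho^{*}\geq 1$ precisely when $\ln M\leq\frac{E}{4N_0}$, i.e.\ when $\CR\leq\frac14\frac{\log e}{N_0}$; in that case the constrained optimum is $\rho=1$, which gives exponent $\frac{E}{2N_0}-\ln M$ and rearranges into the right-hand side of~\eqref{Eq_orth_sinlg_uppr1}. When $\frac14\frac{\log e}{N_0}<\CR<\frac{\log e}{N_0}$ one has $\rho^{*}\in(0,1)$, and substituting it gives exponent $\bigl(\sqrt{E/N_0}-\sqrt{\ln M}\bigr)^{2}$, which rearranges into the right-hand side of~\eqref{Eq_orth_sinlg_uppr2}. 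These upper bounds carry no sub-exponential slack, consistent with the statement.

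For the lower bound I would invoke the sphere-packing bound~\cite{ShannonGB67} for the vector Gaussian channel induced by the orthogonal signal set; since the codebook has energy at most $E$, the bound applies with parameter $E$. Its exponent is the sphere-packing exponent of that channel, and because the reliability function of orthogonal signaling is known exactly in the relevant rate range, this exponent coincides with the ones obtained above, namely $\frac{E}{2N_0}-\ln M$ for $\CR\leq\frac14\frac{\log e}{N_0}$ and $\bigl(\sqrt{E/N_0}-\sqrt{\ln M}\bigr)^{2}$ for $\CR\geq\frac14\frac{\log e}{N_0}$. The polynomial prefactors and finite-$E$ slack of the sphere-packing bound produce the additive terms $\frac{\ln M}{\CR}\beta_E$ and $\frac{\ln M}{\CR}\beta'_E$; carrying them through shows $\beta_E,\beta'_E=O(1/\sqrt{E})$. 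In the regime $\CR\geq\frac14\frac{\log e}{N_0}$ one may alternatively argue directly: conditioning on the $M$ matched-filter outputs, the maximum over the $M-1$ incorrect outputs concentrates around $\sqrt{N_0\ln M}$ with fluctuations of order $1/\sqrt{\ln M}=O(1/\sqrt E)$, and lower-bounding $P_{e,1}$ by the probability that this maximum exceeds $\sqrt{E}$ plus the Gaussian noise on the correct output yields, via a standard Gaussian tail estimate, the exponent $\bigl(\sqrt{E/N_0}-\sqrt{\ln M}\bigr)^{2}$ with an $O(1/\sqrt E)$ correction.

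I expect the lower bound in the low-rate regime $\CR\leq\frac14\frac{\log e}{N_0}$ to be the main obstacle. There a single incorrect signal dominates the error event, so the ``maximum of many Gaussians'' heuristic (and second-moment/Bonferroni arguments) only give a strictly weaker exponent; one genuinely needs the sphere-packing bound, and the delicate point is to track its sub-exponential factors carefully enough to certify that the slack is $O(1/\sqrt E)$ rather than merely $o(E)$. By contrast, the upper bound, the matching of the two $\rho$-regimes with the two rate ranges, and the algebraic passage between the $(\ln M,E)$ and $\CR$ parametrizations are all routine.
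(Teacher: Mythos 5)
Your upper bound is exactly the paper's argument (Gallager's $\rho$-trick on the union bound, with $\rho=1$ below the critical rate and $\rho^*=\sqrt{(E/N_0)/\ln M}-1$ above it), and your high-rate lower bound via the sphere-packing inequalities of \cite{ShannonGB67} with $O(1/\sqrt{E})$ slack is also the paper's route, modulo two steps you omit but that are routine: the reduction to equal-energy codewords and the expurgation step converting the maximal error probability bound into one on the average error probability.

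The low-rate lower bound, however, has a genuine gap. You assert that the sphere-packing exponent of orthogonal signaling ``coincides with'' $\frac{E}{2N_0}-\ln M$ for $\CR\leq\frac14\frac{\log e}{N_0}$. It does not: the sphere-packing exponent is $\bigl(\sqrt{E/N_0}-\sqrt{\ln M}\bigr)^2$ at \emph{all} rates, which is strictly larger than $\frac{E}{2N_0}-\ln M$ below the critical rate (at $\ln M=0$ it gives $E/N_0$ versus $E/(2N_0)$). So no amount of careful bookkeeping of the sub-exponential factors in the plain sphere-packing bound will yield the left-hand inequality of \eqref{Eq_orth_sinlg_uppr1}; the exponent itself is wrong. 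What is needed is the Shannon--Gallager--Berlekamp straight-line construction: split the energy as $E=E_1+E_2$, lower-bound $P_{e,1}$ by the product of a list-decoding error probability at the critical rate (controlled by the list-decoding sphere-packing bound) and the error probability of an $(L+1)$-word code at essentially zero rate (controlled by a minimum-distance/pairwise $Q$-function bound), and then choose the split so that the resulting exponent is the convex combination $\frac{E}{2N_0}-\ln M$. Moreover, in this energy-constrained formulation ($\log M/E$ fixed, $E\to\infty$) the SGB concatenation, which splits codewords by blocklength, degenerates for orthogonal codewords in spike form: one of the two sub-codewords always has zero energy. The paper sidesteps this by first rotating the codebook with a normalized Hadamard matrix so that every codeword has constant modulus and the sub-codeword energies are proportional to their blocklengths. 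Both the straight-line argument and this rotation are missing from your plan, and without them the low-rate lower bound does not go through.
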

\begin{proof}
	\edit{ The upper bounds in \eqref{Eq_orth_sinlg_uppr1} and \eqref{Eq_orth_sinlg_uppr2} are obtained by upper-bounding the probability of error using Gallager's $\rho$-trick to improve upon the union bound \cite[Sec.~2.5]{ViterbiO79}, followed by a maximization over $\rho$. For $0 < \CR \leq \frac{1}{4} \frac{\log e}{N_0}$, the optimal $\rho$ is equal to $1$; for $\frac{1}{4} \frac{\log e}{N_0}  \leq \CR \leq \frac{\log e}{N_0}$, the optimal $\rho$ is a function of $\CR$. Hence,  the upper bounds in \eqref{Eq_orth_sinlg_uppr1} and~\eqref{Eq_orth_sinlg_uppr2} have different dependencies on $\CR$. The lower bounds in \eqref{Eq_orth_sinlg_uppr1} and \eqref{Eq_orth_sinlg_uppr2} follow from the sphere-packing bound by Shannon, Gallager, and Berlekamp \cite{ShannonGB67}. However, their approach to improve the sphere-packing bound at low rates by writing codewords as concatenations of subcodewords and lower-bounding the error exponent by the convex combination of the error exponents of these subcodewords does not directly apply to our setting where $\log M/E$ is held fixed and $E\to\infty$ (rather than $\log M/n$ is held fixed and $n\to\infty$). The reason is that, for some orthogonal codebooks, the energy of one of the subcodebooks is always zero, resulting in a trivial case where the Shannon-Gallager-Berlekamp approach cannot improve upon the original sphere-packing bound. To sidestep this problem, we lower-bound the probability of error by first rotating the orthogonal codebook in such a way that the energy of each subcodeword is proportional to its blocklength, after which the Shannon-Gallager-Berlekamp approach can be applied. For a full proof of Lemma~\ref{Lem_ortho_code}, see Appendix~\ref{Sec_AWGN_ortho_code}.}
\end{proof}


Next, we define
\begin{align}
a \triangleq \left\{
\begin{array}{cl}
\frac{\left(\frac{\log e}{2 N_0} - \CR  \right) }{\CR}, & \quad \mbox{if  } 0 < \CR \leq \frac{1}{4} \frac{\log e}{N_0} \\
\frac{ \left(\sqrt{\frac{\log e}{N_0}} - \sqrt{ \CR}\right)^2  }{\CR},  & \quad \mbox{if } \frac{1}{4} \frac{\log e}{N_0}  \leq \CR \leq \frac{\log e}{N_0}
\end{array}\right. \label{Eq_def_a}
\end{align}
and let $a_E \triangleq a + \max \{ \beta_E, \beta'_E\} $.
Then, the bounds in Lemma~\ref{Lem_ortho_code} can be written as 
\begin{align}
1/M^{a_E} & \leq P_{e,1} \leq 1/M^{a}. \label{Eq_prob_err_singl}
\end{align}

Now let us consider the case where the users apply an orthogonal-access scheme together with orthogonal codebooks. 
For such a scheme, the collection of codewords from all users is orthogonal, hence there are at most $n$ codewords of length $n$. Since with a symmetric code, each user transmits the same number of messages, it follows that each user transmits $M_n=n/k_n$  messages with 
codewords of energy less than or equal to $E_n$.
In this case, we obtain from \eqref{Eq_ortho_prob_err} and \eqref{Eq_prob_err_singl} that 
\begin{align*}
\left(1-\left(\frac{k_n}{n}\right)^{a}\right)^{k_n} \leq \left(1- P_{e,1}\right)^{k_n} \leq \left(1-\left(\frac{k_n}{n}\right)^{a_{E_n}}\right)^{k_n}
\end{align*}
which, denoting $a_n \triangleq a_{E_n}$, can be written as 
\begin{align}
& \left[\left(1-\left(\frac{k_n}{n}\right)^a\right)^{(\frac{n}{k_n})^a}\right]^{\frac{k_n^{1+a}}{n^a}}\leq \left(1- P_{e,1}\right)^{k_n}
\leq \left[\left(1-\left(\frac{k_n}{n}\right)^{a_n}\right)^{(\frac{n}{k_n})^{a_n}}\right]^ {\frac{k_n^{1+a_n}}{n^{a_n}}}. \label{Eq_ortho_code_upp_low}
\end{align}
Since  Theorem~\ref{Thm_ortho_code} 
only concerns a sublinear number of users, we have
\begin{align*}
\lim\limits_{n\to \infty} \left(1-\left(\frac{k_n}{n}\right)^a\right)^{(\frac{n}{k_n})^a} 
& =  \frac{1}{e}.
\end{align*}
Furthermore, if $P_e^{(n)} \to 0$ then, by Lemma~\ref{Lem_energy_infty}, $E_n \to \infty$ as $n \to \infty$. In this case, $a_n$ converges to the finite value $a$ as $n \to \infty$, and we obtain 
\begin{align*}
\lim\limits_{n \to \infty}  \left(1-\left(\frac{k_n}{n}\right)^{a_n}\right)^{(\frac{n}{k_n})^{a_n}} & =  \frac{1}{e}.
\end{align*}
So \eqref{Eq_ortho_code_upp_low} implies that 
$P_e^{(n)} \to 0$ as $n \to \infty$ if
\begin{align}
\lim\limits_{n \to \infty}	{\frac{k_n^{1+a}}{n^a}} = 0\label{Eq_ordr_lowr}
\end{align}
and only if 
\begin{align}
\lim\limits_{n \to \infty}	 {\frac{k_n^{1+a_n}}{n^{a_n}}} = 0.  \label{Eq_ordr_uppr}
\end{align}
We next use these observations to prove Parts~\ref{Thm_ortho_part1}) and \ref{Thm_ortho_part2}) of Theorem~\ref{Thm_ortho_code}. We begin with Part~\ref{Thm_ortho_part1}). Let $\CR < \frac{\log e}{N_0}$. Thus, we have $a>0$ which implies that we can find a constant $\eta < a/(1+a)$ such that $n^{\eta (1+a)}/n^a \to 0$ as $n \to \infty$. Since, by assumption, $k_n =o(n^c)$ for every $c> 0$, it follows that there exists an $n_0$ such that, for all $n\geq n_0$, we have $k_n \leq n ^{\eta(1+a)}$. This implies that  \eqref{Eq_ordr_lowr} is satisfied, from which Part~\ref{Thm_ortho_part1}) follows.

We next prove Part~\ref{Thm_ortho_part2}) of Theorem~\ref{Thm_ortho_code}.
Indeed,	if $k_n=\Theta\left({n^c}\right)$,  $0<c<1$, then there exist $0<l_1\leq l_2$ and $n_0$ such that, for all $n\geq n_0$, we have $(l_1n)^c\leq k_n \leq (l_2n)^c$. Consequently,
\begin{align}
{\frac{(l_1 n)^{c(1+a_n)}}{n^{a_n}}} \leq    {\frac{k_n^{1+a_n}}{n^{a_n}}}
\leq {\frac{(l_2 n)^{c(1+a_n)}}{n^{a_n}}}. \label{Eq_ortho_err_uppr}
\end{align}
If $P_e^{(n)}  \to 0$, then from~\eqref{Eq_ordr_uppr} we have ${\frac{k_n^{1+a_n}}{n^{a_n}}}\to 0$. Thus, \eqref{Eq_ortho_err_uppr} implies that  $c(1+a_n) - a_n$ converges to a negative value.
Since $c(1+a_n) - a_n$ tends to $c(1+a) - a$ as $n \to \infty$, it follows that  $P_e^{(n)}  \to 0$ only if  $c(1+a) - a < 0$, which is the same as $a > c/(1-c)$. 
Using similar arguments, it follows from \eqref{Eq_ordr_lowr}  that if  $a > c/(1-c)$, then $P_e^{(n)}  \to 0$. Hence, $P_e^{(n)}  \to 0$ if, and only if, $a > c/(1-c)$.
It can be observed from~\eqref{Eq_def_a} that $a$  is a monotonically decreasing function of $\CR$. So for $k_n=\Theta\left({n^c}\right), 0<c<1$, the capacity per unit-energy $\CCPP$  is given by
\begin{align}
\CCPP = \sup \{\CR\geq 0 : a(\CR) > c/(1-c)\} \notag
\end{align}
where we write $a(\CR)$ to make it clear that $a$ as defined in \eqref{Eq_def_a} is a function of $\CR$.
This supremum can be computed as 
\begin{equation*}
\CCPP = \begin{cases} \frac{\log e}{N_0}  \left(\frac{1}{1+\sqrt{\frac{c}{1-c}}}\right)^2, & \quad \mbox{if  } 0 < c \leq  1/2\\
\frac{\log e}{2 N_0} (1-c), & \quad \mbox{if } 1/2<c < 1
\end{cases}
\end{equation*}
which proves Part~\ref{Thm_ortho_part2}) of Theorem~\ref{Thm_ortho_code}.

\subsection{Proof of Theorem~\ref{Thm_capac_APE}}

\label{sec_APE}
\subsubsection{Proof of Part~\ref{Thm_APE_achv_part})}
We first argue  that $P_{e,A}^{(n)} \to 0$ only if  $E_n \to \infty$, and that in this case, $\CC^A \leq \frac{\log e}{N_0}$. Indeed, let \text{$P_{e,i} \triangleq \textnormal{Pr}\{\hat{W}_i\neq W_i\}$} denote the probability that message $W_i$ is decoded erroneously. We then have that $P_{e,A}^{(n)} \geq \min_{i} P_{e,i}$. Furthermore, $P_{e,i}$ is lower-bounded by the error probability of the Gaussian single-user channel, since a single-user channel can be obtained from the MnAC if a genie informs the receiver about the codewords transmitted by users $j \neq i$. By applying the lower bound \cite[eq.~(30)]{PolyanskiyPV11} on the error probability of the Gaussian single-user channel, we thus obtain
\begin{equation}
\label{eq:LB_P2P}
P_{e,A}^{(n)} \geq Q\left(\sqrt{\frac{2E_n}{N_0}}\right), \quad M_n \geq 2
\end{equation}
where $Q$ denotes \edit{the $Q$-function, i.e.,} the tail distribution function of the standard Gaussian distribution.
Hence, $P_{e,A}^{(n)} \to 0$ only if $E_n \to \infty$. As mentioned in Remark~\ref{remark}, when $E_n$ tends to infinity as $n\to\infty$, the capacity per unit-energy $\CC^A$ coincides with the capacity per unit-energy defined in \cite{Verdu90}, which for the Gaussian single-user channel is given by $\frac{\log e}{N_0}$ \cite[Ex.~3]{Verdu90}. Furthermore, if $P_{e,A}^{(n)} \to 0$ as $n\to\infty$, then there exists at least one user $i$ for which $P_{e,i} \to 0$ as $n\to\infty$. By the above genie argument, this user's rate per unit-energy is upper-bounded by the capacity per unit-energy of the Gaussian single-user channel. Since, for the class of symmetric codes considered in this paper, each user transmits at the same rate per unit-energy, we conclude that $\CC^A \leq \frac{\log e}{N_0}$.

We next show that any rate per unit-energy $\CR < \frac{\log e}{N_0}$ is achievable by an orthogonal-access scheme where each user uses an orthogonal codebook of blocklength $n/k_n$. 
To transmit message $w_i$, user $i$ sends in his assigned slot the codeword $\bx(w_i) = (x_1(w_i), \ldots, x_{n/k_n }(w_i))$, which is given by 
\begin{align*}
x_{j}(w_i) = \begin{cases}
\sqrt{E_n}, & \text{ if }  j=w_i\\
0, & \text{ otherwise}.
\end{cases}
\end{align*}

To show that the probability of error vanishes, we use the following bound from Lemma~\ref{Lem_ortho_code}:
\begin{align}
P_{e,i} \leq
\begin{cases}
\exp\left\{- \frac{\ln M_n}{\CR}\left(\frac{\log e}{2 N_0} - \CR \right) \right\}, & \text{ if } 0 < \CR \leq \frac{1}{4} \frac{\log e}{N_0}\\
\exp\left\{- \frac{\ln M_n}{\CR}  \left(\sqrt{\frac{\log e}{N_0}} - \sqrt{ \CR}\right)^2\right\}, & \text{ if } \frac{1}{4} \frac{\log e}{N_0}  \leq \CR \leq \frac{\log e}{N_0}.
\end{cases}
\label{Eq_ortho_prob_uppr1}
\end{align}
It follows from~\eqref{Eq_ortho_prob_uppr1} that, if $\CR < \frac{\log e}{N_0}$ and $M_n \to \infty$ as $n \to \infty$, then $P_{e,i}, i=1,\ldots, k_n$ tends to zero as $n \to \infty$.   Since $k_n =o(n)$, it follows that $M_n = n/k_n $ tends to $\infty$, as $n \to \infty$. Thus, for any $\CR < \frac{\log e}{N_0}$,  the probability of error $P_{e,i}$ vanishes. This implies that also $P_{e,A}^{(n)} $ vanishes as $n \to \infty$, thus proving Part~\ref{Thm_APE_achv_part}).

\subsubsection{Proof of Part~\ref{Thm_APE_conv_part})}
Fano's inequality yields that
\begin{equation*}
\log M_n \leq 1+ P_{e,i}\log M_n+ I(W_i; \hat{W}_i), \quad \mbox{for } i=1,\ldots, k_n.
\end{equation*}
Averaging over all $i$'s then gives
\begin{IEEEeqnarray}{lCl}
	\log M_n & \leq & 1+  \frac{1}{k_n} \sum_{i=1}^{k_n} P_{e,i}\log M_n+ \frac{1}{k_n}  I({\bf W}; {\bf \hat{W}}) \nonumber\\ 
	& \leq &  1+P_{e,A}^{(n)}\log M_n+ \frac{1}{k_n} I(\bW; \bY) \nonumber\\
	& \leq & 1 +  P_{e,A}^{(n)} \log M_n+\frac{n}{2k_n} \log \left(1+\frac{2 k_nE_n}{nN_0}\right) \IEEEeqnarraynumspace\label{eq_rate_APE_uppr}
\end{IEEEeqnarray}
where the first inequality follows because the messages $W_i, i=1, \ldots, k_n$ are independent and because conditioning reduces entropy, the second inequality follows from the definition of $P_{e,A}^{(n)}$ and the data processing inequality, and the third inequality follows by upper-bounding $I(\bW;\bY)$ by $\frac{n}{2} \log \bigl(1+\frac{2 k_nE_n}{nN_0}\bigr)$.

Dividing both sides of \eqref{eq_rate_APE_uppr} by $E_n$, and solving the inequality for $\PR$, we obtain the upper bound
\begin{equation}
\label{eq_Part2_Th1_end1}
\PR\leq \frac{ \frac{1}{E_n} + \frac{n}{2 k_nE_n}\log(1+\frac{ 2k_nE_n}{nN_0})}{1 -P_{e,A}^{(n)}}.
\end{equation}
As argued at the beginning of the proof of Part~\ref{Thm_achv_part}), we have $P_{e,A}^{(n)} \to 0$ only if $E_n \to \infty$. If $k_n = \Omega(n)$, then this implies that $k_nE_n/n \to \infty$ as $n \to \infty$. It thus follows from \eqref{eq_Part2_Th1_end1} that, if $k_n = \Omega(n)$, then $\CC^A=0$, which is Part~\ref{Thm_conv_part}) of Theorem~\ref{Thm_capac_APE}.




\section{Capacity per Unit-Energy of Random Many-Access Channels}
\label{sec_random_MnAC}
In this section, we \edit{consider} the case where the users' activation probability can be strictly smaller than $1$. In Subsection~\ref{Sec_results}, we discuss the capacity per unit-energy of random MnACs. In particular, we present  our main result in Theorem~\ref{Thm_random_JPE}, which characterizes the capacity per unit-energy in terms of $\el_n$ and $k_n$.  Then, in
Theorem~\ref{Thm_ortho_accs}, we analyze the largest rate per unit-energy achievable using an orthogonal-access scheme. Finally, in Theorem~\ref{Thm_capac_PUPE}, we briefly discuss the behaviour of the capacity per unit-energy of random MnAC for APE. The proofs of Theorems~\ref{Thm_random_JPE}--\ref{Thm_capac_PUPE} are presented in Subsections~\ref{Sec_proof_random}, \ref{Sec_ortho_access}, and \ref{sec_average}, respectively.

\subsection{Capacity per Unit-Energy of Random MnAC}
\label{Sec_results}

Before presenting our results, we first note that  the case where $k_n$ vanishes as $n \to \infty$ is uninteresting. 
Indeed, this case only happens if $\alpha_n \to 0$. 
Then, the probability that all the users are inactive, given by $\bigl( (1-\alpha_n)^{\frac{1}{\alpha_n}}\bigr)^{k_n}$,
tends to one since  $(1-\alpha_n)^{\frac{1}{\alpha_n}} \to 1/e $ and $k_n \to 0$.  Consequently, if each user employs a code with $M_n=2$ and $E_n =0$ for all $n$, and if the decoder always declares that all users are inactive, then the probability of error $P_{e}^{(n)}$ vanishes as $n \to \infty$. This implies that $\CC= \infty$. \edit{ In the following, we avoid this trivial case and assume that $\el_n$ and $\alpha_n$ are such that $k_n = \Omega(1)$. This implies that the inverse of $\alpha_n$ is upper-bounded by the order of  $\ell_n$, i.e., $\frac{1}{\alpha_n} = O(\ell_n)$.} We have the following theorem.

\begin{theorem}
	\label{Thm_random_JPE}
	Assume that $k_n =\Omega(1)$. Then the capacity per unit-energy of the Gaussian random MnAC has the following behavior:
	\begin{enumerate}
		\item 	If $k_n \log \el_n = o(n)$,   then $\CC = \frac{\log e}{N_0}$. \label{Thm_achv_part}
		\item 	If $k_n \log \el_n = \omega(n)$, then $\CC =0$. 	\label{Thm_conv_part}
		\edit{\item 	If $k_n \log \ell_n = \Theta(n)$, then $ 0< \CC < \frac{\log e}{N_0}$.
			\label{Thm_exact_order}	
		}
	\end{enumerate}
\end{theorem}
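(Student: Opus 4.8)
I would prove Theorem~\ref{Thm_random_JPE} by mirroring the proof of Theorem~\ref{Thm_nonrandom}, with the quantity $\log k_n$ of the non-random converse replaced throughout by $\log\ell_n$, and with the orthogonal-access achievability replaced by a non-orthogonal random-coding scheme (orthogonal access is in general insufficient here; cf.\ Theorem~\ref{Thm_ortho_accs}). The matching upper bound $\CC\le\frac{\log e}{N_0}$ needed for Part~1) and Part~3) is immediate from the single-user genie argument already used in the proof of Theorem~\ref{Thm_capac_APE}: revealing the codewords of all users $j\neq i$ turns user $i$'s channel into the point-to-point Gaussian channel, and $P_e^{(n)}\to 0$ forces $P_{e,i}\to 0$, so $\CR\le\frac{\log e}{N_0}$ by Verd\'u's result; symmetry then gives $\CC\le\frac{\log e}{N_0}$.

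\textbf{Converse (Part~2) and the upper half of Part~3)).} Two necessary conditions on $\{E_n\}$ carry the argument. First, a Fano/data-processing bound analogous to \eqref{Eq_R_avg1}: writing $H(W_i)=h(\alpha_n)+\alpha_n\log M_n$ (with $h(\cdot)$ the binary entropy function) and $I(\bX;\bY)\le \ell_n h(\alpha_n)+\frac n2\log\bigl(1+\frac{2k_nE_n}{nN_0}\bigr)$ — the first term accounting for the entropy of the active set — and cancelling the common $\ell_n h(\alpha_n)$, one gets $\CR\bigl(1-P_e^{(n)}\bigr)\le \frac{1}{k_nE_n}+\frac{P_e^{(n)}\ell_n h(\alpha_n)}{k_nE_n}+\frac{n}{2k_nE_n}\log\bigl(1+\frac{2k_nE_n}{nN_0}\bigr)$. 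Using $\ell_n h(\alpha_n)\le k_n\log(e\ell_n/k_n)$ together with the second condition below, the middle term is $o(1)$, so in the limit $P_e^{(n)}\to 0$ we obtain $\CR\le \frac{n}{2k_nE_n}\log\bigl(1+\frac{2k_nE_n}{nN_0}\bigr)+o(1)$; hence $\CR>0$ forces $k_nE_n/n=O(1)$, i.e.\ $E_n=O(n/k_n)$. Second, I would invoke Lemma~\ref{Lem_energy_bound} (stated below; Lemma~\ref{Lem_convrs_err_prob} is its $\ell_n=k_n$ special case): if $\CR>0$ and $k_n=\Omega(1)$, then $P_e^{(n)}\to 0$ only if $E_n=\Omega(\log\ell_n)$. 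Combining the two conditions gives $\log\ell_n=O(n/k_n)$, i.e.\ $k_n\log\ell_n=O(n)$; if $k_n\log\ell_n=\omega(n)$ this is violated, so no positive rate per unit-energy is achievable — Part~2). If $k_n\log\ell_n=\Theta(n)$, then $\log\ell_n=\Theta(n/k_n)$ and the two conditions force $E_n=\Theta(\log\ell_n)$ (were $E_n=\omega(\log\ell_n)$, then $k_nE_n/n\to\infty$, contradicting $E_n=O(n/k_n)$); hence $c_n\triangleq k_nE_n/n$ is bounded below by a positive constant $c_1$, and since $g(x)\triangleq\frac{1}{2x}\log\bigl(1+\frac{2x}{N_0}\bigr)$ is decreasing with $g(x)<\frac{\log e}{N_0}$ for every $x>0$, the Fano bound yields $\CC\le g(c_1)<\frac{\log e}{N_0}$.

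\textbf{Achievability (Part~1) and the lower half of Part~3)).} The first converse condition also tells us the available budget: $k_n\log\ell_n=o(n)$ permits $E_n$ of order $\log\ell_n$ (indeed anything $o(n/k_n)$) with total power $k_nE_n/n\to 0$. I would use a random codebook in which each of the $\ell_n$ users independently draws $M_n$ codewords of length $n$ with i.i.d.\ $\mathcal N(0,E_n/n)$ entries (the all-zero codeword modelling inactivity; the hard energy constraint is restored by the usual truncation/rescaling or by using spherical codewords), together with a joint (minimum-distance or jointly-typical) decoder; one conditions on $|\cU_n|\le 2k_n$, which holds with probability $\to 1$ by a Chernoff bound. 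The error probability is bounded by a union bound over ``error patterns'' — a subset $T$ of users decoded with a wrong message, of size $t=1,2,\ldots$ — the pattern of size $t$ contributing at most $\binom{\ell_n}{t}M_n^{t}$ (choices of $T$ and of the wrong messages) times the probability that some wrong $t$-tuple of codewords beats the true one over the relevant $t$-user sub-channel. Crucially one must use here the MAC random-coding (Gallager) exponent rather than a naive union bound: at low per-symbol power $E_n/n$ this exponent behaves like the low-SNR sum-capacity, so the $t$-pattern term is of order $\exp\bigl[t\bigl(\log\ell_n+\log M_n\bigr)-\Theta\bigl(tE_n\bigr)\bigr]$, the $\Theta(tE_n)$ degraded only by a factor $\to 1$ owing to the interference power $k_nE_n/n\to 0$. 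Choosing $E_n=\Theta(\log M_n)$ with $\log M_n=\omega(\log\ell_n)$ and $\log M_n=o(n/k_n)$ — simultaneously possible precisely because $k_n\log\ell_n=o(n)$ — makes $\sum_t\binom{\ell_n}{t}M_n^{t}\exp[-\Theta(tE_n)]\to 0$ while $E_n\to\infty$; since $\frac{\log M_n}{E_n}$ can then be pushed to any value below $\frac{\log e}{N_0}$, this gives $\CC=\frac{\log e}{N_0}$, proving Part~1). For the positivity claim of Part~3) I would instead fix $E_n=\Theta(\log\ell_n)=\Theta(n/k_n)$ (so $k_nE_n/n=\Theta(1)$ is merely bounded, not vanishing) and, exactly as in the proof of Part~3) of Theorem~\ref{Thm_nonrandom}, tune the constants so that every error-pattern term is polynomially small in $\ell_n$; the resulting rate per unit-energy is a positive constant.

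\textbf{Main obstacle.} The crux is making the error-pattern union bound vanish: one must control, uniformly in $t\in\{1,\ldots,2k_n\}$, the competition between the combinatorial factor $\binom{\ell_n}{t}M_n^{t}$ (which contributes the $t\log\ell_n$ term responsible for the $\log\ell_n$ in the energy requirement) and the MAC error exponent, while simultaneously keeping $\log M_n=\omega(\log\ell_n)$ (so the per-user rate per unit-energy reaches $\frac{\log e}{N_0}$ and not merely a fraction of it) and $k_nE_n/n\to 0$ (so the interference does not erode the exponent); $k_n\log\ell_n=o(n)$ is exactly the condition under which these requirements are mutually compatible. By comparison, the converse is a fairly routine transcription of the Theorem~\ref{Thm_nonrandom} argument with $\log k_n$ replaced by $\log\ell_n$, the only new ingredient being the benign active-set entropy term $\ell_n h(\alpha_n)$.
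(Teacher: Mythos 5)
Your converse is essentially the paper's: the paper also combines a Fano/data-processing bound of the form \eqref{Eq_rate_joint_uppr} (derived via $H(\bW)=\el_n(H_2(\alpha_n)+\alpha_n\log M_n)$ and the bounds of Chen \emph{et al.} on $H(\bW|\bY)$ and $I(\bW;\bY)$) with Lemma~\ref{Lem_energy_bound} ($E_n=\Omega(\log\el_n)$), and the two necessary conditions $E_n=O(n/k_n)$ and $E_n=\Omega(\log\el_n)$ clash exactly as you describe; the $\Theta(n)$ case and the strict inequality $\CC<\frac{\log e}{N_0}$ are also obtained exactly as you propose. Your achievability, however, is a genuinely different route. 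The paper does \emph{not} do one-phase joint decoding over all $(M_n+1)^{\el_n}$ tuples; it splits each codeword into a signature part ($n''=bn$ channel uses, energy $bE_n$) used for activity detection (Lemma~\ref{Lem_usr_detect}, adapted from \cite[Th.~2]{ChenCG17}) and a message part decoded only for the detected users, with the decoder declaring an error if more than $\xi k_n$ users are detected. This modularizes the $\binom{\el_n}{t}$ combinatorics into the detection lemma and confines the Gallager-type decoding analysis (Lemma~\ref{Lem_err_expnt}) to at most $\xi k_n$ users; your scheme avoids the signature overhead $b$ but must absorb false alarms and missed detections into a single union bound over error patterns ranging over all $\el_n$ users. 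Both architectures can work, and your identification of $k_n\log\el_n=o(n)$ as exactly the condition reconciling $E_n=\omega(\log\el_n)$ with $k_nE_n/n\to 0$ is the right crux.

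Three points in your sketch need repair before it is a proof. First, your error-pattern term $\binom{\el_n}{t}M_n^{t}\exp[-\Theta(tE_n)]$ has the message count entering as $M_n^{t}$; with a naive union bound the best attainable exponent constant is $\frac{1}{2N_0}$ per unit energy, which caps the achievable rate per unit-energy at $\frac{\log e}{2N_0}$. You must carry the Gallager parameter explicitly so that the messages enter as $M_n^{t\rho}$ and then let $\rho\to 0$ (while keeping $\rho E_n\gg\log\el_n$ to beat the $\binom{\el_n}{t}$ factor); you flag that the Gallager exponent is needed, but the displayed bound does not reflect it, and this is precisely where the factor of two is won or lost. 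Second, your conditioning on $|\cU_n|\le 2k_n$ ``by a Chernoff bound'' fails when $k_n$ is bounded (which is allowed under $k_n=\Omega(1)$): $\Pr(|\cU_n|>2k_n)\le e^{-\Theta(k_n)}$ does not vanish. The paper's fix is to threshold at $\xi k_n$, bound the overflow by $1/\xi$ via Markov, and let $\xi\to\infty$ \emph{after} $n\to\infty$; you need this or an equivalent device. Third, the truncation of the i.i.d.\ Gaussian codebook to satisfy the per-codeword energy constraint is not free inside a bound involving up to $\Theta(k_n)$ users simultaneously: the paper tracks a factor $(1/\mu)^{2k'_n}$ and must show it tends to one uniformly, which uses $k_n=o(n)$. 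None of these is fatal, but each is a place where the argument as written would not close.
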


\begin{proof}
	See Subsection~\ref{Sec_proof_random}.
\end{proof}

Theorem~\ref{Thm_random_JPE} demonstrates that there  is a sharp transition between orders of growth of $k_n$ where interference-free communication is feasible and orders of growth where no positive rate per unit-energy is feasible. Recall that the same behaviour was observed for the non-random-access case ($\alpha_n=1$), where the transition threshold separating these two regimes is at the order of growth $n/ \log n$, as shown in Theorem~\ref{Thm_nonrandom}. For \edit{a} general $\alpha_n$, this transition threshold depends both on $\ell_n$ and $k_n$.
\edit{However, when $\liminf_{n\to\infty} \alpha_n >0$, then $k_n = \Theta(\ell_n)$ and the order of growth of $k_n \log \el_n$ coincides with that of both $ k_n \log k_n$ and  $\el_n \log \el_n$. It follows that, in this case, the transition thresholds for both $ k_n$ and $\el_n$ are also at $n /\log n$, since $k_n \log k_n = \Theta(n)$  is equivalent to $k_n = \Theta(n/\log n)$. 
}

When $\alpha_n \to 0$, the orders of growth of $k_n$ and $\el_n$ are different and the transition threshold for $\el_n$ is in general larger than $n / \log n$.
\edit{For example, when  $\ell_n = n$ and $\alpha_n = \frac{1}{\sqrt{n}}$, then $k_n \log \ell_n =\sqrt{n} \log n = o(n)$, so  all users can communicate without interference.} Thus, random user-activity enables interference-free communication at an order of growth above the limit $n/ \log n$. Similarly, when $\alpha_n \to 0$, the transition threshold for $k_n$ \edit{may be smaller than $n/ \log n$, even though this is only the case if $\ell_n$ is superpolynomial in $n$. For example, when $\ell_n=2^n$ and $\alpha_n=\frac{\sqrt{n}}{2^n \log n }$, then $k_n=\frac{\sqrt{n}}{\log n}=o(n/\log n)$ and $k_n\log \ell_n=\frac{n^{3/2}}{\log n}=\omega(n)$, so no positive rate per unit-energy is feasible.} This implies that treating a random MnAC with $\el_n$ users as a non-random MnAC with $k_n$ users may be overly-optimistic, since it suggests that interference-free communication is feasible at orders of growth of $k_n$ where actually no positive rate per unit-energy is feasible.

In the proof of Part~\ref{Thm_Infeasble_achv}) of Theorem~\ref{Thm_nonrandom}, we have shown that, when \mbox{$k_n =o(n/ \log n)$} and \mbox{$\alpha_n =1$}, an orthogonal-access scheme achieves the capacity per unit-energy. It turns out that this is not necessarily the case anymore when $\alpha_n \to 0$, as we show in the following theorem.
\begin{theorem}
	\label{Thm_ortho_accs}
	Assume that $k_n = \Omega(1)$. The largest rate per unit-energy $\CCP$ achievable with an orthogonal-access scheme satisfies the following:
	\begin{enumerate}[1)]
		\item  If  $ \el_n = o(n/ \log n)$, then $\CCP = \frac{\log e }{N_0}$. \label{Thm_ortho_accs_achv}
		\item If $ \el_n = \omega(n/ \log n)$, then $\CCP =0$. \label{Thm_ortho_accs_conv}
		\edit{	\item 
			If $\ell_n = \Theta( \frac{n}{\log n})$, then $0 < \CCP < \frac{\log e}{N_0}$.
			\label{Thm_exact_ortho}}
	\end{enumerate}
\end{theorem}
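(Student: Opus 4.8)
The plan is to handle Part~1) by pairing an easy converse with a genuinely new achievability construction, and to derive Parts~2) and~3) from a single converse argument built on two ingredients: the subspace‑orthogonality bound $\sum_i\dim V_i\le n$, and the fact (Lemma~\ref{Lem_energy_bound}) that a vanishing joint probability of error forces $E_n=\Omega(\log\ell_n)$.

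\emph{Part~1).} The converse $\CCP\le\frac{\log e}{N_0}$ is immediate: a genie that reveals the activity and codewords of all users $j\ne i$ reduces the decoding of $W_i$ to a single‑user Gaussian channel, whose capacity per unit‑energy is $\frac{\log e}{N_0}$ by \cite[Ex.~3]{Verdu90} (using Lemma~\ref{Lem_energy_infty} and Remark~\ref{remark}). For achievability, fix $\CR<\frac{\log e}{N_0}$ and partition the $n$ uses into $\ell_n$ disjoint slots of length $N=n/\ell_n$, one per user; since $\ell_n=o(n/\log n)$ we have $N/\log n\to\infty$. In his slot an active user sends one of $M_n=\lceil 2^{\CR E_n}\rceil$ codewords of energy $E_n$ from a code attaining Gallager's bound~\eqref{Eq_upp_prob_AWGN} with power $P_n=E_n/N$, while an inactive user sends the all‑zero codeword; the decoder processes each slot separately, declaring the slot \emph{active} iff $\|\bY_{\mathrm{slot}}\|^2>\frac{NN_0}{2}+\frac{E_n}{2}$ and then applying maximum‑likelihood decoding among the $M_n$ codewords. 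The decisive move is to choose $E_n$ in the window $\sqrt{(n/\ell_n)\log n}\ll E_n\ll n/\ell_n$, which is non‑empty precisely because $\ell_n=o(n/\log n)$. Then: (i) the false‑alarm and missed‑detection probabilities are bounded by a Chernoff estimate on a central $\chi^2_N$‑fluctuation of size $E_n$, namely $e^{-cE_n^2/N}$ with $E_n^2/N\gg\log n$, and, crucially, this bound carries no factor $M_n$; (ii) since $E_n/N\to 0$, $NE_0(\rho,P_n)=\frac{\rho E_n}{(1+\rho)N_0}(1+o(1))$, so for $\CR<\frac{\log e}{N_0}$ a suitable $\rho$ gives $M_n^{\rho}e^{-NE_0(\rho,P_n)}\le e^{-\rho\delta E_n}$, $\delta>0$; (iii) $E_n\gg\log n\ge\log\ell_n$. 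As the slot sub‑channels decouple, $1-P_e^{(n)}=(1-P_{e,1})^{\ell_n}\ge 1-\ell_n P_{e,1}$, and (i)--(iii) give $\ell_n P_{e,1}\to 0$. Hence every $\CR<\frac{\log e}{N_0}$ is achievable and $\CCP=\frac{\log e}{N_0}$.

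\emph{Parts~2) and~3).} Suppose $\CR>0$ is achievable, so $M_n\ge 2$, $P_e^{(n)}\to 0$, and $E_n\to\infty$ (Lemma~\ref{Lem_energy_infty}). Let $V_i$ be the span of user $i$'s codewords and $d_i=\dim V_i$; orthogonality gives $\sum_i d_i\le n$, and decoupling of the slot sub‑channels gives $\prod_i(1-P_{e,i})=1-P_e^{(n)}$, hence $\sum_i P_{e,i}\to 0$. Genie‑revealing the other transmissions and that user $i$ is active leaves an $M_n$‑ary Gaussian channel in $\mathbb R^{d_i}$ with energy $\le E_n$, so with $q_i\triangleq\Pr(\hat W_i\ne W_i\mid W_i\ne 0)$, Fano's inequality together with $I\le\frac{d_i}{2}\log(1+\frac{2E_n}{d_iN_0})$ yields $\CR\le\frac{1/E_n+h(d_i/E_n)}{1-q_i}$, where $h(s)\triangleq\frac{s}{2}\log(1+\frac{2}{sN_0})$ is strictly increasing with $h(0^+)=0$, $h(\infty)=\frac{\log e}{N_0}$. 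Since $P_{e,i}\ge\alpha_n q_i$, $\sum_i P_{e,i}\to 0$ and $k_n=\Omega(1)$, we get $\frac1{\ell_n}\sum_i q_i\to 0$; thus at least half the users have $q_i\le\frac12$, and for those the bound above (with $E_n\to\infty$) forces $d_i\ge s^\star E_n$, $s^\star\triangleq h^{-1}(\CR/3)>0$. Therefore $n\ge\sum_i d_i\ge\frac{\ell_n}{2}s^\star E_n$, i.e. $E_n=O(n/\ell_n)$. Combined with $E_n=\Omega(\log\ell_n)$ (Lemma~\ref{Lem_energy_bound}) this gives: if $\ell_n=\omega(n/\log n)$ then $E_n=o(\log n)$ while $\log\ell_n=\Theta(\log n)$, a contradiction, so $\CCP=0$ (Part~2)); if $\ell_n=\Theta(n/\log n)$ then $E_n=\Theta(\log n)$, and a Markov argument over both $\{d_i\}$ and $\{q_i\}$ produces a user $i^\star$ with $q_{i^\star}\to 0$ and $d_{i^\star}/E_n$ bounded by a constant $S$, whence $\CR\le\frac{1/E_n+h(S)}{1-q_{i^\star}}\to h(S)<\frac{\log e}{N_0}$, so $\CCP<\frac{\log e}{N_0}$; the bound $\CCP>0$ follows by running the Part~1) scheme with the constant‑rate choice $E_n=c\log n$ and a sufficiently small positive target rate, mimicking the lower‑bound part of the proof of Part~3) of Theorem~\ref{Thm_nonrandom}.

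\emph{Main obstacle.} The crux is the achievability in Part~1): an orthogonal‑access decoder must also decide activity, and the obvious maximum‑likelihood test over the individual codewords pays a union‑bound factor $M_n=2^{\CR E_n}$ that destroys the exponent once $\CR>\frac{\log e}{4N_0}$. Replacing it by an energy threshold — whose error is a bare $\chi^2_N$‑deviation probability, free of any $M_n$ — and then fitting $E_n$ into the window $\sqrt{(n/\ell_n)\log n}\ll E_n\ll n/\ell_n$ is exactly what makes the weak hypothesis $\ell_n=o(n/\log n)$ (rather than, say, $\ell_n=n^{o(1)}$) enough to reach $\frac{\log e}{N_0}$. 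On the converse side the only subtlety is bookkeeping: the genie–Fano inequality has to be evaluated at a user that is simultaneously reliable ($q_{i^\star}$ small) and low‑dimensional ($d_{i^\star}\lesssim n/\ell_n$), which a two‑sided Markov argument guarantees; beyond that, everything rests on the separately established energy bound $E_n=\Omega(\log\ell_n)$ of Lemma~\ref{Lem_energy_bound}.
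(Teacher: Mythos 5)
Your proposal is correct, but it departs from the paper's proof in both directions. For the achievability in Part~1), the paper does not use an energy threshold over the whole slot: each user spends one channel use on a pilot of amplitude $\sqrt{tE_n}$, the receiver thresholds that single symbol at $\sqrt{tE_n}/2$, and the detection error is $Q(\sqrt{tE_n}/2)\approx e^{-tE_n/8}$. This has a strictly better exponent than your $\chi^2_N$ deviation bound $e^{-cE_n^2/N}$ (which degrades when $E_n\ll N$), so the paper only needs $E_n\gg\log n$ rather than your narrower window $\sqrt{(n/\ell_n)\log n}\ll E_n\ll n/\ell_n$; both windows are nonempty exactly when $\ell_n=o(n/\log n)$, so your scheme works, but the claim in your closing paragraph that the energy-threshold test is ``exactly what makes the weak hypothesis enough'' overstates matters --- a one-symbol pilot sidesteps the $M_n$ union-bound penalty just as well and more simply. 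For the converses in Parts~2) and~3), the paper argues by pigeonhole: since $\sum_i n_i\le n$, some user has $n_1=o(\log n)$ (resp.\ $n_1=O(\log n)$), and Fano applied to that single user, combined with $P_{e,1}\le P_e^{(n)}\to 0$ and Lemma~\ref{Lem_energy_bound}, already yields $E_n=O(n_1)$ versus $E_n=\Omega(\log\ell_n)$. Your version replaces this with a global count ($\sum_i\dim V_i\le n$ plus a two-sided Markov selection of a user that is simultaneously reliable and low-dimensional); this is more work but is also slightly more careful, since it handles general orthogonal codeword geometries (not just time slots) and does not conflate the unconditional error $P_{e,i}$ with the conditional error $q_i$ when $\alpha_n\to 0$. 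Both routes rest on the same two pillars --- the dimension budget $n$ and $E_n=\Omega(\log\ell_n)$ --- and both give the stated thresholds.
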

\begin{proof}
	See Subsection~\ref{Sec_ortho_access}.
\end{proof}

Observe that there is again a sharp transition between the orders of growth of $\el_n$ where interference-free communication is feasible and orders of growth where no positive rate per unit-energy is feasible. In contrast to the optimal transmission scheme, the transition threshold for the orthogonal-access schemes is located at $n/ \log n$, irrespective of the behavior of $\alpha_n$. Thus, by using an orthogonal-access scheme, we treat the random MnAC as if it were a non-random MnAC. This also implies that there are orders of growth of $\el_n$ and $k_n$ where non-orthogonal-access schemes are necessary to achieve the capacity per unit-energy.


Next we present our results on the behaviour of capacity per unit-energy for APE. To this end, we first note that, if $\alpha_n \to 0$ as $n \to \infty$, then $\Pr\{W_i =0\} \to 1$ for all $i=1,\ldots,\el_n$.
Consequently, if each user employs a code with $M_n =2$ and $E_n =0$ for all $n$, and if the decoder always declares that all users are inactive, then the APE vanishes as $n \to \infty$. This implies that $\CC^A = \infty$.
In the following, we avoid this trivial case and assume that $\alpha_n$ is bounded away from zero.
For $\alpha_n=1$ (non-random-access case) and APE,  we showed in Theorem~\ref{Thm_capac_APE} that if the number of users grows sublinear in $n$, then each user can achieve the single-user capacity per unit-energy, and if the order of growth is linear or superlinear, then the capacity per unit-energy is zero. Perhaps not surprisingly, the same result holds in the random-access case since, when $\alpha_n$ is bounded away from zero, $k_n$ is of the same order as $\el_n$. We have the following theorem.
\begin{theorem}
	\label{Thm_capac_PUPE}
	If \edit{$\liminf_{n\to\infty} \alpha_n > 0$}, then $\CC^A$ has the following behavior:
	\begin{enumerate}
		\item  If  $\el_n  = o(n)$, then $\CC^A = \frac{\log e}{N_0}$.  \label{Thm__avg_achv_part}
		\item If $\el_n  = \Omega(n)$, then $\CC^A =0$.	\label{Thm__avg_conv_part}
	\end{enumerate}
\end{theorem}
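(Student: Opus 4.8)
Since $\liminf_{n\to\infty}\alpha_n>0$, there is $\alpha_0\in(0,1]$ with $\alpha_n\ge\alpha_0$ for all large $n$, so $k_n=\alpha_n\el_n=\Theta(\el_n)$ and the orders of growth of $\el_n$ and $k_n$ coincide. The plan is to mirror the proof of Theorem~\ref{Thm_capac_APE}: the converse statements and the necessary condition $E_n\to\infty$ transfer almost verbatim, at the cost of a harmless factor $\alpha_0$, while the achievability requires augmenting the orthogonal-access/orthogonal-codebook scheme of Theorem~\ref{Thm_capac_APE} with an activity-detection step, since an inactive user now transmits the all-zero codeword rather than a PPM codeword.

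\textbf{Converse.} I would first show that $P_{e,A}^{(n)}\to0$ forces $E_n\to\infty$. As in the proof of Theorem~\ref{Thm_capac_APE}, a genie revealing to the receiver the codewords of all users $j\neq i$ reduces the decoding of $W_i$ to a single-user Gaussian channel with message set $\{0,1,\dots,M_n\}$ and the prior in~\eqref{Eq_messge_def}. Conditioning on the event $W_i\neq0$, which has probability $\alpha_n\ge\alpha_0$, the conditional channel is an $M_n\ge2$-ary uniform single-user Gaussian channel, whose error probability---even when the decoder is allowed to output $0$---is at least $Q\bigl(\sqrt{2E_n/N_0}\bigr)$ by \cite[eq.~(30)]{PolyanskiyPV11}; hence $P_{e,A}^{(n)}\ge\alpha_0\, Q\bigl(\sqrt{2E_n/N_0}\bigr)$, so $E_n\to\infty$ whenever the APE vanishes. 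Given $E_n\to\infty$, Remark~\ref{remark} and \cite[Ex.~3]{Verdu90} imply that, for a symmetric code with vanishing APE, every user's rate per unit-energy is bounded by the single-user value $\frac{\log e}{N_0}$, giving $\CC^A\le\frac{\log e}{N_0}$; together with the achievability below this yields Part~\ref{Thm__avg_achv_part}. For Part~\ref{Thm__avg_conv_part}, I would reproduce the Fano chain \eqref{eq_rate_APE_uppr}--\eqref{eq_Part2_Th1_end1}, using $I(\bW;\bY)\le\frac n2\log\bigl(1+\frac{2k_nE_n}{nN_0}\bigr)$---valid because an inactive user contributes zero power, so the total average power is $k_nE_n/n$---and noting that the extra term $H_b(\alpha_n)\le1$ appearing on the left-hand side of Fano is absorbed after dividing by $E_n$. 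This gives $\PR\le\bigl[\tfrac1{E_n}+\tfrac{n}{2k_nE_n}\log(1+\tfrac{2k_nE_n}{nN_0})\bigr]/(1-P_{e,A}^{(n)})$; since $\el_n=\Omega(n)$ and $\alpha_n\ge\alpha_0$ give $k_n=\Omega(n)$, and $E_n\to\infty$ by the first step, we have $k_nE_n/n\to\infty$, so the right-hand side vanishes and $\CC^A=0$.

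\textbf{Achievability of Part~\ref{Thm__avg_achv_part}.} I would use an orthogonal-access scheme that assigns user $i$ a block of $n/\el_n$ channel uses and a PPM (orthogonal) codebook with $M_n=n/\el_n$ positions of amplitude $\sqrt{E_n}$, so that $P_{e,A}^{(n)}=\Pr\{\hat W_1\neq W_1\}$ by symmetry. The decoder for user $1$ first performs a threshold test: it declares $\hat W_1=0$ unless $\max_j Y_j$ exceeds $\tau_n=(1+\delta)\sqrt{N_0\ln M_n}$, and otherwise outputs the position of the largest $Y_j$. I would bound $\Pr\{\hat W_1\neq W_1\}$ by the sum of (i) a false-alarm term $\Pr\{\max_j Z_j>\tau_n\}\le M_n\,Q(\tau_n/\sqrt{N_0/2})=O\bigl(M_n^{-\delta(2+\delta)}\bigr)$; (ii) a miss term $Q\bigl((\sqrt{E_n}-\tau_n)/\sqrt{N_0/2}\bigr)$; and (iii) a confusion term (user active, wrong nonzero position), which is contained in the error event of the pure PPM decoder and hence, by the upper bounds of Lemma~\ref{Lem_ortho_code}, is at most $\exp\bigl[-\tfrac{\ln M_n}{\CR}\,a\bigr]$ with $a$ a constant that is positive for every $\CR<\frac{\log e}{N_0}$. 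Fixing any such $\CR$, I would choose $\delta>0$ small enough that $\CR<\frac{\log e}{(1+\delta)^2N_0}$ and set $E_n=(\log M_n)/\CR$; then $\sqrt{E_n}-\tau_n\to\infty$, and since $\el_n=o(n)$ gives $M_n=n/\el_n\to\infty$ (hence $E_n\to\infty$), all three terms vanish, so $P_{e,A}^{(n)}\to0$ and $\CR$ is $\epsilon$-achievable for every $\epsilon$.

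\textbf{Main obstacle.} The delicate point is the calibration of the activity-detection threshold in the achievability. A threshold proportional to $\sqrt{E_n}$, analyzed by a union bound over the $M_n$ positions, would only let the false-alarm probability vanish for $\CR<\frac{\log e}{2N_0}$, losing half of the capacity per unit-energy. The fix is to exploit that the maximum of $M_n$ i.i.d.\ $\cN(0,N_0/2)$ variables concentrates near $\sqrt{N_0\ln M_n}$, which is of order $\sqrt{\ln M_n}$ rather than $\sqrt{E_n}$; choosing $\tau_n=\Theta(\sqrt{\ln M_n})$ with $E_n=\Theta(\ln M_n)$ then separates the false-alarm and miss regimes exactly at the single-user boundary, while the confusion term is controlled at the same rate by the orthogonal-codebook bound of Lemma~\ref{Lem_ortho_code}. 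The remaining steps---checking the Fano chain with the $H_b(\alpha_n)$ term and verifying that the \cite[eq.~(30)]{PolyanskiyPV11} lower bound survives the non-uniform prior thanks to $\liminf_n\alpha_n>0$---are routine.
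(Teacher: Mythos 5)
Your proposal is correct and follows the paper's overall strategy (reduce to the $\alpha_n=1$ analysis via $k_n=\Theta(\ell_n)$, genie plus the single-user lower bound of \cite[eq.~(30)]{PolyanskiyPV11} to force $E_n\to\infty$, Fano averaged over users for Part~2, orthogonal access with orthogonal codebooks plus activity detection for Part~1), but your achievability implements the activity detection differently. The paper dedicates one channel use per user to a pilot of energy $tE_n$ and thresholds that single observation at $\sqrt{tE_n}/2$, so the false-alarm probability is a single $Q(\sqrt{tE_n}/2)$ with no union over the $M_n$ message positions; it vanishes for any $\CR$ once $E_n\to\infty$, and the only cost is the $(1-t)$ energy loss on the message, which is absorbed by letting $t$ be arbitrary. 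You instead keep a pure PPM codeword and detect activity by thresholding $\max_j Y_j$ at $\tau_n=(1+\delta)\sqrt{N_0\ln M_n}$; this requires the union bound over $M_n$ positions, and your calibration $\tau_n=\Theta(\sqrt{\ln M_n})$ with $E_n=\Theta(\ln M_n)$ is exactly what makes the false-alarm, miss, and confusion terms all vanish for every $\CR<\frac{\log e}{(1+\delta)^2 N_0}$, hence for every $\CR<\frac{\log e}{N_0}$ after letting $\delta\to 0$. Your route saves the pilot symbol and the energy split but needs the finer concentration argument; the paper's route is cruder but makes the detection step rate-independent. One small imprecision in your Part~2: with the prior \eqref{Eq_messge_def}, Fano gives $(\alpha_n-P_{e,A}^{(n)})\PR$ on the left-hand side, so the correct bound is \eqref{eq:Part2_Th1_end} with denominator $\alpha_n-P_{e,A}^{(n)}$ (and $\frac{n}{2\ell_n E_n}$ in the numerator), not $1-P_{e,A}^{(n)}$ as you wrote; since $\liminf_{n\to\infty}\alpha_n>0$ this changes nothing in the limit, but the inequality as you stated it is not what Fano yields.
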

\begin{proof}
See Subsection~\ref{sec_average}.	
\end{proof}

\subsection{Proof of Theorem~\ref{Thm_random_JPE}}
\label{Sec_proof_random}

We first give an outline of the proof. The achievability scheme to show Part~\ref{Thm_achv_part}) is a non-orthogonal-access scheme where the codewords of all users are of length $n$ and the codebooks of  different users may be  different. In each codebook, the codewords consist of two parts. The first $n''$ symbols are  a signature part that is used to convey to the receiver that the user is active. The remaining $n-n''$ symbols  are used to send the message. The decoder follows a two-step decoding process. First, it determines which users are active, then it decodes the messages of all users that are estimated as active. For such a two-step decoding process, we analyze two types of errors: the detection error and the decoding error. We show that, if $k_n \log \el_n$ is sublinear in $n$, then the  probability of detection error and the probability of decoding error  tend to zero as $n \to \infty$.  The proof of Part~\ref{Thm_conv_part}) follows along similar lines as that of Part~\ref{Thm_Infeasble_convrs}) of Theorem~\ref{Thm_nonrandom}. We first show that the probability of error vanishes only if the energy $E_n$ scales at least \edit{logarithmically} in the total number of users, i.e., $E_n = \Omega(\log \el_n)$. We \edit{then} show that a positive rate per unit-energy is achievable only if the total power of the active users, given by $k_n E_n/n$, is bounded as $n \to \infty$. \edit{The proof of Part~\ref{Thm_conv_part}) concludes by noting that, if $k_n\log \ell_n$ is superlinear in $n$, then there is no $E_n$ that can simultaneously satisfy these two conditions. Part~\ref{Thm_exact_order}) follows by revisiting the proofs of Parts~\ref{Thm_achv_part}) and \ref{Thm_conv_part}) for the case where $k_n\log \el_n=\Theta(n)$.}

\subsubsection{Proof of Part~\ref{Thm_achv_part})}
We use an achievability scheme  with a decoding process consisting of two steps.  First, the receiver determines which users are active. \edit{It then fixes an arbitrary positive integer $\xi$, based on which it decides whether it will decode the messages of all active users, or whether it will declare an error. Specifically, if the number of estimated active users is less than or equal to $\xi k_n$, then the receiver decodes the messages of all active users.} If the number of estimated active users is greater than $\xi k_n$, then it declares an error.\footnote{\edit{The threshold $\xi k_n$ becomes inactive when $\alpha_n$ is bounded away from zero. Indeed, the number of estimated active users is a random variable taking value in $\{0,\ldots,\ell_n\}$. When $\alpha_n$ is bounded away from zero, $\ell_n = k_n/\alpha_n$ is bounded by $\xi k_n$ for some positive integer $\xi$. Hence, in this case we can find a threshold $\xi$ such that the receiver will never have to declare an error.}} \edit{By the union bound,} the total error probability of this scheme can be upper-bounded by
\begin{align*}
P(\cD) + \sum_{k'_n=1}^{\xi k_n}\Pr\{K'_n=k_n'\}\Pm+ \Pr\{K'_n>\xi k_n\}
\end{align*}
where $K'_n$ \edit{is a random variable describing the number} of active users, $P(\cD)$ is the probability of a detection error, and $\Pm$ is the probability of a decoding error when the receiver has correctly detected that there are $k'_n$ active users. In the following, we show that these probabilities vanish as $n\to\infty$ for any fixed, positive integer $\xi$. Furthermore, by Markov's inequality, we have that $\Pr\{K'_n>\xi k_n\}\leq 1/\xi$. It thus follows that the total probability of error vanishes as we let first $n\to\infty$ and then $\xi\to\infty$.

To enable user detection at the receiver, out of $n$ channel uses, each user uses the first $n''$ channel uses to send its signature and  $n'=n -n''$ channel uses for sending the message. The
signature uses energy $E_n''$ out of $E_n$, while the energy used for sending message is given by $E_n' = E_n -E_n''$.  

Let  $\bs_i$ denote the signature of  user $i$ and $\tilde{\bx}_i(w_i)$ denote the codeword of length $n'$ for sending the message $w_i$, where $w_i =1,\ldots, M_n$. Then, the codeword $\bx_i(w_i)$ is given by the concatenation of $\bs_i$ and $\tilde{\bx}_i(w_i)$, denoted as
\begin{align*}
\bx_i(w_i) = (\bs_i,  \tilde{\bx}_i(w_i)).
\end{align*}
For a given arbitrary $0 < b < 1$, we let
\begin{equation}
n'' = bn, \quad \label{Eq_channel_choice}
\end{equation}
\label{Page_energy}
\begin{equation}
\label{Eq_energy_choice}
E_n'' = bE_n,  \quad E_n = c_n \ln \el_n 
\end{equation}
with $c_n = \ln (\frac{n}{k_n\ln \el_n})$.\footnote{\edit{In our scheme, a fraction of the total energy must be assigned to the signature part in order to ensure that the detection error probability vanishes as $n\to\infty$. However, this incurs a loss in rate per unit-energy, so this fraction will be made arbitrarily small at the end of the proof. Alternatively, one could consider a sequence $\{b_n\}$ that satisfies $b_n\to 0$ and $b_n c_n \to\infty$ as $n\to\infty$.}}
Based on the first $n''$ received symbols, the receiver detects which users are active. We need the following lemma to show that the detection error probability  vanishes as $n \to \infty$.
\edit{
\begin{lemma}
	\label{Lem_usr_detect}
	Assume that $k_n \log \el_n = O(n)$, and let $E_n = c_n \ln \ell_n$, where
		\begin{equation*}
		c_n =
		\begin{cases}
		\ln\left(\frac{n}{k_n \log \ell_n}\right), &\text{ if } k_n \log \ell_n = o(n)\\
		c',  &\text{ if } k_n \log \ell_n = \Theta(n)
		\end{cases}
		\end{equation*}
		for some positive constant $c'$ that is independent of $n$. If $c'$ is sufficiently large, then there exist signatures $\bs_i, i=1, \ldots, \el_n$ with  $n''=bn$ channel uses  and energy $E_n''=bE_n$ such that $P(\cD)$ vanishes as $n \to \infty$.
\end{lemma}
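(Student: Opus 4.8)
The plan is to design the signatures by a random-coding argument and use a union bound over the possible "confusions'' that a detector can make. First I would fix the detector: given the first $n''=bn$ received symbols $\bY'' = \sum_{i\in\cU_n}\bs_i + \bZ''$, the receiver performs a (suitably thresholded) maximum-likelihood / minimum-distance search over all subsets $\cS\subseteq\{1,\ldots,\ell_n\}$ with $|\cS|\le \xi k_n$, looking for the subset whose signature-sum $\sum_{i\in\cS}\bs_i$ is closest to $\bY''$. A detection error then occurs only if there is a subset $\cS\ne\cU_n$ with $|\cS|,|\cU_n|\le\xi k_n$ whose signature-sum is at least as close to $\bY''$ as the true one. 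I would generate the signatures i.i.d.\ with each component $\sim\cN\!\bigl(0,E_n''/n''\bigr)=\cN\!\bigl(0,bE_n/(bn)\bigr)=\cN(0,E_n/n)$, so that each signature has energy concentrated around $E_n''=bE_n$ (a small correction handles the with-probability-one energy constraint, e.g.\ by scaling or by truncation, exactly as in standard capacity-per-unit-energy arguments and as done elsewhere in this paper).

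The core estimate is a pairwise error probability bound followed by a counting argument. For a fixed true active set $\cU_n$ of size $k'_n\le\xi k_n$ and a fixed wrong set $\cS$ that differs from $\cU_n$ in $2d$ positions (i.e.\ $|\cU_n\setminus\cS|=|\cS\setminus\cU_n|=d$ for some $1\le d\le \xi k_n$), the relevant difference vector $\sum_{i\in\cS\setminus\cU_n}\bs_i-\sum_{i\in\cU_n\setminus\cS}\bs_i$ is Gaussian with per-component variance $2d\,E_n/n$ over $n''=bn$ coordinates; the probability that the noise $\bZ''$ makes $\cS$ look at least as likely as $\cU_n$ is then a Gaussian tail of the form
\begin{align*}
\exp\!\left[-\frac{b n}{2}\cdot\frac{2dE_n/n}{2(2dE_n/n + N_0)}\right]
= \exp\!\left[-\frac{b}{2}\cdot\frac{dE_n}{2dE_n/n + N_0}\right].
\end{align*}
Since $E_n=c_n\ln\ell_n$ and, by the hypothesis $k_n\log\ell_n=O(n)$, we have $k_nE_n/n=O(c_n)=O(\log n)$ (in the $o(n)$ case $c_n\to\infty$ slowly, in the $\Theta(n)$ case $c_n=c'$), the quantity $dE_n/(2dE_n/n+N_0)$ is at least of order $d\log\ell_n$ up to constants depending on $b$, $c'$, and $N_0$; in the worst ($\Theta(n)$) case the denominator is bounded by a constant, giving an exponent $\gtrsim b\,c'\, d\ln\ell_n/(\text{const})$. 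Next I would count: the number of pairs $(\cU_n,\cS)$ with $|\cU_n\setminus\cS|=|\cS\setminus\cU_n|=d$ and both sets of size at most $\xi k_n$ is at most $\binom{\ell_n}{\xi k_n}\binom{\xi k_n}{d}\binom{\ell_n}{d}\le \ell_n^{O(\xi k_n)}\cdot \ell_n^{d}$. Multiplying the count by the pairwise bound and summing over $d$ from $1$ to $\xi k_n$, the dominant term is controlled by $\exp[\,O(\xi k_n\ln\ell_n) - \Omega(d\,c'\ln\ell_n)\,]$; choosing the constant $c'$ large enough (depending only on $\xi$, $b$, $N_0$, and the $O(1)$ constant in $k_n\log\ell_n=O(n)$) makes every exponent negative and the whole sum vanish as $n\to\infty$. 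By the standard argument, since the expected detection error over the random ensemble vanishes, there exist deterministic signatures $\bs_i$ with the claimed parameters for which $P(\cD)\to0$.

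The main obstacle, and the place where the constant $c'$ really has to be chosen carefully, is the competition between the $\exp[\Theta(\xi k_n\ln\ell_n)]$ counting factor coming from the number of candidate active sets and the $\exp[-\Theta(c' d\ln\ell_n/(2dE_n/n+N_0))]$ decay of the pairwise error. When $k_n\log\ell_n=\Theta(n)$ and $d$ is as large as $\Theta(\xi k_n)$, both exponents are of order $k_n\ln\ell_n=\Theta(n)$, so one needs $c'$ strictly larger than a threshold determined by the $O(1)$ constants, and one must verify that the denominator $2dE_n/n+N_0$ — which grows with $d$ — does not erode the gain; concretely $dE_n/(2dE_n/n+N_0)\ge \min\{dE_n/(4dE_n/n),\,dE_n/(2N_0)\}=\min\{n/4,\,dc'\ln\ell_n/(2N_0)\}$, and since $d\ln\ell_n\le \xi k_n\ln\ell_n = O(n)$ the first branch is never the active one, so the exponent is $\Omega(d c'\ln\ell_n)$ with a clean linear-in-$d$ gain, which beats the $O(\xi k_n\ln\ell_n)$ count once $c'$ exceeds an explicit constant. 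I would also need the small technical step of passing from the Gaussian-signature energy (which concentrates at $bE_n$) to the hard constraint $\sum_j s_{ij}^2\le bE_n$ with probability one; this is handled exactly as in Verd\'u's and Polyanskiy's analyses by a negligible rescaling, and I would simply cite that the argument is standard.
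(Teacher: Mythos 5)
Your overall strategy---i.i.d.\ Gaussian signatures, a minimum-distance search over candidate active sets, and a union bound over pairwise confusions---is the same as the paper's (which follows Chen \emph{et al.}), but two steps in your execution fail, both in the regime $k_n\log\ell_n=\Theta(n)$. First, your count of ``pairs $(\cU_n,\cS)$'' includes a factor $\binom{\ell_n}{\xi k_n}=\exp[\Theta(k_n\log\ell_n)]=\exp[\Theta(n)]$ for the choice of the \emph{true} active set. The true set should be averaged over (its distribution is given), not union-bounded over; with that factor present, even the $d=1$ term of your sum diverges whenever $k_n\to\infty$, since the pairwise exponent at $d=1$ is only $\Theta(c'\log\ell_n)\ll k_n\log\ell_n$. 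The paper instead conditions on $\bD^a=\bd$ and counts only the $\binom{|\bd|}{\kappa_1}\binom{\ell_n}{\kappa_2}$ ways to be wrong.

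Second, and more fundamentally, your pairwise bound $\exp\bigl[-\tfrac{b}{2}\,dE_n/(2dE_n/n+N_0)\bigr]$ saturates at $\exp[-bn/4]$ once $dE_n/n\gg N_0$, and your claim that ``the first branch is never the active one'' is false precisely here: for $d=\Theta(k_n)$ and $k_n\log\ell_n=\Theta(n)$ one has $dc'\log\ell_n/(2N_0)=\Theta(c'n)\gg n/4$, so the min \emph{is} $n/4$. The surviving counting term at such $d$ is $\exp[O(d\log\ell_n)]=\exp[O(n)]$ with a constant depending on $\xi$ and the implied constant in $k_n\log\ell_n=O(n)$, while the saturated exponent $bn/4$ does not depend on $c'$ at all---so ``choose $c'$ large'' cannot close the gap. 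The fix is to keep the unweakened Chernoff/MGF form of the pairwise tail, $\exp\bigl[-\tfrac{n''}{2}\log\bigl(1+\Theta(dE_n/n)\bigr)\bigr]$: since $dE_n/n=\Theta(c'\,d/k_n)$ here, this exponent is $\Theta\bigl(n\log(1+c')\bigr)$ and \emph{does} grow without bound in $c'$, which is exactly the mechanism the paper uses (its exponent $g^n_{\lambda,\rho}$ retains the term $\tfrac{n''}{2\tilde E_n}\log(1+\lambda(1-\lambda\rho)\kappa_2\tilde E_n/n'')$ and $c'$ is chosen so that this logarithm exceeds the counting constant). In the subregime $k_n\log\ell_n=o(n)$ your weakened bound happens to suffice because there $d\,c_n\log\ell_n=o(n)$ keeps the linear branch active, but the lemma must also cover $\Theta(n)$. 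Minor omissions (the event $|\cU_n|>\xi k_n$, the all-inactive case, and the truncation to enforce the hard energy constraint) are handled in the paper by a Chernoff bound, a separate false-alarm analysis, and the $(1/\mu)$ normalization, respectively, and would need to be spelled out rather than cited as standard.
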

}
\begin{proof}
	\edit{The proof of Lemma~\ref{Lem_usr_detect} follows along similar lines as that of~\cite[Th.~2]{ChenCG17}. However, there are some differences in the settings considered. Here, the goal is to achieve user detection with minimum energy, whereas in \cite{ChenCG17} the goal is to achieve user detection with the minimum number of channel uses. Furthermore, the energy we assign to the signature part is proportional to the total energy $E_n$, cf.~\eqref{Eq_energy_choice}, whereas in \cite{ChenCG17} the energy assigned to the signature part is proportional to the number of channel uses. These differences have the positive effect that, in our proof, the condition \cite[Eq.~(19)]{ChenCG17}, namely that
\begin{equation*}
\lim_{n\to\infty} \ell_n e^{-\delta k_n} =0
\end{equation*}	
for all $\delta>0$, is not necessary. For a full proof of Lemma~\ref{Lem_usr_detect}, see Appendix~\ref{Sec_Lem_detct_proof}.
}
\end{proof}

We next use the following lemma to show that  $\Pm$ vanishes as $n \to \infty$ uniformly in  $k'_n \in \cK_n$, where $\cK_n \triangleq \{1, \ldots, \xi k_n\}$.
\begin{lemma}
	\label{Lem_err_expnt}
	Let  $A_{k'_n} \triangleq \frac{1}{k_n'} \sum_{i=1}^{k_n'} \I{ \hat{W}_i \neq W_i}$ and \mbox{$\cA_{k'_n} \triangleq \{1/k_n', \ldots,1 \}$}, where $\I{\cdot}$ denotes the indicator function. Then, for any arbitrary $0<\rho \leq 1$, we have
	\begin{equation}
	\textnormal{Pr}\{A_{k'_n} = a\} \leq \left(\frac{1}{\mu}\right)^{2k'_n} {k_n' \choose a k_n'} M_n^{a k_n' \rho} e^{-n'E_{0,k_n'}(a, \rho,n)}, \quad a\in\cA_{k'_n}\label{Eq__random_prob_err}
	\end{equation}
	where
	\begin{align}
	E_{0,k_n'}(a, \rho,n)  \triangleq  \frac{\rho}{2} \ln \left(1+\frac{a 2k_n' E_n'}{n'(\rho +1)N_0}\right) \label{Eq_random_expnt}
	\end{align}
	and
	\begin{align}
	\mu & \triangleq \int \I{ \|\bv\|^2 \leq E_n'} \prod_{i=1}^{n'} \tilde{q}(v_i) d \bv \label{Eq_def_mu}
	\end{align}
	is a normalizing constant. In~\eqref{Eq_def_mu}, $\tilde{q}(\cdot)$ denotes the probability density function of a zero-mean Gaussian random variable with variance $E_n'/(2n')$ and $\bv = (v_1,v_2,\ldots, v_{n'})$.
\end{lemma}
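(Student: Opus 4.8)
The plan is to adapt Gallager's random-coding bound with the $\rho$-trick to the situation where the number of active users $k_n'$ is fixed and known at the decoder, the codebooks have length $n'$, and the energy constraint on each codeword is $E_n'$. First I would set up the random ensemble: each user $i=1,\dots,k_n'$ independently draws its $M_n$ codewords, each codeword being a vector of $n'$ i.i.d.\ zero-mean Gaussian entries of variance $E_n'/(2n')$ conditioned on lying inside the ball $\{\bv : \|\bv\|^2 \le E_n'\}$. The conditioning is what produces the normalizing constant $\mu$ defined in~\eqref{Eq_def_mu}; crucially, the unconditional density of a length-$n'$ codeword is $\mu^{-1}$ times the restricted density, so when we bound a probability over a single codeword we pick up one factor of $1/\mu$, and over a pair of codewords (the one sent and the competing one) we pick up $(1/\mu)^2$, which explains the exponent $2k_n'$ in~\eqref{Eq__random_prob_err} (after also accounting for the averaging over all $k_n'$ users).

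The main step is to fix a target value $a\in\cA_{k'_n}$ and to bound $\Pr\{A_{k'_n}=a\}$ by a union over the ${k_n'\choose a k_n'}$ choices of \emph{which} $ak_n'$ users are decoded in error, times the probability that this particular set of users is jointly in error. For a fixed error pattern, the relevant event is that the (maximum-likelihood or threshold) decoder prefers some alternative message tuple that differs from the true one exactly in those $ak_n'$ coordinates. Here I would invoke Gallager's bound: the probability that the decoder's metric for some wrong tuple (differing in a prescribed set of $ak_n'$ users) exceeds that of the true tuple is at most $M_n^{ak_n'\rho}$ times a Gallager-type functional $\mathsf{E}\bigl[\bigl(\sum \text{(likelihood ratio)}^{1/(1+\rho)}\bigr)^{1+\rho}\bigr]$ of the channel. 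Because the noise is Gaussian and the aggregate interference from the $ak_n'$ "flipped" users is a sum of $ak_n'$ independent Gaussian-like codewords each of energy $\le E_n'$, this functional evaluates (after the usual Gaussian integral, and absorbing the truncation penalty into the $(1/\mu)^{2k_n'}$ prefactor) to $e^{-n' E_{0,k_n'}(a,\rho,n)}$ with $E_{0,k_n'}$ the AWGN Gallager exponent at "signal-to-noise ratio" $a\,2k_n'E_n'/(n'(\rho+1)N_0)$, exactly as in~\eqref{Eq_random_expnt}. Multiplying the per-pattern bound by the binomial count and keeping $\rho$ free (valid for all $0<\rho\le 1$) yields~\eqref{Eq__random_prob_err}.

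The step I expect to be the main obstacle is handling the truncated (ball-constrained) input distribution cleanly. In Gallager's classical treatment the inputs are i.i.d.\ without a hard per-codeword energy constraint, so one must argue that replacing the unconstrained Gaussian by its restriction to $\{\|\bv\|^2\le E_n'\}$ only costs a multiplicative factor $1/\mu$ per codeword involved in the bound, and that $\mu$ is bounded away from zero (or at least that the accumulated factor $(1/\mu)^{2k_n'}$ is negligible relative to the exponential decay $e^{-n'E_{0,k_n'}}$, which it will be since $\mu\to 1/2$ for the half-energy variance choice and $k_n' \le \xi k_n$ is subexponential). A secondary technical point is bookkeeping the exact power of $1/\mu$: each term in the union bound over wrong tuples involves the true codewords of the flipped users and their competitors, so one should verify that no more than $2k_n'$ such factors ever appear, uniformly over $a$ and over the error pattern. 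Once these truncation issues are dispatched, the remaining computation is the standard Gaussian-channel evaluation of the Gallager functional, which I would not grind through here.
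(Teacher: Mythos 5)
Your proposal follows essentially the same route as the paper's proof: start from Gallager's random-coding bound for the $k_n'$-user Gaussian MAC \cite[Th.~2]{Gallager85} with the $\rho$-trick, replace the i.i.d.\ Gaussian inputs by their restriction to the energy ball $\{\|\bv\|^2\le E_n'\}$, upper-bound the restricted density by $\mu^{-1}$ times the unrestricted one, and check that at most $(1+\rho)ak_n'+(k_n'-ak_n')\le 2k_n'$ such factors appear (using $\rho\le 1$ and $a\le 1$), which is exactly the bookkeeping you flag. One peripheral slip: with per-component variance $E_n'/(2n')$ the mean codeword energy is $E_n'/2$, so $\mu\to 1$ (not $1/2$) by concentration; this does not affect the lemma itself, only the later step where the prefactor is shown to be asymptotically harmless.
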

\begin{proof}
	The upper bound in~\eqref{Eq__random_prob_err}  without the factor $(1/\mu)^{2k'_n}$ can be obtained using random coding with i.i.d. Gaussian inputs~\cite[Th.~2]{Gallager85}. However, 	while i.i.d. Gaussian codebooks satisfy the energy constraint on average (averaged over all codewords), there may be some codewords in the codebook that violate it. 
	We therefore need to adapt the proof of~\cite[Th.~2]{Gallager85} as follows.	Let 
	\begin{align*}
	\tilde{\bq}(\bv) & = \prod_{i=1}^{n'} \tilde{q}(v_i).
	\end{align*}
	For codewords $\tilde{\bX}_i,i=1,\ldots,k'_n$ which are distributed according to $\tilde{\bq}(\cdot)$, the probability $\Pr(A_{k'_n}=a)$ can be upper-bounded as~\cite[Th.~2]{Gallager85}
	\begin{align}
	\Pr(A_{k'_n} = a) & \leq {k_n' \choose a k_n'} M_n^{a k_n' \rho}  \int \tilde{\bq}(\tilde{\bx}_{ak'_n+1}) \cdots \tilde{\bq}(\tilde{\bx}_{k'_n}) \; G ^{1+\rho} \; d\tilde{\bx}_{ak'_n+1} \cdots d\tilde{\bx}_{k'_n} \;d\tilde{\by}  \label{eq_a1}
	\end{align}
	where 
	\begin{align*}
	G & =  \int \tilde{\bq}(\tilde{\bx}_1) \cdots \tilde{\bq}(\tilde{\bx}_{ak'_n}) \left(  p(\tilde{\by} \mid \tilde{\bx}_{1},\cdots, \tilde{\bx}_{k'_n})\right) ^{1/1+\rho} d\tilde{\bx}_{1} \cdots d\tilde{\bx}_{ak'_n} \notag.
	\end{align*}	
	Using the fact that the channel is memoryless, the RHS of~\eqref{Eq__random_prob_err} without the factor $(1/\mu)^{2k'_n}$ follows from~\eqref{eq_a1}. The  case of $k'_n =2$ was analyzed  in~\cite[Eq.~(2.33)]{Gallager85}.

	Now suppose that all codewords are generated according to the distribution
	\begin{align*}
	\bq(\bv) & = \frac{1}{\mu} \I{ \|\bv\|^2 \leq E_n'} \tilde{\bq}(\bv).
	\end{align*}
	Clearly, such codewords  satisfy the energy constraint $E'_n$ with probability one. Furthermore,
	\begin{align}
	\bq(\bv) & \leq \frac{1}{\mu} \tilde{\bq}(\bv). \label{Eq_prob_signt_uppr1}
	\end{align}
	By replacing $\tilde{\bq}(\cdot)$ in~\eqref{eq_a1} by $\bq(\cdot)$, and upper-bounding $\bq(\cdot)$ by~\eqref{Eq_prob_signt_uppr1}, we obtain that
	\begin{align}
	\textnormal{Pr}\{A_{k'_n} = a\} \leq 		\left(\frac{1}{\mu}\right)^{(1+\rho)(ak'_n)}\left(\frac{1}{\mu}\right)^{k'_n - ak'_n} {k_n' \choose a k_n'} M_n^{a k_n' \rho} e^{-n'E_{0,k_n'}(a, \rho,n)}, \quad a\in\cA_{k'_n}. \label{Eq_Prob_An_uppr}
	\end{align}
	From the definition of $\mu$, we have that $0 < \mu \leq 1$.
	Since we further have $\rho\leq 1$ and $a\leq 1$, it follows that 
	$(1/\mu)^{(1+\rho)(ak'_n)} \leq 	(1/\mu)^{a k_n'+k_n'} $. Using this bound in~\eqref{Eq_Prob_An_uppr}, we obtain~\eqref{Eq__random_prob_err}.
\end{proof}

Next, we show that   $\left(\frac{1}{\mu}\right)^{2k'_n} \to 1$ as $n \to \infty$ uniformly in $k'_n \in \cK_n$. Let $\bH$ be a Gaussian vector which is distributed according to $\tilde{\bq}(\cdot)$. Then, by the definition of $\mu$, we have
\begin{align}
\mu & = 1 - \Pr\left(\|\bH_1\|_2^2 > E_n'\right) \notag
\end{align}
so $(1/\mu)^{2 k'_n} \geq 1$.
Let us consider $\bH_0 \triangleq \frac{2 n'}{E_n'} \|\bH_1\|_2^2$,  which  has a central chi-square distribution with $n'$ degrees of freedom. Then, 
\begin{align*}
\Pr\left(\|\bH_1\|_2^2 > E_n'\right) & = \Pr(\bH_0 > 2 n').
\end{align*}
 So, from
the Chernoff bound we obtain that 
\begin{align*}
\Pr(\bH_0 > a) & \leq \frac{E(e^{t\bH_0})}{e^{ta}} \\
& = \frac{(1-2t)^{-n'/2}}{e^{ta}}
\end{align*}
for every $t > 0$. By choosing $a= 2 n'$ and $t= \frac{1}{4}$, this yields
\begin{align}
\Pr(\bH_0 > 2 n') & \leq \frac{ \left(\frac{1}{2}\right)^{-n'/2}}{ \exp(n'/2) } \notag \\
& =  \exp \left[-\frac{n'}{2} \tau \right] \notag
\end{align}
where $\tau \triangleq \left( 1 - \ln 2    \right)$ is strictly positive. Thus,
\begin{align}
1 & \leq \left(\frac{1}{\mu}\right)^{2k'_n} \notag \\
& \leq \left(\frac{1}{\mu}\right)^{2\xi k_n} \notag \\
& \leq \left(1 - \exp \left[-\frac{n'}{2} \tau \right]\right)^{-(2\xi k_n)}, \quad k'_n \in \cK_n.	 \label{Eq_mu_uppr2}
\end{align}
By assumption, we have  that $k_n=o(n)$  and $n' = \Theta(n)$. 
Since for any two non-negative sequences $\{a_n\}$ and $\{b_n\}$ satisfying $a_n\to 0$ and $a_nb_n \to 0$ as $n \to \infty$, it holds that $(1-a_n)^{-b_n} \to 1$ as $n \to \infty$, we obtain that the RHS of~\eqref{Eq_mu_uppr2} tends to one as $n \to \infty$ uniformly in $k'_n\in \cK_n$. So there exists a positive constant $n_0$  that is independent of $k'_n$ and satisfies
\begin{align}
\left(\frac{1}{\mu}\right)^{2k'_n} \leq 2, \quad k'_n \in \cK_n, n \geq n_0. \label{Eq_mu_uppr1}
\end{align}

The probability of error $\Pm$ can be written as
\begin{equation}
\Pm= \sum\limits_{a\in\cA_{k'_n}} \textnormal{Pr}\{A_{k'_n} = a\}.	\label{Eq_prob_err_def}
\end{equation}
From Lemma~\ref{Lem_err_expnt} and~\eqref{Eq_mu_uppr1}, we obtain
\begin{align}
\textnormal{Pr}\{A_{k'_n} = a\} & \leq 2	{k_n' \choose a k_n'} M_n^{a k_n' \rho}  \exp[-n'E_{0,k_n'}(a, \rho,n)] \notag\\
& \leq 2 \exp\left[ k_n'H_2(a) + a \rho k_n' \log M_n -   n'E_{0,k_n'}(a, \rho,n) \right]\notag \\
& = 2 \exp \left[-E_n'f_{k'_n}(a, \rho,n)\right], \quad n \geq n_0 \label{eq:why_is_this_unnumbered}
\end{align}
where \edit{$H_2(\cdot)$ denotes the binary entropy function, and}
\begin{align}
f_{k'_n}(a, \rho,n) \triangleq \frac{n'E_{0,k_n'}(a, \rho,n)}{E_n'} - \frac{a \rho k_n' \log M_n}{E_n'} - \frac{k_n' H_2(a)}{E_n'}. \label{Eq_fn_def}
\end{align}
We next show that, for sufficiently large $n$, we have
\begin{align}
\textnormal{Pr}\{A_{k'_n} = a\} \leq 2 \exp \left[-E_n'f_{\xi k_n}(1/(\xi k_n), \rho,n)\right], \quad a\in\cA_{k'_n}, k'_n \in \cK_n. \label{Eq_err_upp_bnd}
\end{align}
To this end, we lower-bound
\begin{align}
\frac{d f_{k'_n}(a, \rho,n)}{da} & \geq \rho k'_n \left[ \frac{1}{1+\frac{2k'_nE'_n}{n'(\rho+1)N_0}} \frac{1}{(1+\rho)N_0} - \frac{\CR}{(1-b)\log e}  \right]\nonumber \\
& \geq \rho \left[ \frac{1}{1+\frac{2 \xi k_nE'_n}{n'(\rho+1)N_0}} \frac{1}{(1+\rho)N_0} - \frac{\CR}{(1-b)\log e}  \right] \label{Eq_new16}
\end{align}
by using simple algebra. This implies that
for any fixed value of $\rho$ and our choices of $E_n'$ and \mbox{$\CR = \frac{(1-b)\log e}{(1+\rho)N_0} - \delta$} (for some arbitrary $0<\delta<\frac{(1-b)\log e}{(1+\rho)N_0}$),
\begin{equation*}
\liminf_{n\to\infty} \min_{k'_n \in \cK_n}  \min_{a \in \cA_{k'_n}} \frac{d f_{k'_n}(a, \rho,n)}{da} > 0.
\end{equation*}
Indeed, \edit{the RHS of \eqref{Eq_new16} is independent of $a$ and $k'_n$ and tends to $\rho\delta$ as $n\to\infty$}, since $\frac{k_n E_n'}{n'} \to 0$ by our choice of $E_n'$ and because $k_n = o(n / \log n)$. \edit{Thus, for sufficiently large $n$ and a given $\rho$, the function $a\mapsto f_{k'_n}(a, \rho,n)$ is monotonically increasing on $\cA_{k'_n}$ for every $k'_n \in \cK_n$.}
It follows that there exists a positive constant $n'_0$ that is independent of $k'_n$ and satisfies
\begin{equation*}
\min_{a \in \cA_{k'_n}}  f_{k'_n}(a, \rho,n) =  f_{ k'_n}(1/k'_n, \rho,n), \quad k'_n \in \cK_n, n \geq n'_0.
\end{equation*}
It further follows from the definition of $f_{k'_n}(a, \rho,n) $ in \eqref{Eq_fn_def} that, for $a = 1/k'_n$ and a given $\rho$, $f_{ k'_n}(a, \rho,n')$ is decreasing in $k'_n$, since in this case the first two terms on the RHS of~\eqref{Eq_fn_def} are independent of $k'_n$ and the third term is increasing in $k'_n$.
Hence, we can further lower-bound
\begin{equation*}
\min_{a \in \cA_{k'_n}}  f_{k'_n}(a, \rho,n) \geq  f_{\xi k_n}(1/(\xi k_n), \rho,n), \quad k'_n \in \cK_n, n \geq n'_0.
\end{equation*}

Next, we show that, for our choice of $E_n'$ and \mbox{$\CR$}, we have
\begin{equation}
\label{eq:lim_pos}
\liminf_{n \rightarrow \infty}  f_{\xi k_n}(1/(\xi k_n), \rho,n) >0.
\end{equation}
Let
\begin{IEEEeqnarray}{rCl}
	i_n(\rho) & \triangleq & \frac{n' E_{0,\xi k_n}(1/(\xi k_n),\rho,n)}{E_n'}\label{Eq_new17} \\
	j(\rho) & \triangleq & \frac{\rho \CR}{(1-b)\log e} \label{Eq_new18}\\
	h_n(1/(\xi k_n)) & \triangleq & \frac{\xi 	k_n H_2(1/(\xi k_n))}{E_n'}. \label{Eq_new19}
\end{IEEEeqnarray}
Note that $\frac{h_n(1/(\xi k_n))}{j(\rho)}$ vanishes as $n \to \infty$ for our choice of $E_n'$.
Consequently,
\begin{IEEEeqnarray*}{lCl}
	\liminf_{n \rightarrow \infty} f_{\xi k_n}(1/(\xi k_n), \rho,n) 
	& = & j(\rho)  \biggl\{\liminf_{n\to\infty} \frac{i_n(\rho)}{j(\rho)} - 1 \biggr\}.
\end{IEEEeqnarray*}
The term  $j( \rho) =\rho \CR/(1-b)\log e$ is bounded away from zero for our choice of $\CR$ and $\delta < \frac{(1-b)\log e}{(1+\rho)N_0}$.  Furthermore, since $E_n'/n' \to 0$, we get
\begin{equation}
\label{eq:we_get_this}
\lim_{n\to\infty} \frac{i_n(\rho)}{j(\rho)} = \frac{(1-b)\log e}{(1+\rho)N_0 \CR}
\end{equation}
which is strictly larger than $1$ for our choice of $\CR$. So, \eqref{eq:lim_pos} follows.

We conclude that there exist two positive constants $\gamma$ and $n''_0 \geq \max(n_0,n_0')$ that are independent of $k'_n$ and satisfy $f_{k'_n}(a, \rho,n)  \geq \gamma $ for $a\in \cA_{k'_n}$, $k'_n\in \cK_n$, and  $n \geq n''_0$.
Consequently, \edit{it follows from \eqref{eq:why_is_this_unnumbered} that}, for $n \geq n''_0$,
\begin{align}
\textnormal{Pr}\{A_{k'_n} = a\} \leq 2 e^{-E_n'\gamma}, \quad a \in \cA_{k'_n}, k'_n \in \cK_n. \label{Eq_type_uppr}
\end{align}
Since $|\cA_{k'_n}| = k_n'$, \eqref{Eq_prob_err_def} and \eqref{Eq_type_uppr} yield that
\begin{align}
\Pm\leq k_n' 2 e^{-E_n'\gamma}, \quad k'_n \in \cK_n, n \geq n''_0. \notag
\end{align}
Further upper-bounding $k'_n \leq \xi k_n$, this implies that
\begin{align}
\sum_{k'_n=1}^{\xi k_n}\Pr\{K'_n=k_n'\}\Pm& \leq \xi k_n 2 e^{-E_n'\gamma},  \quad n \geq n''_0.  \label{Eq_sum_prob_uppr}
\end{align}
Since $E_n' = (1-b)c_n \ln \el_n$ and  $k_n = O(\el_n)$, it follows that the RHS of~\eqref{Eq_sum_prob_uppr} tends to 0 as $n \to \infty$ for our choice of \mbox{$\CR = \frac{(1-b)\log e}{(1+\rho)N_0} - \delta$}.
Since $\rho,\delta,$ and $b$ are  arbitrary, any rate $\CR < \frac{\log e}{N_0}$ is thus achievable. This proves Part~\ref{Thm_achv_part}) of Theorem~\ref{Thm_random_JPE}.

\subsubsection{Proof of Part~\ref{Thm_conv_part})}
Let $\hat{W}_i$ denote the receiver's estimate of $W_i$, and
denote by $\bW$ and $\bhW$ the vectors $(W_1,\ldots, W_{\el_n})$ and $(\hat{W_1},\ldots,\hat{W}_{\el_n})$, respectively. 
The messages $W_1,\ldots, W_{\el_n}$ are independent, so it follows from~\eqref{Eq_messge_def} that
\begin{align*}
H(\bW) = \el_n H(\bW_1) = \el_n \left(H_2(\alpha_n) + \alpha_n \log M_n \right).
\end{align*} 
Since $H(\bW) = H(\bW|\bY)+I(\bW;\bY)$, we obtain
\begin{align}
\el_n \left(H_2(\alpha_n) + \alpha_n \log M_n \right) & =H(\bW|\bY)+I(\bW;\bY). \label{Eq_messge_entrpy}
\end{align}
To bound $H(\bW)$, we  use the upper bounds~\cite[Lemma~2]{ChenCG17}
\begin{align}
H(\bW|\bY) \leq & \log 4 + 4 P_{e}^{(n)}\big(k_n \log M_n + k_n  + \el_n H_2(\alpha_n) + \log M_n \big) \label{Eq_messg_cond_entrpy}
\end{align}
and~\cite[Lemma~1]{ChenCG17}  
\begin{align}
I(\bW;\bY)  \leq  \frac{n}{2} \log \left(1+\frac{ 2k_nE_n}{nN_0}\right). \label{Eq_mutl_info_uppr}
\end{align}
Using~\eqref{Eq_messg_cond_entrpy} and~\eqref{Eq_mutl_info_uppr}  in~\eqref{Eq_messge_entrpy}, rearranging terms, and dividing by $k_nE_n$, yields
\begin{IEEEeqnarray}{lCl}
\left(1-4 P_{e}^{(n)}(1+1/k_n)\right) \CR &\leq & \frac{\log 4}{k_nE_n} +   \frac{H_2(\alpha_n)}{\alpha_n E_n} \! \left(4 P_{e}^{(n)} -1\right) \nonumber \\
& & {} +  4 P_{e}^{(n)} (1/E_n + 1/k_n) +\frac{n}{2 k_nE_n} \log \left(1+\frac{ 2k_nE_n}{nN_0}\right). \label{Eq_rate_joint_uppr}
\end{IEEEeqnarray}
We next show that, if  $k_n \log \el_n = \omega(n)$, then the RHS of~\eqref{Eq_rate_joint_uppr} tends to a non-positive value. To this end, we need the following lemma.
\begin{lemma}
	\label{Lem_energy_bound}
	If $\CR > 0$ \edit{and $\ell_n \geq 5$}, then $P_{e}^{(n)}$ vanishes as $n\to\infty$ only if  \mbox{$E_n = \Omega(\log \el_n)$}.
\end{lemma}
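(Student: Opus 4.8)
The plan is to prove the contrapositive: assuming $\CR>0$ and $P_e^{(n)}\to 0$, show $E_n=\Omega(\log\el_n)$. Two facts are available from the start. Since $\CR>0$ forces $M_n\ge 2$, Lemma~\ref{Lem_energy_infty} gives $E_n\to\infty$; and writing $\CR_n\triangleq(\log M_n)/E_n$ we have $\liminf_n\CR_n\ge\CR>0$ and (by the single-user converse, which applies to user $1$'s code because $\Pr\{\hat W_1\ne W_1\}\le P_e^{(n)}\to 0$) $\limsup_n\CR_n\le\frac{\log e}{N_0}$, so $\log M_n=\Theta(E_n)\to\infty$ as well. Because $\log\el_n\le\log(\el_n/k_n)+\log k_n\le 2\max\{\log(\el_n/k_n),\log k_n\}$, it suffices to lower-bound $E_n$ by a constant multiple of each of $\log(\el_n/k_n)$ and $\log k_n$. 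The first is the energy needed to detect \emph{which} users are active; the second is the energy needed to decode the roughly $k_n$ active users with a single decoder. (When $\alpha_n$ stays bounded away from zero the first quantity is $O(1)$ and all the content is in the second.)

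For the detection bound I would condition on the event $\{K_n'=m\}$ that exactly $m$ users are active, choosing $m$ of the order of $k_n$ (its mean) so that $\Pr\{K_n'=m\}=\Omega(1/\sqrt{k_n})$ by a local de Moivre--Laplace estimate; here the assumption $k_n=\Omega(1)$ enters. Conditioned on this event, $(\cU_n,\bW)$ is uniform over ${\el_n\choose m}M_n^m$ values and the channel input $\sum_{i\in\cU_n}\bx_i(W_i)$ has energy at most $mE_n$, so Fano's inequality together with the AWGN mutual-information bound $I(\cU_n,\bW;\bY)\le\frac n2\log(1+\frac{2mE_n}{nN_0})\le\frac{mE_n\log e}{N_0}$ (as in \cite[Sec.~15.3]{CoverJ06}, cf.\ \cite[Lemma~1]{ChenCG17}), after dividing by $m$ and using ${\el_n\choose m}\ge(\el_n/m)^m$, yields
\begin{equation*}
\frac{\log(\el_n/k_n)}{E_n}+\CR_n\;\le\;\frac{\log e}{N_0}\,(1+o(1))
\end{equation*}
as soon as $P_e^{(n)}/\Pr\{K_n'=m\}\to 0$, which holds because $P_e^{(n)}\to 0$ while $\Pr\{K_n'=m\}$ is only polynomially small in $k_n$. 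Combining this with $\liminf_n\CR_n\ge\CR>0$ forces $E_n=\Omega(\log(\el_n/k_n))$.

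The decoding bound $E_n=\Omega(\log k_n)$ is the main obstacle; the bound of \cite[p.~82]{Polyanskiy18} for the case $M_n=2$ is its prototype. The guiding heuristic is the one behind the orthogonal-access achievability in Theorem~\ref{Thm_nonrandom}: a single decoder must return the messages of all $\Theta(k_n)$ active users, so each user's individual decoding-error probability must essentially be $o(1/k_n)$; since a genie that reveals the messages of all other users turns user $i$'s channel into a single-user AWGN channel, \cite[eq.~(30)]{PolyanskiyPV11} gives $\Pr\{\hat W_i\ne W_i\}\ge Q(\sqrt{2E_n/N_0})$, and $Q(\sqrt{2E_n/N_0})=o(1/k_n)$ then forces $E_n=\Omega(\log k_n)$. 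Turning ``each user's error must be $o(1/k_n)$'' into a rigorous lower bound on $P_e^{(n)}$ is the delicate point, and the step I expect to be hardest: the $k_n$ events $\{\hat W_i\ne W_i\}$ involve the same noise vector and the same decoder, so they may be positively correlated and the union bound cannot simply be reversed. When $k_n$ is bounded this can be handled by conditioning on ``exactly one active user'' and identifying which of the $\el_n$ users it is, again by the Fano argument above; for $k_n\to\infty$ this conditioning becomes too unlikely, and I would instead lower-bound $H(\bW\mid\bY)$ through the chain rule $H(\bW\mid\bY)\ge\sum_i H(W_i\mid\bY,W_{-i})=\sum_iH(W_i\mid\bx_i(W_i)+\bZ)$, bound each term below by a positive decreasing function of $E_n$ (using the single-user MAP error $\ge Q(\sqrt{2E_n/N_0})$), and then convert the resulting lower bound on $H(\bW\mid\bY)$ into a lower bound on $P_e^{(n)}$ that does not pass through the lossy estimate $P_e^{(n)}\ge(H(\bW\mid\bY)-1)/\log|\mathcal{W}|$ — for instance by restricting attention to two well-separated codewords per user and exploiting that the energy constraint holds with probability one, in the spirit of \cite{Polyanskiy18}. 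Getting this conversion right is the crux of the lemma.

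Combining the two bounds, $E_n\ge c\max\{\log(\el_n/k_n),\log k_n\}\ge\tfrac c2\log\el_n$ for some $c>0$ and all large $n$, i.e.\ $E_n=\Omega(\log\el_n)$. The hypothesis $\el_n\ge 5$ is a mild technical condition used only to make the entropy/Stirling and $Q$-function estimates clean; specializing to $\el_n=k_n$, $\alpha_n=1$ recovers Lemma~\ref{Lem_convrs_err_prob}, where the detection bound is vacuous and the entire content is the multi-user decoding bound $E_n=\Omega(\log k_n)$.
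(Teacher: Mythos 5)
There is a genuine gap, and you have in fact located it yourself: the conversion of ``each active user's error must be small'' into a lower bound on the joint error $P_e^{(n)}$ is exactly the content of the lemma, and your proposal stops at announcing that this conversion is ``the crux'' without supplying it. The chain-rule bound $H(\bW\mid\bY)\ge\sum_i H(W_i\mid\bY,W_{-i})$ followed by any Fano-type step normalized by $\log|\cW|=\el_n\log(M_n+1)$ is too lossy to give $E_n=\Omega(\log\el_n)$, and no alternative conversion is actually exhibited. The paper closes this gap by a different mechanism: it partitions the message space $\cW$ by the number $t$ of active users and then partitions each class $\cT_t$ into clusters $\cS_d^t$ of cardinality at least $\el_n+1$ in which any two message vectors are at Hamming distance at most $8$; the corresponding received-signal means then differ by at most $16\sqrt{E_n}$ in Euclidean norm, so all pairwise KL divergences within a cluster are at most $256E_n/N_0$, and Birg\'e's inequality applied cluster by cluster yields $P_e^{(n)}\ge\bigl(1-\Pr(\cT_0)\bigr)\bigl(1-\frac{256E_n/N_0+\log 2}{\log\el_n}\bigr)$ directly, with $\Pr(\cT_0)$ bounded away from one by $k_n=\Omega(1)$. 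This single argument handles what you split into ``detection'' and ``decoding'' simultaneously, and it is a genuine multi-hypothesis bound (Birg\'e rather than a reversed union bound), which is precisely the tool your sketch is missing.

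A secondary quantitative flaw: in your detection bound you condition on the single event $\{K_n'=m\}$ with $m$ near the mean, whose probability is only $\Theta(1/\sqrt{k_n})$, and you assert that $P_e^{(n)}/\Pr\{K_n'=m\}\to 0$ ``because $P_e^{(n)}\to 0$.'' That inference is invalid: $P_e^{(n)}\to 0$ does not imply $P_e^{(n)}=o(k_n^{-1/2})$ when $k_n\to\infty$. This could be repaired by conditioning on an event of probability $\Omega(1)$, such as $\{|K_n'-k_n|\le C\sqrt{k_n}\}$, but as written the step fails. The first part of your outline (using Lemma~\ref{Lem_energy_infty} to get $E_n\to\infty$ and the single-user converse to get $\log M_n=\Theta(E_n)$) is fine but is not where the difficulty lies.
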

	\begin{proof}
	\edit{ See Appendix~\ref{Append_prob_lemma}.}
	\end{proof}

Part~\ref{Thm_conv_part}) of Theorem~\ref{Thm_random_JPE} follows now by contradiction. Indeed, let us assume that $k_n \log \el_n = \omega(n)$, $P_{e}^{(n)} \to 0$,  and $\CR >0$. \edit{The assumption $k_n\log\ell_n=\omega(n)$ implies that $\ell_n\to\infty$ as $n\to\infty$.} Then, Lemma~\ref{Lem_energy_bound} together with the assumption that $k_n = \Omega(1)$ implies that \edit{$E_n\to\infty$ and $k_n E_n=\omega(n)$.} It follows that the last term on the RHS of~\eqref{Eq_rate_joint_uppr} tends to zero as $n \to \infty$.  \edit{Furthermore,} together with the assumption that $k_n = \Omega(1)$,
and since $P_{e}^{(n)}$ tends to zero as $n \to \infty$, this implies that the first and third term on the RHS of~\eqref{Eq_rate_joint_uppr} vanish as $n \to \infty$. Finally, 
$\frac{H_2(\alpha_n)}{\alpha_n E_n}$ is a sequence of non-negative numbers and $(4 P_{e}^{(n)} -1) \to -1$ as $n \to \infty$, so the second term converges to a non-positive  value. \edit{Noting that, by the assumption $k_n=\Omega(1)$, the term $(1-4 P_{e}^{(n)}(1+1/k_n))$ tends to one as $P_{e}^{(n)} \to 0$}, we thus obtain from \eqref{Eq_rate_joint_uppr} that $\CR$ tends to a non-positive value as $n \to \infty$. This contradicts the assumption $\CR > 0$, so Part~\ref{Thm_conv_part}) of Theorem~\ref{Thm_random_JPE} follows.

\edit{\subsubsection{Proof of Part~\ref{Thm_exact_order})}
	\label{Sec_exact_random}
	To show that $\dot{C}>0$ when $k_n\log\ell_n=\Theta(n)$, we use the same achievability scheme and follow the same analysis as in the proof of Part~\ref{Thm_achv_part}) of Theorem~\ref{Thm_random_JPE}. That is, each user uses $n''=b n$ channel uses for sending a signature and $n'=n - n''$ channel uses for sending the message. Furthermore, the decoding process consists of two steps. First, the receiver determines which users are active. If the number of estimated active users is less than or equal to $\xi k_n$, for some arbitrary positive integer $\xi$, then the receiver decodes in a second step the messages of all active users. If the number of estimated active users is greater than $\xi k_n$, then the receiver declares an error. We set $E_n = c' \ln \ell_n$ for some $c'>0$ chosen sufficiently large so that, by Lemma~\ref{Lem_usr_detect}, the probability of a detection error vanishes as $n \to \infty$. We next show that there exists an $\CR>0$ such that the probability of a decoding error also vanishes as $n \to \infty$. To this end, we first argue that, if $k_n \log \ell_n = \Theta(n)$, then we can find an $\CR>0$ such that $f_{k'_n}(a, \rho,n)$ defined in~\eqref{Eq_fn_def} satisfies
	\begin{align}
	\liminf_{n\to\infty} \min_{k'_n \in \cK_n}  \min_{a \in \cA_{k'_n}}  f_{k'_n}(a, \rho,n) & > 0. \label{Eq_new13}
	\end{align}
	In Part~\ref{Thm_achv_part}) of Theorem~\ref{Thm_random_JPE}, we proved \eqref{Eq_new13} by first showing that there exists a positive constant $n'_0$ such that
	\begin{align}
	\min_{k'_n \in \cK_n} 	\min_{a \in \cA_{k'_n}}  f_{k'_n}(a, \rho,n) \geq  f_{\xi k_n}(1/(\xi k_n), \rho,n), \quad  n \geq n'_0 \label{Eq_new14}
	\end{align}
	and then 
	\begin{align}
	\liminf_{n \rightarrow \infty}  f_{\xi k_n}(1/(\xi k_n), \rho,n) >0.\label{Eq_new15}
	\end{align}
	We follow the same steps here, too. Since, by assumption, $k_n E'_n = k_n (1-b) c'\ln\ell_n=\Theta(n)$ for every fixed $c'>0$ and $n'=\Theta(n)$, there exist $r_1>0$ and $\tilde{n}_0>0$ such that
	\begin{align*}
	\frac{k_nE'_n}{n'} & \leq r_1, \quad  n\geq \tilde{n}_0.
	\end{align*}	
	It then follows from \eqref{Eq_new16} that, for every
	\begin{align}
	0< \CR < \frac{ (1-b)\log e}{(\rho+1)N_0+2\xi r_1} \label{Eq_new21}
	\end{align}
	we have that
	\begin{align*}
	\liminf_{n\to\infty} \min_{k'_n \in \cK_n}  \min_{a \in \cA_{k'_n}}  \frac{d  f_{k'_n}(a, \rho,n) }{d a} >0.
	\end{align*}
	Thus, for sufficiently large $n$ and a given $\rho$, the function $a\mapsto f_{k'_n}(a, \rho,n)$ is monotonically increasing on $\cA_{k'_n}$ for every $k'_n \in \cK_n$. This gives \eqref{Eq_new14}.
	
	To prove \eqref{Eq_new15}, we write
	\begin{equation}
	\label{eq:f(bla,bla)}
	 f_{\xi k_n}(1/(\xi k_n), \rho,n) = j(\rho)\left( \frac{i_n(\rho)}{j(\rho)} - 1 - \frac{h_n(1/(\xi k_n))}{j(\rho)} \right)
	\end{equation}
	where $i_n(\rho)$, $j(\rho)$, and $h_n(1/(\xi k_n))$ are defined in \eqref{Eq_new17}, \eqref{Eq_new18}, and \eqref{Eq_new19}, respectively. We consider two cases:
	
	\subsubsection*{Case 1---$k_n$ is unbounded} In this case, the assumption $k_n\log\ell_n=\Theta(n)$ implies that $\Theta(\log\ell_n)=o(n)$. Since $E_n'=\Theta(\log \ell_n)$ for every fixed $c'>0$, it follows that $E'_n/n'\to 0$. We thus get \eqref{eq:we_get_this}, namely
	\begin{equation*}
	\lim_{n \to \infty} \frac{i_n(\rho)}{j(\rho)}  = \frac{ (1-b)\log e}{\CR(\rho+1)N_0}.
	\end{equation*}
	If $\CR$ satisfies \eqref{Eq_new21}, then this is strictly larger than $1$. Furthermore, if $k_n$ is unbounded, then we can make $\frac{h_n(1/(\xi k_n)) }{j(\rho)}$ arbitrarily small by choosing $c'$ sufficiently large. Since $j(\rho)$ is bounded away from zero for every positive $\rho$ and $\CR$, we then obtain \eqref{Eq_new15} from \eqref{eq:f(bla,bla)}.
	
	\subsubsection*{Case 2---$k_n$ is bounded} In this case, $\Theta(\log \ell_n)=\Theta(k_n\log\ell_n)=\Theta(n)$, so for every fixed $c'>0$ we have $E'_n = \Theta(n)$. It follows that we can find $r_2 >0$ and $\tilde{n}'_0$ such that 
	\begin{equation*}
	\frac{2 E'_n}{n'(1+\rho)N_0} \leq r_2, \quad n\geq \tilde{n}'_0.
	\end{equation*}
	If we choose
	\begin{equation}
	\CR  < \frac{\ln (1+r_2)}{r_2}\frac{(1-b)\log e}{(\rho+1) N_0}  \label{Eq_new22}
	\end{equation}
	then 
	\begin{align}
	\lim_{n \to \infty} \frac{i_n(\rho)}{j(\rho)} >1.
	\end{align}
	Furthermore, if $k_n$ is bounded, then $h_n(1/(\xi k_n))$ vanishes as $n \to \infty$, since $\xi k_n$ is bounded and $E'_n \to \infty$. Recalling that $j(\rho)$ is bounded away from zero for every positive $\rho$ and $\CR$, we then again obtain \eqref{Eq_new15} from \eqref{eq:f(bla,bla)}.
	
	From \eqref{Eq_new14} and \eqref{Eq_new15}, it follows that, for every positive $\CR$ satisfying both \eqref{Eq_new21} and \eqref{Eq_new22}, there exist two positive constants $\gamma$ and $n_0''\geq \max(n_0,n_0',\tilde{n}_0')$ (where $n_0$ is as in \eqref{eq:why_is_this_unnumbered}) that are independent of $k'_n$ and satisfy $f_{k'_n}(a, \rho,n)  \geq \gamma $ for $a\in \cA_{k'_n}$, $k'_n\in \cK_n$, and  $n \geq n''_0$. It follows from \eqref{eq:why_is_this_unnumbered} that
	\begin{align}
	\sum_{k'_n=1}^{\xi k_n}\Pr\{K'_n=k_n'\}\Pm & \leq \xi k_n 2 e^{-E_n'\gamma} \notag \\
	& = 2 \xi \exp\left[-E_n'\left(\gamma - \frac{\ln  k_n}{E_n'}\right)\right],  \quad n \geq n_0''. \label{Eq_new24}
	\end{align}
	The term
	\begin{align*}
	\frac{\ln k_n}{E'_n} & = \frac{\ln k_n}{(1-b)c'\ln \ell_n}
	\end{align*}
	can be made arbitrarily small by choosing $c'$ sufficiently large since $\ln k_n \leq \ln \ell_n$. We thus have that $\frac{\ln k_n}{E'_n} < \gamma$ for sufficiently large $c'$, in which case the RHS of~\eqref{Eq_new24} vanishes as $n \to \infty$. Since, by Markov's inequality, we further have that $\Pr\{K_n'>\xi k_n\}\leq 1/\xi$, we conclude that the probability of a decoding error vanishes as we let first $n\to\infty$ and then $\xi\to\infty$. Consequently, if $k_n \log \ell_n = \Theta(n)$, then $\CC>0$.
	
	To prove that $\CC < \frac{\log e}{N_0}$,  we first note that the assumption $k_n \log \ell_n = \Theta(n)$ implies that $\ell_n \to \infty$ as $n \to \infty$.  Then, Lemma~\ref{Lem_energy_bound}
	shows that $P_e^{(n)}$ vanishes as $n \to \infty$ only if $E_n = \Omega(\log \ell_n)$. This further implies that $P_e^{(n)}\to 0$ only if $k_n E_n = \Omega(n)$. If $k_n E_n = \omega(n)$, then it follows from the proof of Part~\ref{Thm_conv_part}) of Theorem~\ref{Thm_random_JPE} that $\CC=0$.  We can thus assume without loss of optimality that $k_n E_n = \Theta(n)$. In this case, by following the arguments given in the proof of
	Part~\ref{Thm_conv_part}) of Theorem~\ref{Thm_random_JPE}, we obtain that the first and the third term on the RHS of \eqref{Eq_rate_joint_uppr} vanish as $n \to \infty$.  Furthermore, the second term tends to a non-positive value, and the factor $(1-4 P_{e}^{(n)}(1+1/k_n))$ tends to one. It then follows from \eqref{Eq_rate_joint_uppr} that
	\begin{equation}
	\label{eq:Th7_juhuu}
\CR \leq \limsup_{n\to\infty} \frac{n}{2 k_nE_n} \log \left(1+\frac{ 2k_nE_n}{nN_0}\right).
\end{equation}
Since $k_n E_n = \Theta(n)$, there exist $n_0>0$ and $l_1>0$ such that, for $n \geq n_0$, we have $\frac{k_nE_n}{n} \geq l_1$. By noting that $\frac{\log (1+x)}{x} < \log e$ for every $x >0$, we thus obtain that the RHS of \eqref{eq:Th7_juhuu} is strictly less than $\frac{\log e}{N_0}$ for $n \geq n_0$. Hence $\CC< \frac{\log e}{N_0}$, which concludes the proof of Part~\ref{Thm_exact_order}) of Theorem~\ref{Thm_random_JPE}.
}

\subsection{Proof of Theorem~\ref{Thm_ortho_accs}}
\label{Sec_ortho_access}

\subsubsection{Proof of Part~\ref{Thm_ortho_accs_achv})}
To prove Part~\ref{Thm_ortho_accs_achv}) of Theorem~\ref{Thm_ortho_accs}, we present a scheme that is similar to the one used in the proof of Part~\ref{Thm_Infeasble_achv}) of Theorem~\ref{Thm_nonrandom}.  Specifically, each user is assigned  $n/\el_n$ channel uses, out of which the first one is used for sending a pilot signal  and the rest are used for sending the message. Out of the  available energy $E_n$,   $t E_n$ (for some arbitrary $0 < t < 1$ \edit{to be determined later}) is used for the pilot signal and $(1-t)E_n$ is used for sending the message. Let $\tilde{\bx}(w) $ denote the codeword of length $\frac{n}{\el_n}-1$ for sending message $w$. Then,
user $i$ sends in his assigned slot the codeword 
\begin{align*}
\bx(w_i) = \left(\sqrt{t E_n}, \tilde{\bx}(w_i)\right).
\end{align*}
The receiver first detects from the pilot signal whether user $i$ is active or not. If the user is estimated as active, then the receiver decodes the user's message.
Let  $P_{e,i} = \textnormal{Pr}\{\hat{W_i} \neq W_i\}$ denote the probability that user $i$'s message is decoded erroneously.
Since all users follow the same coding scheme, the probability of correct decoding is given by
\begin{align}
P_c^{(n)}  = \left(1-P_{e,1}\right)^{\el_n}. \label{Eq_ortho_corrct}
\end{align}

By employing the transmission scheme that was used to prove  Theorem~\ref{Thm_nonrandom}, we get an upper bound on the probability of error $P_{e,1}$ as follows. Let $\bY_1$ denote the received vector of length $n/\el_n$ corresponding to user 1 in the orthogonal-access scheme.
From the pilot signal, which is the first symbol $Y_{11} $ of $\bY_1$, the receiver guesses whether user 1 is active or not. Specifically, the user is estimated as active if $Y_{11} > \frac{\sqrt{tE_n}}{2}$ and as inactive otherwise.
If the user is declared as active, then the receiver decodes the message from the rest of $\bY_1$.
Let $\Pr( \hat{W}_1 \neq w |W_1 = w)$ denote the decoding error probability when  message $w,w=0, \ldots, M_n$ was transmitted.
Then, $P_{e,1}$ is given by
\begin{align}
P_{e,1} & = (1-\alpha_n)\Pr( \hat{W}_1 \neq 0|W_1=0) + \frac{\alpha_n}{M_n} \sum_{w=1}^{M_n} \Pr( \hat{W}_1 \neq w |W_1 = w) \notag \\
& \leq \Pr( \hat{W}_1 \neq 0|W_1=0) + \frac{1}{M_n} \sum_{w=1}^{M_n} \Pr( \hat{W}_1 \neq w  | W_1 = w). \label{Eq_err_prob_uppr}
\end{align}
If $W_1=0$, then an error occurs  if $Y_{11} > \frac{\sqrt{tE_n}}{2}$. So, we have
\begin{align}
\Pr( \hat{W}_1 \neq 0|W_1=0) & = Q\left( \frac{\sqrt{tE_n}}{2} \right). \label{Eq_err_prob_uppr2}
\end{align}
If $w=1,\ldots,M_n$, then an error happens either \edit{by declaring the user as inactive} or by erroneously decoding the message. An active user is declared as inactive if $Y_{11} < \frac{\sqrt{tE_n}}{2}$. So, \edit{by the union bound}
\begin{align}
\frac{1}{M_n} \sum_{w=1}^{M_n} \Pr( \hat{W}_1 \neq w  | W_1 = w)& \leq  Q\left( \frac{\sqrt{tE_n}}{2} \right) + \edit{P_m}
\end{align}
\edit{where $P_m$ is the probability that the decoder correctly declares user 1 as active but erroneously decodes its message.} It then follows from~\eqref{Eq_err_prob_uppr} and \eqref{Eq_err_prob_uppr2} that
\begin{align}
P_{e,1} & \leq 2 Q\left( \frac{\sqrt{tE_n}}{2} \right)+ \edit{P_m}. \label{Eq_singl_usr_uppr}
\end{align}
By choosing $E_n=c_n \ln n$ with $c_n=\ln\left(\frac{n}{\ln n}\right)$, we can upper-bound \edit{$P_m$} by following the steps that led to~\eqref{Eq_prob_corrct}. Thus, \edit{for every $\dot{R}<\frac{\log e}{N_0}$, there exists a sufficiently large $n_0$ and a $0<t<1$ such that}
\begin{align}
P_m & \leq \frac{1}{n^2}, \quad n\geq n_0. \label{Eq_prob_decd}
\end{align}
Furthermore, for the above choice of $E_n$, there exists a sufficiently large $n_0'$ such that  
\begin{align}
2 Q\left( \frac{\sqrt{tE_n}}{2} \right) & \leq  \frac{1}{n^2}, \quad n\geq n'_0. \label{Eq_detect_prob_uppr}
\end{align}
Using~\eqref{Eq_prob_decd} and~\eqref{Eq_detect_prob_uppr}
in~\eqref{Eq_singl_usr_uppr}, we then obtain from~\eqref{Eq_ortho_corrct} that
\begin{align}
P_c^{(n)} &  \geq \left(1-\frac{2}{n^{2}}\right)^{\el_n} \notag \\
& \geq	\left(1-\frac{2}{n^{2}}\right)^{\frac{n}{\log n}}, \quad \edit{n \geq \max(n_0,n'_0)}\notag
\end{align}
which tends to one as $n \to \infty$. This proves Part~\ref{Thm_ortho_accs_achv}) of Theorem~\ref{Thm_ortho_accs}.

\subsubsection{Proof of Part~\ref{Thm_ortho_accs_conv})}
\edit{Recall that} we consider symmetric codes, i.e., the pair $(M_n,E_n)$ is the same for all users. However, each user may be assigned different numbers of channel uses. Let $n_i$ denote the number of channel uses assigned to  user $i$. For an orthogonal-access scheme, if $\el_n = \omega(n/ \log n)$, then there exists at least one user, say $i=1$, such that $n_i = o(\log n)$.  
Using that $H(W_1 |  W_1 \neq 0 ) = \log M_n$, it follows from Fano's inequality that
\begin{align}
\log M_n	& \leq 1+P_{e,1} \log M_n + \frac{n_1}{2 }\log\left(1+\frac{ 2 E_n}{n_1N_0}\right).  \nonumber 
\end{align}
This implies that the rate per unit-energy $\CR=(\log M_n)/E_n$ for user 1 is upper-bounded by 
\begin{align}
\CR \leq \frac{ \frac{1}{E_n} + \frac{n_1}{2 E_n}\log\left(1+\frac{ 2E_n}{n_1N_0}\right)}{1 -P_{e,1}}.\label{Eq_R_avg}
\end{align}
Since $\el_n = \omega(n/ \log n)$, it follows from Lemma~\ref{Lem_energy_bound} that $P_{e}^{(n)}$ goes to zero only if 
\begin{align}
E_n = \Omega(\log n). \label{Eq_ortho_enrg_lowr}
\end{align}
Furthermore, \eqref{Eq_R_avg}  implies  that $\CR>0$ only if 
$E_n = O(n_1)$. Since $n_1 = o(\log n)$, this further implies that 
\begin{align}
E_n = o(\log n). \label{Eq_ortho_enrg_uppr}
\end{align}
No sequence $\{E_n\}$ can satisfy both~\eqref{Eq_ortho_enrg_uppr} and~\eqref{Eq_ortho_enrg_lowr} simultaneously. We thus obtain that if $\el_n =\omega(n/ \log n)$, then the capacity per unit-energy is zero. This is Part~\ref{Thm_ortho_accs_conv}) of Theorem~\ref{Thm_ortho_accs}.

\edit{
	\subsubsection{Proof of Part~\ref{Thm_exact_ortho})}
	\label{Sec_exact_ortho}

	To show that $\CCP>0$, we use the same achievability scheme given in the proof of Part~\ref{Thm_ortho_accs_achv}) of Theorem~\ref{Thm_ortho_accs}. That is, each user is assigned $n/\ell_n$ channel uses, out of which one is used for sending a pilot signal and the rest are used for sending the message. Out of the available energy $E_n$, $tE_n$ (for some arbitrary $0<t<1$ to be determined later) is used for the pilot signal and $(1-t)E_n$ is used for sending the message. We choose $E_n = c'\log n$, where $c' = \frac{c}{1-t}$ and $c$ is chosen as in the proof of Part~\ref{Thm_exact_order}) of Theorem~\ref{Thm_nonrandom}. The probability of error in decoding user 1's message is then upper-bounded by \eqref{Eq_singl_usr_uppr}, namely,
	\begin{align}
	P_{e,1} & \leq 2 Q\left(\frac{\sqrt{t E_n}}{2}\right) +P_m \label{Eq_ortho_exct1}
	\end{align}
	where $P_m$ denotes the probability that the decoder correctly declares user 1 as active but makes an error in decoding its message.
	
	By the assumption $\ell_n = \Theta(n/\log n)$, there exist $n_0>0$ and $0<a_1\leq a_2$ such that, for $n \geq n_0$, we have $a_1 \frac{n}{\log n} \leq \ell_n \leq a_2 \frac{n}{\log n}$.  Since $\alpha_n \leq 1$, it follows that $k_n \leq a_2 \frac{n}{\log n}$ for $n \geq n_0$. By following the proof of
	Part~\ref{Thm_exact_order}) of Theorem~\ref{Thm_nonrandom}, we then obtain that one can set
	\begin{equation*}
 	\dot{R} = (1-t)\frac{\log e}{2}  \frac{\ln \bigl(1+\frac{ 2a_2 c}{(1+\rho)N_0}\bigr)}{2a_2 c}
 	\end{equation*}
	(for an arbitrary $0<\rho \leq 1$) and find a $c$ independent of $n$ and $t$ such that
	\begin{align}
	P_m & \leq \frac{1}{n}, \quad n \geq n_0. \label{Eq_ortho_exct2}
	\end{align}
	Furthermore, for $E_n=c'\log n$, the upper bound $Q(x) \leq \frac{1}{2} e^{-x^2/2}$, $x\geq 0$ yields that
	\begin{equation*}
	2 Q\left(\frac{\sqrt{t E_n}}{2}\right) \leq \exp\left[-\log n \frac{t}{1-t}\frac{c}{8}\right].
	\end{equation*}
	For every fixed $c$, the term $\frac{t}{1-t}\frac{c}{8}$ is a continuous, monotonically increasing, function of $t$ that is independent of $n$ and ranges from zero to infinity. 
	We can therefore find a $0<t<1$ such that
	\begin{align*}
	2 Q\left(\frac{\sqrt{t E_n}}{2}\right) & \leq \frac{1}{n}.
	\end{align*}
	Together with~\eqref{Eq_ortho_exct1} and~\eqref{Eq_ortho_exct2}, this implies that
	\begin{align}
	P_{e,1} & \leq \frac{2}{n}, \quad  n\geq n_0. \label{Eq_ortho_exct3}
	\end{align}
	
	The above scheme has a positive rate per unit-energy. It remains to show that this rate per unit-energy is also achievable. To this end, we note that,
	for an orthogonal-access scheme, the probability of  correct decoding is given by $P_c^{(n)} = (1-P_{e,1})^{\ell_n}$. It therefore follows from~\eqref{Eq_ortho_exct3} that
	\begin{align}
	P_c^{(n)} & \geq \left(1- \frac{2}{n} \right)^{a_2\frac{n}{\log n}},\quad  n\geq n_0. \label{Eq_ortho_exct4}
	\end{align}
	Since $ \left(1- \frac{2}{n} \right)^{n/2} \to 1/e$ and $\frac{2a_2}{\log n}\to 0$ as $n \to \infty$, the RHS of~\eqref{Eq_ortho_exct4} tends to one as $n \to \infty$. This implies that the probability of correct decoding tends to  one as $n\to\infty$, hence the rate per unit-energy is indeed achievable. Thus, if $\ell_n = \Theta(n/\log n)$, then $\CCP >0$.
	
		 We next show that $\CCP < \frac{\log e}{N_0}$. To this end, we first note that, if $\ell_n = \Theta( \frac{n}{\log n})$, and if we employ an orthogonal-access scheme, then there exists at least one user, say $i=1$, such that $n_1=O(\log n)$. That is, there exist $n_0>0$ and $a>0$ such that, for all $n\geq n_0$, we have $n_1 \leq a \log n$. Furthermore, Lemma~\ref{Lem_energy_bound} implies that, if $\ell_n = \Theta(n/\log n)$, then $P_e^{(n)}$ vanishes only if $E_n = \Omega(\log n)$. If $E_n = \omega(\log n)$, then it follows from~\eqref{Eq_R_avg} that a positive $\CR$ is achievable only if $n_1 = \omega(\log n)$, which contradicts the fact that $n_1=O(\log n)$. We can thus assume without loss of optimality that $E_n = \Theta(\log n)$, i.e., there exist $n'_0>0$ and $0<l_1\leq l_2$ such that, for all $n \geq n'_0$, we have $l_1 \log n \leq E_n \leq l_2 \log n$. Consequently, $\frac{E_n}{n_1} \geq \frac{l_1}{a}$ for $n \geq \max(n_0,n_0')$. The claim that $\CCP<\frac{\log e}{N_0}$ follows then directly from \eqref{Eq_R_avg}. Indeed, using that $\frac{\log (1+x)}{x}< \log e$ for every $x>0$, we obtain that
	\begin{align}
	\frac{n_1}{2 E_n}\log\left(1+\frac{ 2E_n}{n_1N_0}\right) & \leq \frac{a}{2l_1} \log \left(1+\frac{2l_1}{a N_0}\right) < \frac{\log e}{N_0}, \quad n \geq \max(n_0,n_0'). \label{Eq_R_ortho}
	\end{align}
	By \eqref{Eq_R_avg}, in the limit as $P_{e,1} \to 0$ and $E_n \to \infty$, the rate per unit-energy is upper-bounded by~\eqref{Eq_R_ortho}. It thus follows that $\CCP < \frac{\log e}{N_0}$, which concludes the proof of Part~\ref{Thm_exact_ortho}) of Theorem~\ref{Thm_ortho_accs}.
}

\subsection{Proof of Theorem~\ref{Thm_capac_PUPE}}

\label{sec_average}

	The proofs of Part~\ref{Thm__avg_achv_part}) and Part~\ref{Thm__avg_conv_part}) follow along the similar lines as those of Part~\ref{Thm_APE_achv_part}) and Part~\ref{Thm_APE_conv_part}) of Theorem~\ref{Thm_capac_APE}, respectively.
	
	\subsubsection{Proof of Part~\ref{Thm__avg_achv_part})} We first argue that $P_{e,A}^{(n)} \to 0$ only if  $E_n \to \infty$, and that in this case $\CC^A \leq \frac{\log e}{N_0}$. Indeed, we have
	\begin{align*}
	P_{e,A}^{(n)} & \geq \min_{i} \text{Pr}\{\hat{W}_i\neq W_i\} \\
	& \geq \alpha_n \text{Pr}(\hat{W_i} \neq W_i | W_i \neq 0) \; \text{ for some } i.
	\end{align*}
	\edit{Since $\liminf_{n \to  \infty} \alpha_n >0$}, this implies that 
	$P_{e,A}^{(n)} $ vanishes only if $ \text{Pr}(\hat{W_i} \neq W_i | W_i \neq 0)$ vanishes.  We next note that \mbox{$\text{Pr}(\hat{W_i} \neq W_i | W_i \neq 0)$} is lower-bounded by the error probability of the Gaussian single-user channel. By following the arguments presented at the beginning of the proof of Theorem~\ref{Thm_capac_APE}, we obtain that  $P_{e,A}^{(n)} \to 0$ only if $E_n \to \infty$, which also implies that $\CC^A \leq \frac{\log e}{N_0}$.
	
	For the achievability in Part~\ref{Thm__avg_achv_part}), we use an orthogonal-access scheme where each user uses an orthogonal codebook of blocklength $n/\el_n$. 
	Out of these $n/\el_n$ channel uses, the first one is used for sending a pilot signal to convey that the user is active, and the remaining channel uses are used to send the message. Specifically, the codeword $\bx_i(j)$ sent by user $i$ to convey message $j$ is  given by
	\begin{align*}
	x_{ik}(j) = \begin{cases}
	\sqrt{t E_n}, & \text{ if } k=1  \\
	\sqrt{(1-t) E_n}, & \text{ if }  k=j+1\\
	0, & \text{ otherwise}
	\end{cases}
	\end{align*}
	for some arbitrary $0 < t <1$. From the pilot signal, the receiver first detects whether the user is active or not. For this detection method, as noted before, the probability of detection error  is given by $2Q\left(\frac{\sqrt{tE_n}}{2} \right)$. Since
	\begin{equation*}
	E_n = \frac{\log M_n}{\CR} = \frac{\log ( \frac{n}{\el_n} -1)}{\CR}
	\end{equation*}
	and since $\el_n$ is sublinear in $n$, $E_n$ tends to infinity as $n \to \infty$. \edit{This} implies that the detection error vanishes as $n \to \infty$.	If $\CR < \frac{\log e}{N_0}$, then the probability of erroneously decoding the message also vanishes for this code, which follows from the proof of Theorem~\ref{Thm_capac_APE}. This proves Part~\ref{Thm__avg_achv_part}) of Theorem~\ref{Thm_capac_PUPE}.

	\subsubsection{Proof of Part~\ref{Thm__avg_conv_part})} Fano's inequality yields that $H(\hat{W}_i|W_i) \leq 1+ P_{e,i}\log M_n$. Since $H(W_i) = H_2(\alpha_n) + \alpha_n \log M_n$, we have
\begin{equation*}
H_2(\alpha_n) + \alpha_n \log M_n \leq 1+ P_{e,i}\log M_n+ I(W_i; \hat{W}_i)
\end{equation*}
for $i=1,\ldots, \el_n$. Averaging over all $i$'s then gives
\begin{align}
H_2(\alpha_n) + \alpha_n \log M_n &  \leq  1+  \frac{1}{\el_n} \sum_{i=1}^{\el_n} P_{e,i}\log M_n+ \frac{1}{\el_n}  I({\bf W}; {\bf \hat{W}}) \nonumber\\ 
&  \leq   1+P_{e,A}^{(n)}\log M_n+ \frac{1}{\el_n} I(\bW; \bY) \nonumber\\
&  \leq 1 +  P_{e,A}^{(n)} \log M_n+\frac{n}{2\el_n} \log \left(1+\frac{2 k_nE_n}{nN_0}\right). \label{Eq_avg_prob_uppr}
\end{align}
Here, the first inequality follows because the messages $W_i, i=1, \ldots, \el_n$ are independent and because conditioning reduces entropy, the second inequality follows from the definition of $P_{e,A}^{(n)}$ and the data processing inequality, and the third inequality follows from~\eqref{Eq_mutl_info_uppr}.

	Dividing both sides of \eqref{Eq_avg_prob_uppr} by $E_n$, and rearranging terms, yields the following upper-bound on the rate per unit-energy $\CR^A$:
\begin{equation}
\label{eq:Part2_Th1_end}
\PR\leq \frac{ \frac{1 - H_2(\alpha_n)}{E_n} + \frac{n}{2 \el_nE_n}\log(1+\frac{ 2k_nE_n}{nN_0})}{\alpha_n -P_{e,A}^{(n)}}.
\end{equation}
As noted before, $P_{e,A}^{(n)} \to 0$ only if $E_n \to \infty$.
It follows that  $\frac{1 - H_2(\alpha_n)}{E_n}$ vanishes as $n \to \infty$.
Furthermore, together with the assumptions $\el_n=\Omega(n)$ and
\edit{$\liminf_{n \to  \infty} \alpha_n >0$}, $E_n\to\infty$ \edit{implies} that $k_nE_n/n=\alpha_n \ell_nE_n/n$ tends to infinity as $n\to\infty$. This in turn implies that
\begin{equation*}
\frac{n}{2\ell_n E_n} \log\left(1+\frac{2 k_n E_n}{n N_0}\right)=\frac{n \alpha_n}{2 k_n E_n}\log\left(1+\frac{2k_n E_n}{n N_0}\right)
\end{equation*}
vanishes as $n\to \infty $. It thus follows from~\eqref{eq:Part2_Th1_end} that $\CR^A$ vanishes as $n\to\infty$, thereby proving  Part~\ref{Thm__avg_conv_part}) of Theorem~\ref{Thm_capac_PUPE}.


\section{\edit{Comparison With the Polyanskiy Setting of Many-Access Channels}}
\label{Sec_discuss}

\edit{In this paper, we basically follow the setting of the MnAC introduced Chen \emph{et al.} \cite{ChenCG17}. That is, we assume that each user has a different codebook and require the probability of error to vanish as $n\to\infty$. By Lemma~\ref{Lem_energy_infty}, the latter requirement can only be satisfied if $E_n\to\infty$ as $n\to\infty$, which for a fixed rate per unit-energy implies that $M_n\to\infty$. In other words, the payload of the user tends to infinity as $n\to\infty$.

In an attempt to introduce a notion of a random-access code that is appealing to the different communities interested in the multiple-access problem, Polyanskiy \cite{Polyanskiy17} proposed a different setting, where
\begin{enumerate}
\item all encoders use the same codebook;
\item the decoding is up to permutations of messages;
\item the probability of error is not required to vanish as $n\to\infty$.
\end{enumerate}
He further introduced the per-user probability of error
\begin{equation}
\label{eq:PUPE}
\frac{1}{k_n} \sum_{i=1}^{k_n} \text{Pr}\bigl(\{\hat{W}_i \neq W_i\} \cup \{\text{$W_j=W_i$ for some $j\neq i$}\}\bigr).
\end{equation}
As argued in \cite{Polyanskiy17}, the probability that two messages are equal is typically small, in which case the event \[\{\text{$W_j=W_i$ for some $j\neq i$}\}\] can be ignored and \eqref{eq:PUPE} is essentially equivalent to the APE defined in \eqref{eq_Pe_A}.

The setting where all encoders use the same codebook and decoding is up to permutations of messages is sometimes also referred to as \emph{unsourced multiple-access}. Unsourced multiple-access has two benefits: it may be more practical in scenarios with a large number of users, and many popular schemes, such as slotted ALOHA and coded slotted ALOHA, become achievability bounds and can be compared against each other and against information-theoretic benchmarks.
By not requiring the probability of error to vanish as $n\to\infty$, it is not necessary to let $M_n\to\infty$ with the blocklength. So, in the above setting, the payload can be of fixed size, which may be appealing from a practical perspective.

In \cite{Polyanskiy17}, Polyanskiy presented a random-coding achievability bound and used it as a benchmark for the performance of practical schemes, including coded slotted ALOHA, treating intereference as noise, and time-division multiple-access. He further studied the minimum energy-per-bit that can be achieved by an $(n,M_n,E_n,\epsilon)$ code for APE when each user has a different codebook, the payload $M_n$ and the probability of error $\epsilon$ are fixed, and the number of users grows linearly with the blocklength, i.e., $k_n=\mu n$ for some $0<\mu\ll 1$. The bounds obtained in \cite{Polyanskiy17} and in the follow-up work \cite{ZadikPT19} suggest that, whenever $\mu$ is below some critical value, the minimum energy-per-bit is independent of $\mu$. In other words, there exists a critical density of users below which interference-free communication is feasible. This is consistent with the conclusions we drew from Theorems~\ref{Thm_nonrandom} and \ref{Thm_random_JPE} for JPE, and from Theorems~\ref{Thm_capac_APE} and \ref{Thm_capac_PUPE} for APE. However, these theorems also demonstrate that there is an important difference: According to Theorems~\ref{Thm_nonrandom} and \ref{Thm_capac_APE}, a linear growth of the number of users in $n$ implies that the capacity per unit-energy $\CC$ is zero, irrespective of the value of $\mu$, and irrespective of whether JPE or APE is considered. Since rate per unit-energy is the reciprocal of energy-per-bit, this implies that the minimum energy-per-bit is infinite. In contrast, the bounds presented in \cite{Polyanskiy17} and \cite{ZadikPT19} show that the minimum energy-per-bit for a fixed probability of error $\epsilon$ is finite or, equivalently, that the $\epsilon$-capacity per unit-energy $\CCe$ is strictly positive. Thus, the capacity per unit-energy is strictly smaller than the $\epsilon$-capacity per unit-energy, which implies that, for APE, the strong converse does not hold.
}

\edit{In order to explore this point further, we discuss in the rest of this section how the largest achievable rate per unit-energy changes if we allow for a non-vanishing error probability. For the sake of simplicity, we shall assume throughout the section that users are active with probability one, i.e., $\alpha_n=1$.}

\edit{We first argue that, when the number of users is bounded in $n$, then a simple orthogonal-access scheme achieves an $\epsilon$-capacity per unit-energy that can even be larger than the single-user capacity per unit-energy $\frac{\log e}{N_0}$, irrespective of whether JPE or APE is assumed.} We shall do so by means of the following example.

\begin{example}
\label{Ex_finite_users}
Consider a $k$-user Gaussian MAC with normalized noise variance $N_0/2=1$ and where the number of users is independent of $n$. Suppose that each user has two messages  to transmit using energy $E_n=1$. Consider an orthogonal-access scheme where each user gets one channel use and remains silent in the remaining channel uses. In this channel use, each user transmits either $+1$ or $-1$ to convey its message. Since the access scheme is orthogonal, the receiver can perform independent decoding for each user, which yields $\Pr(\hat{W}_i \neq W_i)=Q(1)$. Consequently, we can achieve the rate per unit-energy $\frac{\log M_n}{E_n}=1$ at APE $P_{e,A}^{(n)}=Q(1)$ and at JPE $P_{e}^{(n)}=1 - (1 -Q(1))^k$. \edit{Since $\frac{\log e}{N_0}=\frac{\log e}{2}\approx 0.7213$, we conclude that, if $\epsilon\geq Q(1)$ (for APE) or $\epsilon\geq 1 - (1 -Q(1))^k$ (for JPE), then the $\epsilon$-capacity per unit-energy exceeds the single-user capacity per unit-energy.}
\end{example}

\begin{remark}
\label{remark2}
A crucial ingredient in the above scheme is that the energy $E_n$ is bounded in $n$. Indeed, it follows from~\cite[Th.~3]{PolyanskiyPV11} that,  if $E_n \to \infty$ as $n \to \infty$, as required, e.g., in~\cite[Def.~2]{Verdu90} (See Remark~\ref{remark}), then the $\epsilon$-capacity per unit-energy of the Gaussian single-user channel is equal to $\frac{\log e}{N_0}$, irrespective of \mbox{$0<\epsilon<1$}. The genie argument provided at the beginning of \edit{the} proof of Theorem~\ref{Thm_capac_APE} then yields that the same is true for the Gaussian MnAC.
\end{remark}

\edit{In the following two subsections,} we discuss the $\epsilon$-capacity per unit-energy when the number of users $k_n$ tends to infinity as $n$ tends to infinity. Specifically, in Subsection~\ref{Sec_non_vanish_JPE} we demonstrate that, irrespective of the order of growth of $k_n$, the $\epsilon$-capacity per unit-energy for JPE \edit{is the same as $\CC$, i.e., the strong converse holds in this case.}
 In Subsection~\ref{Sec_non_vanish_APE}, \edit{we consider the case where $k_n=\mu n$ and show by means of a simple example that, for some fixed payload $M_n$ and sufficiently small $\mu$, the $\epsilon$-capacity per unit-energy for APE is indeed independent of $\mu$, as suggested by the bounds in \cite{Polyanskiy17} and \cite{ZadikPT19}.}
 
\subsection{Non-Vanishing JPE}
\label{Sec_non_vanish_JPE}

The following theorem characterizes the behavior of the  $\epsilon$-capacity per unit-energy for JPE and  an unbounded number of users.
\begin{theorem}
	\label{Thm_non_vanish_JPE}
	The \mbox{$\epsilon$-capacity} per unit-energy $\CCe$ \edit{of the non-random MnAC with JPE} has the following behavior:
	\begin{enumerate}
		\item  	If $k_n=\omega(1)$ and $k_n = o(n/\log n)$, then $\CCe=\frac{\log e}{N_0}$ for every $0 < \epsilon<1$. \label{Thm_eps_part1}
		\item  If $k_n = \omega(n/ \log n)$, then $\CCe=0$ for every $0 < \epsilon<1$.\label{Thm_eps_part2}
	\end{enumerate}
\end{theorem}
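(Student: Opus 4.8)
Since every achievable rate per unit-energy is $\epsilon$-achievable, we always have $\CCe\geq\CC$, so by Theorem~\ref{Thm_nonrandom} both parts reduce to the matching strong converses: $\CCe\leq\frac{\log e}{N_0}$ under the hypotheses of Part~\ref{Thm_eps_part1}), and $\CCe\leq 0$ under those of Part~\ref{Thm_eps_part2}).

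For Part~\ref{Thm_eps_part1}), the plan is to collapse the $k_n$-user channel into a single point-to-point Gaussian channel whose energy diverges, and then apply the strong converse for its minimum energy per bit. Fix an $\epsilon$-achievable $\CR$ and a $\delta>0$; for all large $n$ there is an $(n,M_n,E_n,\epsilon)$ code with $\frac{\log M_n}{E_n}>\CR-\delta$. First replace each user's codebook $\{\bx_i(w)\}$ by the centered codebook $\{\bx_i(w)-\bmu_i\}$, where $\bmu_i$ is the average of user $i$'s codewords; since this merely shifts $\bY$ by the fixed vector $\sum_i\bmu_i$, the decoder and the value of $P_e^{(n)}$ are unchanged, while $\|\bx_i(w)-\bmu_i\|^2$ still has expectation at most $E_n$. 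Viewing $(W_1,\dots,W_{k_n})$ as a single ``super-message'', we obtain a point-to-point AWGN code with $M_n^{k_n}$ codewords, blocklength $n$, error probability $\leq\epsilon$, and \emph{average} energy at most $k_nE_n$ (the centered, independent per-user codewords contribute no cross terms). Because $k_n=\omega(1)$ and $M_n\geq 2$, the number of super-messages tends to infinity, which forces $k_nE_n\to\infty$ --- otherwise a point-to-point code of bounded energy would carry an unbounded number of bits with error bounded away from one. The strong converse for the minimum energy per bit of the Gaussian channel (\cite[Th.~3]{PolyanskiyPV11}; cf.\ Remark~\ref{remark2}) then bounds the $\epsilon$-rate per unit-energy of this super-user, which equals $\frac{k_n\log M_n}{k_nE_n}=\frac{\log M_n}{E_n}$, by $\frac{\log e}{N_0}+o(1)$. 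Hence $\CR-\delta\leq\frac{\log e}{N_0}$, and $\delta\downarrow 0$ gives $\CCe\leq\frac{\log e}{N_0}$.

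For Part~\ref{Thm_eps_part2}), the plan is to re-run the converse of Part~\ref{Thm_Infeasble_convrs}) of Theorem~\ref{Thm_nonrandom} with $P_e^{(n)}$ replaced by the constant $\epsilon$. Two facts suffice. First, the Fano bound~\eqref{Eq_R_avg1} gives, for any code with $P_e^{(n)}\leq\epsilon$, $\CR\leq\frac{1}{1-\epsilon}\bigl[\frac{1}{k_nE_n}+\frac{n}{2k_nE_n}\log(1+\frac{2k_nE_n}{nN_0})\bigr]$; since $\log(1+x)/x\to 0$, the bracket vanishes whenever $k_nE_n=\omega(n)$, so an $\epsilon$-achievable positive $\CR$ still requires $k_nE_n=O(n)$. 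Second, I would strengthen Lemma~\ref{Lem_convrs_err_prob}: inspecting the proof of Lemma~\ref{Lem_energy_bound} in Appendix~\ref{Append_prob_lemma} should show that, when $\CR>0$ and $E_n=o(\log k_n)$, one in fact has $P_e^{(n)}\to 1$; hence $P_e^{(n)}\leq\epsilon<1$ still forces $E_n=\Omega(\log k_n)$. Together these give $k_n\log k_n=O(n)$, i.e.\ $k_n=O(n/\log n)$, contradicting $k_n=\omega(n/\log n)$; so no positive rate per unit-energy is $\epsilon$-achievable and $\CCe=0$.

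The essential difficulty throughout is that the statement is a \emph{strong} converse, and a plain Fano argument applied to the MnAC carries a $1/(1-\epsilon)$ factor that cannot separate $\CCe$ from $\CC$. In Part~\ref{Thm_eps_part1}) this factor is eliminated by the super-user reduction together with the Gaussian energy-per-bit strong converse; the delicate step there is that the super-codebook obeys only an \emph{average} energy constraint (a maximal one would be of order $k_n^2E_n$ because of the triangle inequality and is useless), so one needs the average-energy form of that strong converse --- this holds, e.g.\ via the sphere-packing bound already used for the lower bound in Lemma~\ref{Lem_ortho_code}, whereas a crude expurgation to a maximal constraint would spoil the constant. In Part~\ref{Thm_eps_part2}) the analogous hurdle is to confirm that the energy bound $E_n=\Omega(\log k_n)$ survives a fixed $\epsilon$, i.e.\ that the first-moment estimate behind Lemma~\ref{Lem_convrs_err_prob} drives the error probability to one (not merely away from zero) once $E_n$ drops below order $\log k_n$; this is the main thing to check.
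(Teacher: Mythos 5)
Your Part~\ref{Thm_eps_part2}) is essentially the paper's own argument and is sound: the point you flag as ``the main thing to check'' does check out, since \eqref{Eq_prob_typ_lowr} in Appendix~\ref{Append_prob_lemma} (with $\ell_n=k_n$) gives the explicit bound $P_e^{(n)}\geq 1-\frac{256E_n/N_0+\log 2}{\log k_n}$, which drives $P_e^{(n)}\to 1$ whenever $E_n=o(\log k_n)$; combined with the Fano bound \eqref{Eq_R_avg1}, where the $1/(1-\epsilon)$ factor is harmless because all you need is $k_nE_n=O(n)$ for a positive rate, this yields $\CCe=0$ exactly as in the paper.

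Part~\ref{Thm_eps_part1}), however, has a genuine gap. The super-user reduction leaves you with a point-to-point code obeying only an \emph{average}-over-messages energy constraint, and the strong converse for the Gaussian minimum energy per bit is \emph{false} in that setting with the constant $\frac{\log e}{N_0}$: map a fraction $\epsilon$ of the super-messages to the all-zero codeword and use a good maximal-energy code for the remaining $(1-\epsilon)M$ messages; the error probability is roughly $\epsilon$, the average energy drops by the factor $1-\epsilon$, and the rate per unit average energy approaches $\frac{1}{1-\epsilon}\frac{\log e}{N_0}$. So the average-energy reduction can only ever recover the weak-converse constant $\frac{1}{1-\epsilon}\frac{\log e}{N_0}$ --- precisely the factor you set out to eliminate --- and the sphere-packing bound behind Lemma~\ref{Lem_ortho_code}, being a per-codeword (maximal-energy) bound, does not repair this. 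The fix is the observation you already made in Part~\ref{Thm_eps_part2}) but did not reuse here: since $k_n=\omega(1)$, the bound \eqref{Eq_prob_typ_lowr} forces $E_n\to\infty$ for any code with $P_e^{(n)}\leq\epsilon<1$. One then stays with a \emph{single} user: a genie revealing the other users' codewords reduces user $i$'s channel to a point-to-point Gaussian channel with the \emph{maximal} energy constraint $E_n$ and error probability at most $\epsilon$, and \cite[Th.~3]{PolyanskiyPV11} (cf.\ Remark~\ref{remark2}) gives $\frac{\log M_n}{E_n}\leq\frac{\log e}{N_0}\,(1+o(1))$. This is the paper's route; no super-user construction or centering argument is needed.
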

\begin{proof}
	We first prove Part~\ref{Thm_eps_part1}). It follows from~\eqref{Eq_prob_typ_lowr} in the proof of Lemma~\ref{Lem_energy_bound} that, for \edit{$M_n\geq 2$ and $k_n\geq 5$},
	\begin{equation}
	P_{e}^{(n)} \geq  1  -   \frac{256 E_n/N_0+\log 2}{\log  k_n}.\label{Eq_joint_lowr}
	\end{equation}
	This implies that $P_{e}^{(n)}$ tends to one unless $E_n = \Omega(\log k_n)$. Since, by the theorem's assumption, \edit{we have} $k_n=\omega(1)$, it follows that $E_n\to\infty$ is necessary to achieve a JPE strictly smaller than one. As argued in Remark~\ref{remark2} (see also Remark~\ref{remark}), if $E_n\to\infty$ as $n\to\infty$, then the $\epsilon$-capacity per unit-energy of the Gaussian MnAC cannot exceed the single-user capacity per unit-energy $\frac{\log e}{N_0}$. Furthermore, by Theorem~\ref{Thm_capac_APE}, if $k_n=o(n/\log n)$, then any rate per unit-energy satisfying $\CR<\frac{\log e}{N_0}$ is achievable, hence it is also $\epsilon$-achievable. We thus conclude that, if $k_n=\omega(1)$ and $k_n=o(n/\log n)$, then $\CCe=\frac{\log e}{N_0}$ for every $0 < \epsilon<1$.
	 
To prove Part~\ref{Thm_eps_part2}), we use the upper bound~\eqref{Eq_R_avg1}, namely
	\begin{equation}
	\CR\leq \frac{ \frac{1}{k_nE_n} + \frac{n}{2 k_nE_n}\log(1+\frac{ 2k_nE_n}{nN_0})}{1 -P_{e}^{(n)}}.\label{Eq_R_avg_JPE}
	\end{equation}
By~\eqref{Eq_joint_lowr}, $P_{e}^{(n)}$ tends to one unless $E_n = \Omega(\log k_n)$. For $k_n=\omega(n/\log n)$, this implies that \mbox{$k_nE_n/n \to \infty$} as  $n \to \infty$, so the RHS of \eqref{Eq_R_avg_JPE} vanishes as $n$ tends to infinity. We thus conclude that, if $k_n=\omega(n/\log n)$, then $\CCe = 0$ for every $0<\epsilon<1$.
\end{proof}

Theorems~\ref{Thm_nonrandom} and~\ref{Thm_non_vanish_JPE} demonstrate that 
$\CCe=\CC$ for every $0 < \epsilon < 1$, provided that the number of users is unbounded in $n$. Consequently, the strong converse holds for JPE. As argued \edit{in the proof of Theorem~\ref{Thm_non_vanish_JPE}}, this result hinges on the fact that the probability of error can be strictly smaller than one only if the energy tends to infinity as $n \to \infty$. As explained in Remarks~\ref{remark} and \ref{remark2}, in this case the capacity per unit-energy cannot exceed $\frac{\log e}{N_0}$. As we shall see in the next subsection, an APE strictly smaller than one can also be achieved at a positive rate per unit-energy if the energy is bounded in $n$. \edit{This allows for a positive $\epsilon$-capacity per unit-energy for APE when $k_n$ grows linearly in $n$.}

\subsection{Non-Vanishing APE}
\label{Sec_non_vanish_APE}

In this subsection, \edit{we focus on the case where $k_n=\mu n$ and show that, when the payload of each user is $1$ bit and $\mu\leq 1$, the $\epsilon$-capacity per unit-energy for APE is indeed independent of $\mu$. This supports the conjecture in \cite{ZadikPT19} that there exists a critical density of users below which interference-free communication is feasible.}

\edit{Let $\cE^*(M,\mu,\epsilon)$ denote the minimum energy-per-bit required  to send $M$ messages at an APE not exceeding $\epsilon$ when the number of users is given by $k_n=\mu n$. While it is difficult to obtain the exact closed form expression of $\cE^*(M,\mu,\epsilon)$ for general $M,\mu$ and $\epsilon$, tight upper and lower bounds of $\cE^*(M,\mu,\epsilon)$ were derived in~\cite{Polyanskiy17, ZadikPT19}. Furthermore, as we shall argue next, if the payload of each user is 1 bit and $\mu\leq 1$, then $\cE^*(M,\mu,\epsilon)$ can be evaluated in closed form.}

For simplicity, assume that $N_0/2=1$. Then,
\begin{equation}
\label{eq:1bit_nomu}
\cE^*(2,\mu, \epsilon) = \left( \max\{0, Q^{-1}(\epsilon)\}\right)^2, \quad 0 < \mu \leq 1
\end{equation}
where $Q^{-1}$ denotes the inverse of $Q$ function.
Indeed, that $\cE^*(2,\mu, \epsilon)\geq(\max\{0, Q^{-1}(\epsilon)\})^2$ follows from \eqref{eq:LB_P2P}. Furthermore, if $\mu\leq 1$, then we can assign each user one channel use.  Following the orthogonal-access scheme presented in Example~\ref{Ex_finite_users}, but where each user transmits either $+\sqrt{E}$ or $-\sqrt{E}$ (instead of $+1$ or $-1$) with energy $E=(\max\{0,Q^{-1}(\epsilon)\})^2$, we can achieve \edit{$P_{e,A}^{(n)}\leq\epsilon$}. \edit{Thus, with energy \eqref{eq:1bit_nomu} we can send $2$ messages at an APE not exceeding $\epsilon$.}

Observe that the RHS of~\eqref{eq:1bit_nomu} does not depend on $\mu$ and agrees with the minimum energy-per-bit required to send one bit over the Gaussian single-user channel with error probability $\epsilon$. Thus, when $\mu\leq 1$, we can send one bit free of interference.
Further observe that~\eqref{eq:1bit_nomu} is finite for every positive $\epsilon$. Consequently, the $\epsilon$-capacity per unit-energy, which is given by the reciprocal of~\eqref{eq:1bit_nomu}, is strictly positive. \edit{This is in contrast to the capacity per unit-energy which}, by Part~\ref{Thm_APE_conv_part}) of Theorem~\ref{Thm_capac_APE}, is zero. \edit{Thus,} the strong converse does not hold \edit{for APE} when the number of users grows linearly in $n$.

As mentioned \edit{in the previous subsection}, to achieve a positive rate per unit-energy, it is crucial that the energy $E_n$ and payload $\log M_n$ are bounded in $n$. Indeed, for $k_n=\mu n$, the RHS of~\eqref{eq_Part2_Th1_end1} vanishes as $E_n$ tends to infinity, \edit{in which case} no positive rate per unit-energy is $\epsilon$-achievable. Moreover, for $k_n=\mu n$ and a bounded $E_n$, \eqref{eq_rate_APE_uppr}
 implies that the payload $\log M_n$ is bounded, too. We conclude that the arguably most common assumptions in the literature on MnACs---linear growth of the number of users, a non-vanishing APE, and a fixed payload---are the only set of assumptions under which a positive rate per unit-energy is achievable, unless we consider \edit{sublinear} growths of $k_n$.

\section{Conclusion}
\label{Sec_conclusion}

In this work, we analyzed  scaling laws 
of a Gaussian random \edit{MnAC} where the total number of users as well as the average number of active users may grow with the blocklength. In particular, we characterized the behaviour of the capacity per unit-energy as a function of the order of growth of the number of users for two notions of probability of error: the classical JPE and the APE proposed by Polyanskiy in~\cite{Polyanskiy17}. For both cases, we demonstrated that  there is a sharp transition between orders of growth where  all users can achieve the single-user capacity per unit-energy and orders of growth where no positive rate per unit-energy is feasible. 
When all users are active with probability one, we showed that the  transition threshold separating the two regimes  is at the order of growth $n/\log n$ for JPE, and at the order of growth $n$ for APE. While the qualitative behaviour of the capacity per unit-energy remains the same in both cases, there are some interesting differences between JPE and APE in some other aspects. For example, we  showed that an orthogonal-access scheme together with orthogonal codebooks is optimal for APE, but it is suboptimal for JPE. Furthermore, when the number of users is unbounded in $n$, the strong converse holds for JPE, but it does not hold for APE. For MnACs where the number of users grows linearly in $n$ and APE---the most common assumptions in the literature---our results imply that a positive rate per unit-energy is infeasible if we require the APE to vanish asymptotically. In contrast, due to the absence of a strong converse, a positive $\epsilon$-rate per unit-energy is feasible. To this end, however, it is necessary that the energy $E_n$ and the payload $\log M_n$  are bounded in $n$.

 For the case of  random user activity and JPE, we characterized the behaviour of the capacity per unit-energy in terms of the total number of users $\ell_n$ and the average number of active users $k_n$. We showed that, if $k_n\log \ell_n$ is sublinear in $n$, then all users can achieve the single-user capacity per unit-energy, and if $k_n \log \ell_n$ is superlinear in $n$, then the capacity per unit-energy is zero.  Consequently, there is again a sharp transition between orders of growth where interference-free communication is feasible and orders of growth where no positive rate per unit-energy is feasible, and the transition threshold separating these two regimes depends in this case on the orders of growth of both $\ell_n$ and $k_n$.
 \emph{Inter alia}, this result recovers our characterization of the non-random-access case ($\alpha_n=1$), since  $k_n \log k_n = \Theta(n)$ is equivalent to $k_n = \Theta(n/\log n)$. 
 Our result further implies that
 the orders of growth of $\ell_n$ for which interference-free communication is feasible are in general larger than $n/\log n$, and the orders of growth of $k_n$  for which interference-free communication is feasible \edit{may be} smaller than $n/\log n$. This suggests that treating a random MnAC with total number of users $\ell_n$  and average number of users $k_n$ as a non-random MnAC with $k_n$ users \edit{may be} overly-optimistic.

We finally showed that, under JPE, orthogonal-access schemes achieve the single-user capacity per unit-energy when the order of growth of $\ell_n$ is strictly below $n/ \log n$, and they cannot achieve a positive rate per unit-energy when the order of growth of $\ell_n$ is strictly above $n/\log n$, irrespective of the behaviour of $k_n$. Intuitively, by using an orthogonal-access scheme, we treat the random MnAC as if it were non-random. We conclude that orthogonal-access schemes are optimal when all users are active with probability one. However, for general $\alpha_n$, non-orthogonal-access schemes are necessary to achieve the capacity per unit-energy.

\begin{appendices}

\section{Proof of Lemma~\ref{Lem_energy_infty}}
\label{Sec_energy_infty}
The probability of error of the Gaussian MnAC cannot be smaller than that of the Gaussian point-to-point channel. Indeed, suppose a genie informs the receiver about all transmitted codewords except that of user $i$. Then the receiver can subtract the known codewords from the received vector, resulting in a point-to-point Gaussian channel. Since access to additional information does not increase the probability of error, the claim follows.

We next note that, for a Gaussian point-to-point channel, any $(n,M_n, E_n, \epsilon)$-code satisfies~\cite[Th.~2]{PolyanskiyPV11}
\begin{align}
\frac{1}{M_n} \geq Q\left(\sqrt{\frac{2E_n}{N_0}}+Q^{-1}(1-\epsilon)\right). \label{Eq_finite_energy}
\end{align}
Solving~\eqref{Eq_finite_energy}  for $\epsilon$ yields
\begin{align}
\epsilon  &\geq 1 - 	Q\left(Q^{-1}\left(\frac{1}{M_n}\right)  - \sqrt{\frac{2E_n}{N_0}}\right). \nonumber
\end{align}
It follows that the probability of error tends to zero as $n \to \infty$ only if $Q^{-1}\left(1/M_n\right)  - \sqrt{\frac{2E_n}{N_0}} \rightarrow -\infty$.
Since $Q^{-1}\left(1/M_n\right) \geq 0$ for $M_n\geq 2 $, this in turn is only the case if
$E_n \rightarrow \infty$. This proves Lemma~\ref{Lem_energy_infty}.

\section{Proof of Lemma~\ref{Lem_ortho_code}}
\label{Sec_AWGN_ortho_code}
The upper bounds on the probability of error presented in~\eqref{Eq_orth_sinlg_uppr1} and \eqref{Eq_orth_sinlg_uppr2} are proved in Appendix~\ref{Sec_uppr}. The lower bounds are proved in Appendix~\ref{Sec_lowr}.
\subsection{Upper bounds}
\label{Sec_uppr}
An upper bound on the probability of error for $M$ orthogonal codewords of maximum energy $E$ can be found in~\cite[Sec.~2.5]{ViterbiO79}:
\begin{align}
P_{e,1} & \leq (M-1)^{\rho} \exp\left[-\frac{E}{N_0} \left (\frac{\rho}{\rho+1}\right)\right]  \notag\\
& \leq  \exp\left[-\frac{E}{N_0} \left(\frac{\rho}{\rho+1}\right)+ \rho \ln M\right], \quad  \mbox{for } 0 \leq \rho\leq 1.\label{Eq_enrgy_AWGN}
\end{align}
For the rate per unit-energy $\CR = \frac{ \log M}{E}$, it follows from~\eqref{Eq_enrgy_AWGN}
that
\begin{align}
P_{e,1}  & \leq  \exp[-E E_0(\rho, \CR)], \quad \mbox{for } 0 \leq \rho \leq 1 \label{Eq_achvbl_err_exp}
\end{align}
where
\begin{align}
E_0(\rho, \CR) & \triangleq \left(\frac{1}{N_0} \frac{\rho}{\rho+1} -\frac{\rho \CR}{\log e}\right). \label{Eq_err_exp}
\end{align}
When $ 0 < \CR \leq \frac{1}{4} \frac{\log e}{N_0} $, the maximum of $E_0(\rho, \CR)$ over all $0 \leq \rho \leq 1$ is achieved for $\rho=1$.
When $\frac{1}{4} \frac{\log e}{N_0} \leq \CR \leq  \frac{\log e}{N_0}$, the maximum of $ E_0(\rho, \CR) $  is achieved for $\rho = \sqrt{ \frac{\log e}{N_0} \frac{1}{\CR} } -1 \in[0,1]$.   
So we have
\begin{align}
\max_{0 \leq \rho \leq 1 } E_0(\rho, \CR) =
\begin{cases}
\frac{1}{2 N_0} - \frac{\CR}{\log e}, & 0 < \CR \leq \frac{1}{4} \frac{\log e}{N_0}  \\
\left(\sqrt{\frac{1}{N_0}} - \sqrt{ \frac{\CR}{\log e}}\right)^2,   & \frac{1}{4} \frac{\log e}{N_0}  \leq \CR \leq \frac{\log e}{N_0} .
\end{cases}
\label{Eq_err_exp_achv}
\end{align}

Since $E = \frac{\log M}{\CR}$, we obtain from \eqref{Eq_achvbl_err_exp} and \eqref{Eq_err_exp_achv} that
\begin{align*}
P_{e,1} \leq
& \exp\left[- \frac{\ln M}{\CR}\left(\frac{\log e}{2 N_0} - \CR \right) \right], \quad \text{if } 0 < \CR \leq \frac{1}{4} \frac{\log e}{N_0}
\end{align*}
and
\begin{align*}
P_{e,1} \leq   & \exp\left[- \frac{\ln M}{\CR}  \left(\sqrt{\frac{\log e}{N_0}} - \sqrt{ \CR}\right)^2\right], \quad \text{if } \frac{1}{4} \frac{\log e}{N_0}  \leq \CR \leq \frac{\log e}{N_0}.
\end{align*}  
This proves the upper bounds on the probability of error in~\eqref{Eq_orth_sinlg_uppr1} and \eqref{Eq_orth_sinlg_uppr2}.

\subsection{Lower bounds}
\label{Sec_lowr}

To prove the lower bounds on the probability of error presented in~\eqref{Eq_orth_sinlg_uppr1} and~\eqref{Eq_orth_sinlg_uppr2}, we first argue that, for an orthogonal codebook, the optimal probability of error is achieved by codewords of equal energy. Then, for any given $\CR$ and an orthogonal codebook where all codewords have equal energy, we derive the lower bound in~\eqref{Eq_orth_sinlg_uppr2}, which is optimal at high rates. We further obtain an improved lower bound on the probability of error for low rates. 
Finally, the lower bound in~\eqref{Eq_orth_sinlg_uppr1} follows by showing that a combination of the two lower bounds yields a lower bound, too.

\subsubsection{Equal-energy codewords are optimal}

We shall argue that, for an orthogonal code with energy upper-bounded by $E_n$, there is no loss in optimality in assuming that all codewords have energy  $E_n$. To this end,
we first note that, without loss of generality, we can restrict ourselves to codewords of the form
\begin{align}
\bx_m = (0,\ldots, \sqrt{E_{\bx_m}},\ldots,0), \quad m=1, \ldots, M \label{Eq_ortho_code}
\end{align}
where  $E_{\bx_m}\leq E_n$ denotes the energy of codeword $\bx_m$. Indeed, any orthogonal codebook can be multiplied by an orthogonal matrix to obtain this form. Since the additive Gaussian noise $\bZ$ is zero mean and has a diagonal covariance matrix, this does not change the probability of error.

To argue that equal energy codewords are optimal, let us consider a code $\cC$ for which some codewords have energy strictly less than  $E_n$. From $\cC$, we can construct a new code $\cC'$ by multiplying each codeword $\bx_m$ by $\sqrt{E_n/E_{\bx_m}}$. Clearly, each codeword in $\cC'$ has energy $E_n$. Let $\bY$ and $\bY'$ denote the channel outputs when we transmit codewords from $\cC$ and $\cC'$, respectively, and let $P_e(\cC)$ and $P_e(\cC')$ denote the corresponding minimum probabilities of error. By multiplying each dimension of the channel output $\bY'$ by $\sqrt{E_{\bx_m}/E_n}$ and adding Gaussian noise of zero mean and variance $E_n/E_{\bx_m}$, we can construct a new channel output $\tilde{\bY}$ that has the same distribution as $\bY$. Consequently, $\cC'$ can achieve the same probability of error as $\cC$ by applying the decoding rule of $\cC$ to $\tilde{\bY}$. It follows that $P_e(\cC')\leq P_e(\cC)$. We conclude that, in order to find lower bounds on the probability of error, it suffices to consider codes whose codewords all have energy $E_n$.

\subsubsection{High-rate lower bound}
\edit{We next derive lower bound \eqref{Eq_orth_sinlg_uppr2},  which applies to high rates per unit-energy. To obtain this bound,}  we follow the analysis given in~\cite{ShannonGB67} (see also~\cite[Sec.~3.6.1]{ViterbiO79}).  \edit{To this end}, we shall first derive a lower bound on the maximum probability of error
\begin{align*}
P_{e_{\max}}  & \triangleq  \max_{m} P_{e_m}
\end{align*}
where $P_{e_m}$ denotes the probability of error in decoding message $m$.
In a second step, we derive from this bound a lower bound on the average probability of error $P_{e,1}$ by means of expurgation.
For $P_{e_{\max}}$, it was shown that at least one of the 
following two inequalities is always satisfied~\cite[Sec.~3.6.1]{ViterbiO79}:
\begin{align}
1/M & \geq \frac{1}{4} \exp\left[\mu(s) - s\mu'(s) -s \sqrt{2\mu''(s)}\right]  \label{Eq_lowr_first} \\
P_{e_{\max}} & \geq \frac{1}{4} \exp\left[\mu(s) + (1-s) \mu'(s) -(1-s) \sqrt{2\mu''(s)}\right] \label{Eq_lowr_second}
\end{align}
for all $0 \leq s \leq 1$, where
\begin{align}
\mu(s) & =-\frac{E}{N_0} s(1-s), \label{Eq_const1} \\
\mu'(s) & = -\frac{E}{N_0}(1-2s),\label{Eq_const2}\\
\mu''(s) & = \frac{2E}{N_0}. \label{Eq_const3}
\end{align}
By substituting these values in \eqref{Eq_lowr_first}, we obtain
\begin{align*}
\ln M \leq \frac{E}{N_0}\left[s^2 + \frac{2s}{\sqrt{E/N_0}}+ \frac{ \ln 4}{E/N_0}\right].
\end{align*}
Using that $0 \leq s \leq 1$ and that  $E = \frac{\log M}{\CR}$, this yields
\begin{align}
\CR & \leq \frac{\log e}{N_0} \left[s^2 + \frac{2}{\sqrt{E/N_0}}+ \frac{ \ln 4}{E/N_0}\right].  \label{Eq_uppr_rate}
\end{align}
Similarly, substituting \eqref{Eq_const1}-\eqref{Eq_const3} in \eqref{Eq_lowr_second} yields 
\begin{align}
P_{e_{\max}}  & \geq \exp\left[- \frac{E}{N_0}(1-s)^2 - 2(1-s)\sqrt{\frac{E}{N_0}} - \ln 4\right] \notag\\
& \geq \exp\left[- \frac{E}{N_0}\left((1-s)^2 + \frac{2}{\sqrt {E/N_0}} +\frac{ \ln 4}{E/N_0}\right)\right]. \label{Eq_low_bnd2_err_prob}
\end{align}
For a given $E$, let $\delta_E$ be defined as  $\delta_E \triangleq 2\left(\frac{2}{\sqrt {E/N_0}} +\frac{ \ln 4}{E/N_0}\right)$,
and let $s_{E} \triangleq \sqrt{\CR \frac{N_0}{\log e}- \delta_E}$.
For $s=s_{E}$, \edit{the bound \eqref{Eq_uppr_rate}}, and hence also~\eqref{Eq_lowr_first}, is violated  which implies that \eqref{Eq_low_bnd2_err_prob} must be satisfied for $s=s_{E}$. By substituting $s=s_E$ in~\eqref{Eq_low_bnd2_err_prob}, we obtain 
\begin{align}
&P_{e_{\max}} 
\geq \exp\left[- E\left(\left(\sqrt{ \frac{1}{N_0}}- \sqrt{ \frac{\CR}{\log e}  -  \frac{\delta_E}{N_0} }\right)^2 + \frac{\delta_E}{2N_0} \right) \right]. \label{Eq_err_exp_upp}
\end{align}

We next use~\eqref{Eq_err_exp_upp} \edit{and expurgation} to derive a lower bound on $P_{e,1}$. Indeed, we divide the codebook $\cC$ with $M$ messages  into two codebooks $\cC_1$ and $\cC_2$ of $M/2$ messages each, such that  $\cC_1$ contains the codewords with the smallest probability of error $P_{e_m}$  and $\cC_2$ contains the codewords with the largest $P_{e_m}$. It then holds that each codeword  in $\cC_1$ has a probability of error satisfying  $P_{e_m} \leq 2 P_{e,1}$. Consequently, the largest error probability of code $\cC_1$, denoted as $P_{e_{\max}}(\cC_1)$,  and the average error probability 
of code $\cC$, denoted as $P_e(\cC)$, satisfy
\begin{align}
P_e(\cC) \geq \frac{1}{2} P_{e_{\max}}(\cC_1). \label{Eq_max_half}
\end{align}
Applying \eqref{Eq_err_exp_upp} for $\cC_1$, and using that the rate per unit-energy of $\cC_1$ satisfies $\CR' = \frac{\log M/2}{E} = \CR - \frac{1}{E}$, we obtain 
\begin{align*}
& P_{e_{\max}} (\cC_1)
\geq\exp\left[- E\left(\left(\sqrt{ \frac{1}{N_0}}- \sqrt{ \frac{\CR}{\log e} - \frac{1}{E}  -   \frac{\delta_E}{N_0} }\right)^2 + \frac{\delta_E}{2N_0} \right) \right].
\end{align*}
Together with \eqref{Eq_max_half}, this yields
\begin{align}
P_{e,1} & \geq \exp\left[- E\left(\left(\sqrt{ \frac{1}{N_0}}- \sqrt{ \frac{\CR}{\log e} - \frac{1}{E}  -   \frac{\delta_E}{N_0} }\right)^2 + \frac{\delta_E}{2N_0} - \frac{\ln 2}{E}  \right) \right].\label{Eq_prob_low_bnd}
\end{align}
Let $\delta'_E \triangleq \frac{1}{E}  +   \frac{\delta_E}{N_0} $. Then 
\begin{align*}
\sqrt{ \frac{\CR}{\log e} - \frac{1}{E}  -   \frac{\delta_E}{N_0} } &= \sqrt{ \frac{\CR}{\log e} - \delta'_E}\\
& =  \sqrt{ \frac{\CR}{\log e} } + O(\delta'_{E})\\
&=  \sqrt{ \frac{\CR}{\log e} } + O\left(\frac{1}{\sqrt{E}}\right)
\end{align*}
where the last step follows by noting that $O(\delta'_E)=O(\delta_E)=O(1/\sqrt{E})$. Further defining $ \delta''_E \triangleq \frac{\delta_E}{2N_0} - \frac{\ln 2}{E}$, we may write \eqref{Eq_prob_low_bnd} as
\begin{align}
P_{e,1} &\geq \exp\left[- E\left(\left(\sqrt{ \frac{1}{N_0}}- \sqrt{ \frac{\CR}{\log e}} + O\left(\frac{1}{\sqrt{E}}\right)\right)^2 + \delta''_E  \right) \right] \notag \\
&  = \exp\left[- E\left(\left(\sqrt{ \frac{1}{N_0}}- \sqrt{ \frac{\CR}{\log e}} \right)^2 +O\left(\frac{1}{\sqrt{E}} \right)\right) \right] \label{Eq_prob_low_bnd1}
\end{align}
since $O(\delta''_E)=O(\delta_E)=O(1/\sqrt{E})$. By substituting $E = \frac{\log M}{\CR}$, \eqref{Eq_prob_low_bnd1} yields
\begin{align}																						
P_{e,1}  & \geq \exp\left[- \frac{\ln M}{\CR}\left(\left(\sqrt{\frac{\log e}{N_0}}- \sqrt{\CR }\right)^2 + O\left(\frac{1}{\sqrt{E}}\right) \right) \right].\label{Eq_err_exp_lwr}
\end{align}
We can thus find a function $E\mapsto \beta'_E$
\edit{of order $O(1/\sqrt{E})$} for which the lower bound in~\eqref{Eq_orth_sinlg_uppr2} holds.

\subsubsection{ Low-rate lower bound}
\edit{We next derive a lower bound on the probability of error that applies to low rates per unit-energy. This bound will then be used later to derive the lower bound~\eqref{Eq_orth_sinlg_uppr1}. To obtain this bound,} we first derive a lower bound on $P_{e,1}$ that, for low rates \edit{per unit-energy}, is tighter than \eqref{Eq_err_exp_lwr}. This bound is based on the fact that for $M$ codewords of energy $E$, the minimum Euclidean distance $d_{\min}$ between  codewords is upper-bounded by $\sqrt{2EM/(M-1)}$~\cite[Sec.~3.7.1]{ViterbiO79}. Since, for the Gaussian channel, the maximum error probability is lower-bounded by the largest  pairwise error probability, it follows that 
\begin{align}
P_{e_{\max}} & \geq Q\left( \frac{d_{\min}}{\sqrt{2 N_0}}\right) \notag \\
& \geq Q\left(  \sqrt{ \frac{ EM}{N_0(M-1)}}\right) \notag\\
& \geq \left(1-\frac{1}{EM/(N_0(M-1))}\right) \frac{e^{-\frac{EM}{2N_0(M-1)}}}{\sqrt{2 \pi} \sqrt{ EM/(N_0(M-1))}} \label{Eq_Low_rate_low}
\end{align}	
where the last inequality follows because~\cite[Prop.~19.4.2]{Lapidoth17}
\begin{align*}
Q(\beta) \geq \left(1-\frac{1}{\beta^2}\right) \frac{e^{-\beta^2/2}}{\sqrt{2 \pi} \beta}, \quad \beta>0.
\end{align*}
Let $\beta_E \triangleq \sqrt{EM/N_0(M-1)}$. It follows that
\begin{align}
\sqrt{E/N_0} \leq \beta_E \leq \sqrt{2E/N_0}, \quad M\geq 2. \label{Eq_beta_bounds}
\end{align}
Applying \eqref{Eq_beta_bounds} to \eqref{Eq_Low_rate_low} yields 
\begin{align}	
P_{e_{\max}} & \geq \frac{1}{ \sqrt{2 \pi}} \exp\left[-E\left(\frac{1}{2N_0}\left(1+ \frac{1}{M-1}\right)\right)\right]
\exp\left[\ln \left(\frac{1}{\beta_E} - \frac{1}{\beta_E^3}\right)\right] \ \notag  \\
& \geq \frac{1}{ \sqrt{2 \pi}} \exp\left[-E\left(\frac{1}{2N_0}\left(1+ \frac{1}{M-1}\right)\right)\right]
\exp\left[-E \frac{\frac{3}{2}\ln (2E/N_0)- \ln(E/N_0 -1)}{E}\right] \notag\\
& =  \exp\left[-E\left(\frac{1}{2N_0}\left(1+ \frac{1}{M-1}\right) + O\left(\frac{\ln E }{E}\right)\right)\right].\label{Eq_lowr_prob}
\end{align}
Following similar steps of expurgation as before, we obtain from \eqref{Eq_lowr_prob} the lower bound
\begin{align}
P_{e,1} &\geq \exp\left[-E\left(\frac{1}{2N_0}\left(1+ \frac{1}{\frac{M}{2}-1}\right) + O\left(\frac{\ln E }{E}\right)\right)\right]. \notag 
\end{align}
By using that $M = 2^{\CR E}$, it follows that	
\begin{align}
P_{e,1} &\geq \exp\left[-E\left(\frac{1}{2N_0}\left(1+ \frac{1}{\frac{2^{\CR E}}{2}-1}\right) + O\left(\frac{\ln E }{E}\right)\right)\right]\label{Eq_low_any_rate} 
\end{align}
from which we obtain that, for any rate per unit-energy $\CR > 0$,
\begin{align}
P_{e,1} &\geq \exp\left[-E\left(\frac{1}{2N_0} + O\left(\frac{\ln E}{E}\right)\right)\right]. \label{Eq_low_any_rate1} 
\end{align}

\subsubsection{Combining the high-rate and the low-rate bounds}
We finally show that a combination of the lower bounds \eqref{Eq_prob_low_bnd1} and \eqref{Eq_low_any_rate} yield the lower bound in \eqref{Eq_orth_sinlg_uppr1}. 

Let $P_e^{\bot}(E,M)$ denote the smallest probability of error that can be achieved by an orthogonal codebook with $M$ codewords of energy $E$. 
We first note that $P_e^{\bot}(E,M)$ is monotonically increasing in $M$. Indeed, without loss of optimality, we can restrict ourselves to codewords of the form \eqref{Eq_ortho_code}, all having energy $E$. In this case, the probability of correctly decoding message $m$ is given by~\cite[Sec.~8.2]{Gallager68}
\begin{align}
P_{c,m}^{\bot} & = \textnormal{Pr}\biggl(\bigcap_{i\neq m} \{Y_m > Y_i\}\biggm|\bX=\bx_m\biggr) \notag \\
& =\frac{1}{\sqrt{\pi N_0}}\int_{-\infty}^{\infty} \exp\left[ \frac {(y_m - \sqrt{E})^2}{N_0}  \right] \textnormal{Pr}\biggl(\bigcap_{i\neq m}\{Y_i < y_m\}\biggm|\bX=\bx_m\biggr) d y_m \notag\\
& =\frac{1}{\sqrt{\pi N_0}}\int_{-\infty}^{\infty} \exp\left[ \frac {(y_m - \sqrt{E})^2}{N_0}  \right] \left(1-Q(y_m)\right)^{M-1} d y_m
\label{Eq_prob_ortho_corrct}
\end{align}
where $Y_i$ denotes the $i$-th component of the received vector $\bY$. In the last step of \eqref{Eq_prob_ortho_corrct}, we have used that, conditioned on $\bX=\bx_m$, the events $ \{Y_i<y_m\}$, $i\neq m$ are independent and $\textnormal{Pr}(Y_i < y_m|\bX=\bx_m)$ can be computed as $1-Q(y_m)$.
Since $P_{c,m}^{\bot}$ is the same for all $m$, we have $P_e^{\bot}(E,M) = 1 -P_{c,m}^{\bot}$. The claim then follows by observing that \eqref{Eq_prob_ortho_corrct} is monotonically decreasing in $M$.

Let $ \tilde{M}$ be the largest power of 2 less than or equal to $M$. It follows by the monotonicity of $P_e^{\bot}(E,M)$ \edit{in M} that
\begin{align}
P_e^{\bot}(E,M) \geq P_e^{\bot}(E, \tilde{M}).\label{Eq_ortho_two_code1}
\end{align} 
We next show that, for every $E_1$ and $E_2$ satisfying $E=E_1+E_2$, we have
\begin{align}
P_e^{\bot}(E, \tilde{M}) \geq P_e(E_{1},  \tilde{M}, L)P_e(E_{2}, L+1) \label{Eq_prob_prod1}
\end{align}
where $P_e(E_1,  \tilde{M}, L)$ denotes the smallest probability of error that can be achieved by a codebook with $\tilde{M}$ codewords of energy $E_1$ and a list decoder of list size $L$, and $P_e(E_{2}, L+1)$ denotes the smallest probability of error that can be achieved by a codebook with $L+1$ codewords of energy $E_2$.

To prove \eqref{Eq_prob_prod1}, we follow along the lines of \cite{ShannonGB67}, which showed the corresponding result for codebooks of a given blocklength rather than a given energy. Specifically, it was shown in \cite[Th.~1]{ShannonGB67} that, for every codebook $\cC$ with $M$ codewords of blocklength $n$, and for any $n_1$ and $n_2$ satisfying $n=n_1+n_2$, we can lower-bound the probability of error by
\begin{equation}
P_e(\cC) \geq P_e(n_1,  M,L )P_e(n_2, L+1)	 \label{Eq_prob_prod2}
\end{equation}
where $P_e(n_1, M, L )$ denotes the smallest probability of error that can be achieved by a codebook with $M$ codewords of blocklength $n_1$ and a list decoder of list size $L$, and $P_e(n_2, L+1)$ denotes the smallest probability of error that can be achieved by a codebook with $L+1$ codewords of blocklength $n_2$.
This result follows by writing the codewords $\bx_m$ of blocklength $n$ as concatenations of the vectors
\begin{align*}
\bx'_m =(x_{m,1}, x_{m,2}, \ldots, x_{m,n_1})
\end{align*}
and 
\begin{align*}
\bx''_m =(x_{m,n_1+1}, x_{m,n_1+2}, \ldots,x_{m,n_1+n_2})
\end{align*}
and, likewise, by writing the received vector $\by$ as the concatenation of the  vectors $\by'$ and $\by''$ of length $n_1$ and $n_2$, respectively. Defining $\Delta_m$ as the decoding region for message $m$ and $\Delta''_m(\by')$ as the decoding region for message $m$ when $\by'$ was received, we can then write $P_e(\cC)$ as
\begin{align}
\label{eq:SGB67}
P_e(\cC) & = \frac{1}{M} \sum_{m=1}^M \sum_{\by'} p(\by'|\bx'_m) \sum_{\by''\in\bar{\Delta}''_m}p(\by''|\bx''_m)
\end{align}
where $\bar{\Delta}''_m$ denotes the complement of $\Delta''_m$. Lower-bounding first the inner-most sum in \eqref{eq:SGB67} and then the remaining terms, one can prove \eqref{Eq_prob_prod2}.

A codebook with $\tilde{M}$ codewords of the form \eqref{Eq_ortho_code} can be transmitted in $\tilde{M}$ time instants, since in the remaining time instants all codewords are zero. We can thus assume without loss of optimality that the codebook's blocklength is $\tilde{M}$. Unfortunately, when the codewords are of the form~\eqref{Eq_ortho_code}, the above approach yields \eqref{Eq_prob_prod1} only in the trivial cases where either $E_1=0$ or $E_2=0$. Indeed, $E_1$ and $E_2$ correspond to the energies of the vectors $\bx'_m$ and $\bx''_m$, respectively, and for \eqref{Eq_ortho_code} we have $\bx'_m=\mathbf{0}$ if $m>n_1$ and $\bx''_m=\mathbf{0}$ if $m\leq n_1$, where $\mathbf{0}$ denotes the all-zero vector. We sidestep this problem by multiplying the codewords by a normalized Hadamard matrix. The Hadamard matrix, denoted by $H_{j}$, is a square matrix of size $j \times j$ with entries $\pm 1$ and has the property that all rows are orthogonal. Sylvester's construction shows that there exists a Hadamard matrix of order $j$ if $j$ is a power of 2. Recalling that $\tilde{M}$ is a power of $2$, we can thus find a normalized Hadamard matrix \[\tilde{H}\triangleq\frac{1}{\sqrt{\tilde{M}}} H_{\tilde{M}}.\] Since the rows of $\tilde{H}$ are orthonormal, it follows that the matrix $\tilde{H}$ is orthogonal. Further noting that the additive Gaussian noise $\bZ$ is zero mean and has a diagonal covariance matrix, we conclude that the set of codewords $\{\tilde{H}\bx_m,\,m=1,\ldots,\tilde{M}\}$ achieve the same probability of error as the set of codewords $\{\bx_m,\,m=1,\ldots,\tilde{M}\}$. Thus, without loss of generality, we can restrict ourselves to codewords of the form $\tilde{\bx}_m=\tilde{H}\bx_m$, where $\bx_m$ is as in \eqref{Eq_ortho_code}. Such codewords have constant modulus, i.e., $|\tilde{x}_{m,k}| = \sqrt{\frac{E}{\tilde{M}}}, k=1, \ldots,\tilde{M}$. This has the advantage that the energies of the vectors
\begin{align*}
\tilde{\bx}'_m =(\tilde{x}_{m,1}, \tilde{x}_{m,2}, \ldots, \tilde{x}_{m,n_1})
\end{align*}
and 
\begin{align*}
\tilde{\bx}''_m =(\tilde{x}_{m,n_1+1}, \tilde{x}_{m,n_1+2}, \ldots,\tilde{x}_{m,n_1+n_2})
\end{align*}
are proportional to $n_1$ and $n_2$, respectively. Thus, by emulating the proof of \eqref{Eq_prob_prod2}, we can show that for every $n_1$ and $n_2$ satisfying $\tilde{M}=n_1+n_2$ and $E_i=E n_i/\tilde{M}$, $i=1,2$, we have
\begin{equation}
\label{Eq_prob_prod3}
P_e^{\bot}(E, \tilde{M}) \geq P_e(E_{1},n_1,\tilde{M}, L)P_e(E_{2},n_2,L+1)
\end{equation}
where $P_e(E_{1},n_1,\tilde{M}, L)$ denotes the smallest probability of error that can be achieved by a codebook with $\tilde{M}$ codewords of energy $E_1$ and blocklength $n_1$ and a list decoder of list size $L$, and $P_e(E_{2}, n_2, L+1)$ denotes the smallest probability of error that can be achieved by a codebook with $L+1$ codewords of energy $E_2$ and blocklength $n_2$. We then obtain \eqref{Eq_prob_prod1} from \eqref{Eq_prob_prod3} because
\begin{equation*}
P_e(E_{1},n_1,\tilde{M}, L) \geq P_e(E_{1},\tilde{M}, L) \quad \textnormal{and} \quad P_e(E_{2},n_2,L+1) \geq P_e(E_{2},L+1).
\end{equation*}

We next give a lower bound on $ P_e(E_{1}, \tilde{M}, L)$. Indeed, for list decoding of list size $L$, the inequalities \eqref{Eq_lowr_first} and \eqref{Eq_lowr_second} can be replaced by~\cite[Lemma~3.8.1]{ViterbiO79}
\begin{align}
L/M  & \geq \frac{1}{4} \exp\left[\mu(s) - s\mu'(s) -s \sqrt{2\mu''(s)}\right] \label{Eq_lowr_first_list} \\
P_{e_{\max}} & \geq \frac{1}{4} \exp\left[\mu(s) + (1-s) \mu'(s) -(1-s) \sqrt{2\mu''(s)}\right]. \label{Eq_lowr_second_list}
\end{align}
Let $\CR_1 \triangleq \frac{ \log (M/L)}{E_{1}}$ and $\tilde{\CR}_1 \triangleq \frac{ \log (\tilde{M} /L)}{E_{1}}$. From the definition of $\tilde{M} $, we have $\tilde{M} \leq M \leq 2\tilde{M}$. Consequently,
\begin{align}
\CR_1 - \frac{1}{E_1} \leq  \tilde{\CR}_1 \leq \CR_1. \label{Eq_rate_low_high}
\end{align}
By following the steps that led to \eqref{Eq_prob_low_bnd1}, we thus obtain
\begin{align}
P_e(E_{1}, \tilde{M}, L) &\geq \exp\left[- E_{1}\left(\left(\sqrt{ \frac{1}{N_0}}- \sqrt{ \frac{\tilde{\CR}_1}{\log e}}\right)^2 + O\left(\frac{1}{\sqrt{E_1}}\right) \right) \right] \notag \\
& = \exp\left[- E_{1}\left(\left(\sqrt{ \frac{1}{N_0}}- \sqrt{ \frac{\CR_1}{\log e}}\right)^2 + O\left(\frac{1}{\sqrt{E_1}}\right) \right) \right]. \label{Eq_prob_low_bnd2_list}
\end{align}
To lower-bound $P_e(E_{2}, L+1)$, we apply \eqref{Eq_low_any_rate} with $\CR_2 \triangleq \frac{\log  (L+1)}{E_2}$. \edit{This yields}
\begin{align}
P_e(E_{2}, L+1) & \geq \exp\left[-E_2\left(\frac{1}{2N_0}\left(1+ \frac{1}{\frac{2^{\CR_2 E_2}}{2}-1}\right) + O\left(\frac{\ln E_2 }{E_2}\right)\right)\right]. \label{Eq_low_rate_lowr_bnd}
\end{align}

Let \[\Xi_1(\CR_1)\triangleq\left(\sqrt{ \frac{1}{N_0}}- \sqrt{ \frac{\CR_1}{\log e}}\right)^2\] and \[\Xi_2(\CR_2) \triangleq \frac{1}{2N_0}\left(1+ \frac{1}{\frac{2^{\CR_2 E_2}}{2}-1}\right).\] Then, by substituting \eqref{Eq_prob_low_bnd2_list} and \eqref{Eq_low_rate_lowr_bnd} in \eqref{Eq_prob_prod1}, and by using \eqref{Eq_ortho_two_code1}, we get
\begin{align}
P_e^{\bot}(E, M) & \geq  \exp\left[- E_{1}\left(\Xi_1(\CR_1)+ O\left(\frac{1}{\sqrt{E_1}}\right) \right) \right] \exp\left[- E_{2} \left(\Xi_2(\CR_2)+O\left(\frac{\ln E_2 }{E_2}\right)\right) \right]. \label{Eq_ortho_lowr}
\end{align}

Applying \eqref{Eq_ortho_lowr} with a clever choice of $E_1$ and $E_2$, we can show that the error exponent of $P_e^{\bot}(E, M)$ is upper-bounded by a convex combination of $\Xi_1(\CR_1)$ and $\Xi_2(\CR_2)$. Indeed, let $\lambda \triangleq \frac{E_{1}}{E}$. Then, \edit{\eqref{Eq_ortho_lowr} can be written as}
\begin{align}
P_e^{\bot}(E, M) \geq \exp\left[ -E \left(\lambda \Xi_1(\CR_1)+ (1 - \lambda) \Xi_2(\CR_2)+ O\left(\frac{1}{\sqrt{E}}\right) \right) \right] \label{Eq_prob_prod_low}
\end{align}
and
\begin{align*}
\frac{\log M}{E} & = \frac{\log (M/L) + \log L}{E}\\ 
& = \lambda  \frac{\log (M/L)  }{E_{1}} + (1-\lambda)\frac{\log L}{E_{2}} \\
& = \lambda \CR_1 + (1-\lambda) \CR_2.
\end{align*}

Let $\CR\triangleq\frac{\log M}{E} \leq \frac{\log e}{4N_0}$ and $\gamma_E\triangleq\min\left\{\frac{1}{\sqrt{E}},\frac{\CR}{2}\right\}$. We conclude the proof of the lower bound in \eqref{Eq_orth_sinlg_uppr1} by \edit{choosing in~\eqref{Eq_prob_prod_low}}
\edit{
\begin{align*}
\lambda_E = \lambda_E & \triangleq \frac{\CR - \gamma_E}{\frac{\log e}{4N_0} - \gamma_E}
\end{align*}
}
and the rates per unit-energy $\CR_1 = \frac{1}{4} \frac{\log e}{N_0}$ and $\CR_2 =\gamma_E$. It follows that
\begin{align}
P_e^{\bot}(E, M) & \geq \exp\left[-E \left(\frac{\lambda_E}{4N_0}+ \frac{1-\lambda_E}{2N_0}+ \frac{1-\lambda_E}{2N_0(\frac{2^{\gamma_E (1-\lambda_E)E}}{2}-1)}+O\left(\frac{1}{\sqrt{E}}\right)\right)\right]\notag\\ 
& = \exp\left[-E \left(\frac{1}{2N_0}- \frac{\lambda_E}{4N_0}+ \frac{1-\lambda_E}{2N_0(\frac{2^{\gamma_E (1-\lambda_E)E}}{2}-1)}+O\left(\frac{1}{\sqrt{E}}\right)\right)\right]. \label{Eq_low_mid_rate}
\end{align}
Noting that $\lambda_E =  \frac{\CR}{(\log e)/4N_0} + O\left(\frac{1}{\sqrt{E}}\right) $, \edit{and that $\frac{1}{2^{\gamma_E (1-\lambda_E)E}} = O(1/\sqrt{E})$,}
\eqref{Eq_low_mid_rate} can be written as
\begin{align}
P_e^{\bot}(E, M) \geq \exp \left[-E \left( \frac{1}{2 N_0} - \frac{\CR}{\log e}+ O\left(\frac{1}{\sqrt{E}}\right)\right)\right], \quad 0 < \CR \leq \frac{1}{4} \frac{\log e}{N_0}. \label{Eq_lowrtae_lowr}
\end{align}
We can thus find a function $E \mapsto \beta_E$ \edit{of order $O(1/\sqrt{E})$} for which the lower bound in~\eqref{Eq_orth_sinlg_uppr1} holds.

\section{Proof of Lemma~\ref{Lem_usr_detect}}
\label{Sec_Lem_detct_proof}

\edit{
To prove Lemma~\ref{Lem_usr_detect}, we treat the cases where $\ell_n=O(1)$ and where $\ell_n=\omega(1)$ separately. In the former case, each user is assigned an exclusive channel use to convey whether it is active or not. The probability of a detection error $P(\cD)$ can then be analyzed by similar steps as in the proof of Theorem~\ref{Thm_ortho_accs}. In the latter case, we proceed similarly as in the proof of \cite[Th.~2]{ChenCG17}. That is, we draw signatures i.i.d.\ at random according to a zero-mean Gaussian distribution, followed by a truncation step to ensure that the energy of each signature is upper-bounded by $E_n''$. The decoder then produces a vector of length $\ell_n$ with zeros and ones, where a one in the $i$-th position indicates that user $i$ is active. To this end, it chooses the vector that, among all zero-one vectors with not more than a predefined number of ones, approximates the received symbols best in terms of Euclidean distance. The probability of a detection error probability $P(\cD)$ can then be analyzed by following similar steps as in the proof of \cite[Th.~2]{ChenCG17}.}

\edit{\subsection{Bounded $\ell_n$}}

\edit{We first prove the lemma when $\el_n$ is bounded in $n$.}
In this case, one can employ a scheme where each user gets an exclusive channel use to convey whether it is active or not. For such a scheme, it is easy to show that (see the proof of Theorem~\ref{Thm_ortho_accs}) the probability of a detection error $P(\cD)$ is upper-bounded by
\begin{align*}
P(\cD) & \leq \el_n e^{-E_n'' t}
\end{align*}
for some $t > 0$. \edit{Clearly, when $\ell_n$ is bounded, we have $k_n\log\ell_n =o(n)$. Thus, the energy $E_n''$ used for detection is given by $b c_n \ln \el_n$ and tends to infinity  since $c_n \to \infty$ as $n \to\infty$.} It follows that $P(\cD)$ tends to zero as $n \to \infty$.

\edit{\subsection{Unbounded $\ell_n$}}

Next we prove Lemma~\ref{Lem_usr_detect} for the case where $\el_n \to  \infty$ as $n \to \infty$. To this end, we closely follow the proof of~\cite[Th.~2]{ChenCG17}, but with the power constraint replaced by an energy constraint.
Specifically, we analyze  $P(\cD)$  for  the user-detection scheme given in~\cite{ChenCG17}, where signatures are drawn i.i.d. according to a zero-mean Gaussian distribution.
 Note that the proof in~\cite{ChenCG17} assumes that
\begin{align}
\lim\limits_{n \to \infty} \el_n e^{-\delta k_n} =0 \label{Eq_Guo_cond}
\end{align}
for all $\delta > 0$. However, in our case this assumption is not necessary.

To show that all signatures satisfy the energy constraint, we follow the technique used in the proof of Lemma~\ref{Lem_err_expnt}. Similar to Lemma~\ref{Lem_err_expnt}, we denote by $\tilde{q}(\cdot)$  the probability density function of a zero-mean Gaussian random variable with variance $E_n''/(2n'')$. We further let
\begin{align*}
\tilde{\bq}(\bu) & = \prod_{i=1}^{n''} \tilde{q}(u_i), \quad \bu =(u_1,\ldots,u_{n''})
\end{align*}
and
\begin{align*}
\bq(\bu) & = \frac{1}{\mu} \I{ \|\bu\|^2 \leq E_n''} \tilde{\bq}(\bu) 
\end{align*}
where
\begin{align*}
\mu & = \int \I{ \|\bu\|^2 \leq E_n''} \; \tilde{\bq}(\bu)  d \bu
\end{align*}
is a normalizing constant.  
Clearly, any vector $\bS_i$ distributed according to $\bq(\cdot)$ satisfies the energy constraint $E''_n$ with probability one. 
For any index set $\cI \subseteq \{1,\ldots, \el_n\}$, let the matrices $\underline{\bS}_{\cI}$ and $ \tilde{\underline{\bS}}_{\cI}$ denote the set of signatures for the users in $\cI$ that are distributed respectively as
\begin{align*}
\underline{\bS}_{\cI} & \sim \prod_{i\in I} \bq(\bS_i)
\end{align*}
and
\begin{align*}
\tilde{\underline{\bS}}_{\cI} & \sim \prod_{i\in I} \tilde{\bq}(\bS_i).
\end{align*}
As noted in the proof of Lemma~\ref{Lem_err_expnt}, we have
\begin{align}
\bq(\bs_i) & \leq \frac{1}{\mu} \tilde{\bq}( \bs_i). \label{Eq_prob_signt_uppr}
\end{align}

To analyze the detection error probability, we first define the $\el_n$-length vector $\bD^a$ as 
\begin{align*}
\bD^a \triangleq ( \I{W_1\neq0}, \ldots, \I{W_{\el_n} \neq 0}).
\end{align*}
\edit{
For some $c''>0$, let
\begin{align*}
v_n \triangleq k_n(1 + c'').
\end{align*}
}
Further let
\begin{align*}
\cB^n(v_n) \triangleq \{ \bd \in \{0,1 \}^{\el_n} : 1 \leq  |\bd| \leq v_n \}
\end{align*}
where $|\bd|$ denotes the number of $1$'s in $\bd$. We denote by $\bS^a$  the matrix of signatures of all users, which are generated independently according to $\bq(\cdot)$, and 
we denote by $\mathbf{Y}^a$ the first $n''$ received symbols, based on which the receiver performs user detection. The receiver outputs the $\hat{\bd}$ given by
\begin{align}
\hat{\bd} = \mathrm{ arg\,min}_{ \bd \in \cB^n(v_n) } \| \bY^a - \bS^a \bd \| \label{Eq_decod_rule}
\end{align}
as a length-$\el_n$ vector \edit{guessing} the set of active users. \edit{By the union bound}, the probability of a detection error $P(\cD)$ is upper-bounded by
\begin{align}
P(\cD) 
& \leq \Pr(|\bD^a| > v_n ) + \sum_{\bd \in \cB^n(v_n)} \Pr(\cE_d|\bD^a = \bd) \Pr(\bD^a = \bd) + \Pr(\cE_d| |\bD^a| = 0) \Pr(|\bD^a| = 0) \label{Eq_detect_err_uoor}
\end{align}
where \edit{$|\bD^a|$ denotes the number of $1$'s in $\bD^a$ and
$\cE_d$  denotes the event that there is a detection error}.  Next, we show that each term on the RHS of~\eqref{Eq_detect_err_uoor} vanishes as $n \to \infty$. 

Using the Chernoff bound for the binomial distribution, we have
\edit{
\begin{align}
\Pr(|\bD^a| > v_n)  & \leq \exp(-k_n c''/3)
\end{align}
which vanishes as $n \to \infty$ if $k_n$ is unbounded. For bounded $k_n$,  this probability of error vanishes by first letting $n \to \infty$ and then letting $c'' \to \infty$.
}

We continue with the term $\Pr(\cE_d|\bD^a = \bd)$. For a given $\bD^a = \bd$, let $\kappa_1$ and $\kappa_2$ denote the number of miss detections and false alarms, respectively, i.e.,
\begin{align*}
\kappa_1 &= |\{ j: d_j \neq 0, \hat{d}_j = 0 \} |\\
\kappa_2 &= |\{ j: d_j = 0, \hat{d}_j \neq 0 \} |
\end{align*}
where $d_j$ and $\hat{d}_j$ denote the $j$-th components of the \edit{vectors $\bd$ and $\hat{\bd}$, respectively}. An error happens only if either $\kappa_1$, or $\kappa_2$, or both are strictly positive. The number of users that are either active or are declared as active by the receiver satisfies $|\bd|+\kappa_2 = |\hat{\bd}|+\kappa_1$, so
\begin{align*}
|\bd|+ \kappa_2 & \leq v_n + \kappa_1
\end{align*}
since $|\hat{\bd}|$ is upper-bounded by $v_n$ by the decoding rule~\eqref{Eq_decod_rule}.
So, the pair $(\kappa_1, \kappa_2)$ belongs to the following set:
\begin{align}
\cW^{\el_n}_{\bd} = & \left\{(\kappa_1,\kappa_2): \kappa_1 \in \{0,1,\ldots, |\bd| \}, \kappa_2 \in \{0,1,\ldots,v_n\},   \kappa_1+\kappa_2 \geq 1, |\bd|+\kappa_2 \leq v_n + \kappa_1 \right\}. \label{Eq_decision_sets}
\end{align}
Let $\Pr(\cE_{\kappa_1,\kappa_2}|\bD^a = \bd)$ be the probability of having  exactly $\kappa_1$  miss detections  and $\kappa_2$ false alarms when $\bD^a = \bd$. 
For  given $ \bd$ and $\hat{\bd}$, let $\cA^* \triangleq \{j : d_j \neq 0 \}$ and $\cA \triangleq \{j : \hat{d}_j \neq 0 \}$. We further define $\cA_1 \triangleq \cA^* \setminus \cA$, $\cA_2 \triangleq \cA \setminus \cA^*$, and 
\begin{align*}
T_{\cA} & \triangleq \|\bY^a - \sum_{j \in \cA} \bS_j \|^2 -  \|\bY^a - \sum_{j \in \cA^*} \bS_j \|^2.
\end{align*}
Using the analysis that led to~\cite[eq.~(67)]{ChenCG17}, we obtain
\begin{align}
\Pr(\cE_{\kappa_1,\kappa_2}|\bD^a = \bd) & \leq \binom{|\cA^*|}{\kappa_1} \binom{\el_n}{\kappa_2} \mathrm{E}_{\underline{\bS}_{\cA^*}, \bY} \{ [\mathrm{E}_{\underline{\bS}_{\cA_2}}\{ \I{T_{\cA} \leq 0} |\underline{\bS}_{\cA^*}, \bY \}]^{\rho}|\} \notag \\
& \leq \binom{|\cA^*|}{\kappa_1} \binom{\el_n}{\kappa_2}  \left(\frac{1}{\mu} \right)^{\rho \kappa_2} \mathrm{E}_{\underline{\bS}_{\cA^*}, \bY} \{ [\mathrm{E}_{\underline{\tilde{\bS}}_{\cA_2}}\{ \I{T_{\cA} \leq 0} |\underline{\bS}_{\cA^*}, \bY \}]^{\rho}\} \notag\\
& \leq \binom{|\cA^*|}{\kappa_1} \binom{\el_n}{\kappa_2} \left(\frac{1}{\mu}\right)^{|\cA^*|} \left(\frac{1}{\mu} \right)^{\rho \kappa_2} \mathrm{E}_{\underline{\tilde{\bS}}_{\cA^*}, \bY} \{ [\mathrm{E}_{\underline{\tilde{\bS}}_{\cA_2}}\{ \I{T_{\cA} \leq 0} |\underline{\tilde{\bS}}_{\cA^*}, \bY \}]^{\rho} \} \label{Eq_detect_err_new_distr2}
\end{align}
where in the second inequality we used that 
\begin{align}
\bq( \underline{\bs}_{\cA_2}) \leq \left( \frac{1}{\mu}\right)^{\kappa_2}\prod_{i \in \cA_2} \tilde{\bq}(\bs_i ) \label{Eq_Q_uppr1}
\end{align}
and in the third inequality we used that  
\begin{align}
\bq(\underline{\bs}_{\cA^*}) \leq \left( \frac{1}{\mu}\right)^{|\cA^*|}\prod_{i \in \cA^*} \tilde{\bq}(\bs_i ). \label{Eq_Q_uppr2}
\end{align}
Here, \eqref{Eq_Q_uppr1} and~\eqref{Eq_Q_uppr2} follow from~\eqref{Eq_prob_signt_uppr}.

 For every $\rho \in [0,1]$ and $\lambda \geq 0$, we obtain 	 from~\cite[eq.~(78)]{ChenCG17} that
 \edit{
\begin{align}
\binom{|\cA^*|}{\kappa_1} \binom{\el_n}{\kappa_2} \mathrm{E}_{\underline{\tilde{\bS}}_{\cA^*}, \bY} \{ [\mathrm{E}_{\underline{\tilde{\bS}}_{\cA_2}}\{ \I{T_{\cA} \leq 0} |\underline{\tilde{\bS}}_{\cA^*}, \bY \}]^{\rho} \}& \leq \exp[ -\tilde{E}_n g^n_{\lambda, \rho}(\kappa_1,\kappa_2, \bd) ] \label{Eq_detect_prob_old_distrb}
\end{align}
}
where 
\begin{align}
\tilde{E}_n  \triangleq & E_n''/2, \notag \\
g^n_{\lambda, \rho}(\kappa_1,\kappa_2, \bd)  \triangleq & -\frac{(1-\rho)n''}{2 \tilde{E}_n} \log (1+\lambda \kappa_2 \tilde{E}_n/n'') + \frac{n''}{2\tilde{E}_n} \log \left(1+ \lambda(1-\lambda \rho)\kappa_2 \tilde{E}_n/n'' + \lambda \rho (1-\lambda \rho) \kappa_1 \tilde{E}_n/n''\right) \notag \\
&  - \frac{|\bd|}{\tilde{E}_n} H_2\left(\frac{\kappa_1}{|\bd|}\right) - \frac{\rho \el_n}{\tilde{E}_n} H_2\left(\frac{\kappa_2}{\el_n}\right). \label{Eq_err_exp}
\end{align}
It thus follows from~\eqref{Eq_detect_err_new_distr2} and \eqref{Eq_detect_prob_old_distrb} that
\begin{align}
\Pr(\cE_{\kappa_1,\kappa_2}|\bD^a = \bd) & \leq \left(\frac{1}{\mu}\right)^{|\cA^*| + \rho\kappa_2}  \exp[ -\tilde{E}_n g^n_{\lambda, \rho}(\kappa_1,\kappa_2, \bd) ].\label{Eq_err_mu_uppr}
\end{align}

\edit{We next} show that the RHS of~\eqref{Eq_err_mu_uppr} vanishes as $n \to \infty$. To this end,  we first show that $\left(\frac{1}{\mu}\right)^{|\cA^*| + \rho\kappa_2} \to 1$ as $n \to \infty$ uniformly in $(\kappa_1, \kappa_2) \in \cW^{\el_n}_{\bd}$ and $\bd \in \cB^n(v_n) $. 
From the definition of $\mu$, we have
\begin{align*}
\mu & = 1 - \Pr\left(\|\tilde{\bS}_1\|_2^2 > E_n''\right).
\end{align*}
Furthermore, by defining $\tilde{\bS}_0 \triangleq \frac{2 n''}{E_n''} \|\tilde{\bS}_1\|_2^2$ and  following the steps that led to~\eqref{Eq_mu_uppr2}, we obtain that, \edit{for  $(\kappa_1, \kappa_2) \in \cW^{\el_n}_{\bd}$ and $\bd \in \cB^n(v_n) $},
\begin{align}
1 & \leq \left(\frac{1}{\mu}\right)^{|\cA^*| + \rho\kappa_2} \notag \\
& \leq \edit{\left(\frac{1}{\mu}\right)^{2 v_n}} \notag\\
& \leq  \edit{\left(1 - \exp \left[-\frac{n''}{2} \tau \right]\right)^{-2 v_n}} \label{Eq_mu_uppr}
\end{align}
where $\tau = (1 - \ln 2)$. Here, in the second inequality we used that \edit{$|\cA^*| =|\bd| \leq v_n$ and $\rho \kappa_2 \leq v_n$.
Since $k_n \log \el_n = O(n)$, we have $k_n = o(n)$. This implies that, for every fixed $c''>0$, we have $v_n = o(n)$ because $v_n = \Theta(k_n)$.}
 Furthermore, $n'' = \Theta(n)$. 
As noted before, for any two non-negative sequences $\{a_n\}$ and $\{b_n\}$ satisfying $a_n\to 0$ and $a_nb_n \to 0$ as $n \to \infty$, it holds that $(1-a_n)^{-b_n} \to 1$ as $n \to \infty$. It follows that the RHS of~\eqref{Eq_mu_uppr} tends to one as $n \to \infty$ uniformly in $(\kappa_1, \kappa_2)\in \cW_{\bd}^{\el_n}$ and $\bd \in \cB^n(v_n)$. Consequently, there exists a positive constant $n_0$ that is independent of $\kappa_1$, $\kappa_2$, and $\bd$ and satisfies 
\begin{align}
\left( \frac{1}{\mu} \right)^{|\cA^*| + \rho\kappa_2} & \leq 2, \quad (\kappa_1, \kappa_2)\in \cW_{\bd}^{\el_n}, \bd \in \cB^n(v_n), n \geq n_0. \label{Eq_mu_lowr}
\end{align}

\edit{To bound the exponential term on the RHS of \eqref{Eq_err_mu_uppr}, we need the following lemma.
\begin{lemma}
	\label{Lem_err_exp}
	If $k_n \log \ell_n = O(n)$, and if $c'$ and $c''$ are sufficiently large, then there exist two positive constants $\gamma >0$ and $n'_0$ such that $g^n_{\frac{2}{3}, \frac{3}{4}}(\kappa_1,\kappa_2, \bd) $, i.e., \eqref{Eq_err_exp} evaluated at $\lambda=2/3$ and $\rho=3/4$, satisfies
	\begin{align}
	\min_{\bd \in \cB^n(v_n)} \min_{(\kappa_1,\kappa_2) \in \cW^{\el_n}_{\bd}} g^n_{\frac{2}{3}, \frac{3}{4}}(\kappa_1,\kappa_2, \bd)  \geq \gamma, \quad n \geq n'_0. \label{Eq_detect_err_exp}
	\end{align}
\end{lemma}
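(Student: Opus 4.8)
The plan is to adapt the error‑exponent estimates from the proof of \cite[Th.~2]{ChenCG17} to the energy‑constrained setting; the structural difference is that here the signature energy $E_n''=bE_n$ grows like $c_n\ln\el_n$ rather than being proportional to the signature length $n''=bn$, which is exactly what makes the Chen--Chen--Guo side condition \eqref{Eq_Guo_cond} unnecessary. First I would substitute $\lambda=2/3$ and $\rho=3/4$ into the definition \eqref{Eq_err_exp} of $g^n_{\lambda,\rho}$. With these values the coefficients $1-\rho=\tfrac14$, $\lambda(1-\lambda\rho)=\tfrac13$ and $\lambda\rho(1-\lambda\rho)=\tfrac14$ are all strictly positive and, writing $\beta_n\triangleq\tilde E_n/n''=E_n/(2n)$, the exponent becomes
\begin{equation*}
g^n_{\frac23,\frac34}(\kappa_1,\kappa_2,\bd)=\frac{1}{2\beta_n}\Bigl[\log\bigl(1+\tfrac13\kappa_2\beta_n+\tfrac14\kappa_1\beta_n\bigr)-\tfrac14\log\bigl(1+\tfrac23\kappa_2\beta_n\bigr)\Bigr]-\frac{|\bd|\,H_2(\kappa_1/|\bd|)}{\tilde E_n}-\frac{\tfrac34\,\el_n H_2(\kappa_2/\el_n)}{\tilde E_n},
\end{equation*}
where the bracketed difference of logarithms is the positive ``reward'' coming from the i.i.d.\ Gaussian random‑coding exponent, and the two entropy terms are the ``penalties'' coming from the union bound over the $\binom{|\cA^*|}{\kappa_1}\binom{\el_n}{\kappa_2}$ miss/false‑alarm patterns.

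Next I would bound reward and penalties separately by explicit functions of $\kappa_1,\kappa_2$. The inequality $H_2(p)\le p\log(e/p)$ gives $|\bd|H_2(\kappa_1/|\bd|)\le\kappa_1\log(e\el_n)$ and $\el_nH_2(\kappa_2/\el_n)\le\kappa_2\log(e\el_n)$ (using $\kappa_1\le|\bd|\le\el_n$ and $\kappa_2\le\el_n$), so both penalties are at most $O\bigl((\kappa_1+\kappa_2)\log(e\el_n)/\tilde E_n\bigr)=O\bigl((\kappa_1+\kappa_2)/c_n\bigr)$ because $\tilde E_n=(b/2)c_n\ln\el_n$ and $\log(e\el_n)=\Theta(\ln\el_n)$ as $\el_n\to\infty$. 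For the reward I would split according to the size of $s\triangleq\tfrac13\kappa_2\beta_n+\tfrac14\kappa_1\beta_n$, which by the hypothesis $k_n\log\el_n=O(n)$ satisfies $s\le\tfrac7{12}v_n\beta_n=O(1)$ (recall $v_n=k_n(1+c'')$): when $s$ stays below a fixed threshold, the bounds $\log(1+x)\ge\frac{x}{1+x}\log e$ and $\log(1+x)\le x\log e$ applied to the first and second logarithm, respectively, show that the bracket is at least a positive multiple of $\beta_n(\kappa_1+\kappa_2)$, so that the reward is $\Omega(\kappa_1+\kappa_2)$; when $s$ is above the threshold, the first logarithm dominates and the reward is $\Omega\bigl((\kappa_1+\kappa_2)\log(c_0s)/s\bigr)$ for some constant $c_0>0$. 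In either case, choosing $c_n$ large enough (equivalently $c'$ large enough when $k_n\log\el_n=\Theta(n)$, and automatically since $c_n\to\infty$ when $k_n\log\el_n=o(n)$) makes the reward exceed the penalty by a fixed margin, yielding $g^n_{\frac23,\frac34}(\kappa_1,\kappa_2,\bd)\ge\gamma>0$ uniformly over $\bd\in\cB^n(v_n)$, $(\kappa_1,\kappa_2)\in\cW^{\el_n}_{\bd}$ and $n\ge n_0'$.

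The delicate point will be uniformity when $\beta_n=E_n/(2n)$ does not tend to zero, which occurs precisely in the boundary case $k_n\log\el_n=\Theta(n)$ with $k_n$ bounded (so that $\el_n$ is exponentially large and $E_n=c'\ln\el_n=\Theta(n)$). There $\kappa_1$ and $\kappa_2$ are themselves bounded, the statement reduces to a finite optimization, and one must verify that the resulting $\gamma$, which may then depend on $c'$, stays strictly positive; this is where keeping the positive logarithm at four times the weight of the negative one, i.e., the specific choice $\lambda=2/3,\rho=3/4$, matters. Related to this is the interplay between $c'$ and $c''$: since $v_n\beta_n=\Theta((1+c'')c')$ in that regime, $c'$ must be taken large relative to $c''$, so ``sufficiently large'' should be understood as ``$c''$ fixed first, then $c'$ chosen large enough''. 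Finally I would treat separately the degenerate boundary $\kappa_1=|\bd|$ of $\cW^{\el_n}_{\bd}$, where the first entropy penalty vanishes; there one keeps $g^n$ bounded below using the reward term $\lambda\rho(1-\lambda\rho)\kappa_1\beta_n=\tfrac14|\bd|\beta_n$ inside the first logarithm, together with the constraint $|\bd|+\kappa_2\le v_n+\kappa_1$ from \eqref{Eq_decision_sets} to control $\kappa_2$ when $\kappa_1$ is large. A last verification of the numerical constants and of the elementary inequalities ($\log(1+x)\ge\frac{x}{1+x}\log e$, $H_2(p)\le p\log(e/p)$) would then complete the proof.
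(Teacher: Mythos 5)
Your argument is sound and reaches the paper's conclusion by a structurally different route. The paper first splits the positive logarithm via $2\log(1+A+B)\geq\log(1+A)+\log(1+B)$, so that $g^n_{\lambda,\rho}(\kappa_1,\kappa_2,\bd)\geq a^n_{\lambda,\rho}(\kappa_1,\bd)+b^n_{\lambda,\rho}(\kappa_2)$ decomposes into a $\kappa_1$-only and a $\kappa_2$-only part; it then treats the three sub-cases ($\kappa_2=0$, $\kappa_1=0$, both positive), writing each part as a main factor times one minus a ratio and controlling the ratio through the monotonicity of $x\mapsto\log(1+x)/x$ and the estimate $(p-1)\ln(1-p)\leq p$, which is the same inequality underlying your bound $H_2(p)\leq p\log(e/p)$. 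You instead keep the combined logarithm and run a single reward-versus-penalty comparison in which both sides are linear in $\kappa_1+\kappa_2$; since $\kappa_1+\kappa_2\geq 1$ on $\cW^{\el_n}_{\bd}$, a positive per-unit margin immediately yields a uniform $\gamma>0$. This buys a leaner case structure (no splitting inequality, no three sub-cases) at the price of a joint lower bound on the combined logarithm, which your threshold argument on $s$ handles correctly: any threshold $s_0<1$ works in the small-$s$ branch, because $\tfrac{1/3}{1+s_0}>\tfrac16$ is exactly what is needed to absorb the $-\tfrac14\log(1+\tfrac23\kappa_2\beta_n)$ term. Your two asymptotic regimes ($c_n\to\infty$ automatically when $k_n\log\el_n=o(n)$; $c''$ fixed first and then $c'$ large when $k_n\log\el_n=\Theta(n)$) coincide with the paper's Case 1/Case 2 dichotomy, and the delicate point you flag does resolve in your favour: in the $\Theta(n)$ regime the reward per unit of $\kappa_1+\kappa_2$ degrades like $\log(v_n\beta_n)/(v_n\beta_n)$ while the penalty decays like $1/c'$, so their ratio grows like $\log c'$ --- the same mechanism by which the paper picks $c'$ so that $\log\bigl(1+\lambda\rho(1-\lambda\rho)v_n\tilde{E}_n/n''\bigr)$ exceeds the fixed constants it calls $a_2$ and $a_3$. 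One minor remark: the boundary $\kappa_1=|\bd|$ needs no special care, since there $H_2(\kappa_1/|\bd|)=0$ and the corresponding penalty simply disappears.
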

\begin{proof}
	See Appendix~\ref{Sec_err_exp}.
\end{proof}
}
\edit{Lemma~\ref{Lem_err_exp}}  implies that $\Pr(\cE_{\kappa_1,\kappa_2}| \bD^a=\ \bd)$ vanishes as $n \to\infty$ uniformly in $\bd \in \cB^n(v_n)$. Indeed, if $\bd \in \cB^n(v_n)$, then $|\bd| \leq v_n$, which implies that $\kappa_1 \leq v_n$. Furthermore, since the decoder outputs a vector in $\cB^n(v_n)$, we also have $\kappa_2 \leq v_n$.  It thus follows from \eqref{Eq_err_mu_uppr}, \eqref{Eq_mu_lowr}, and \eqref{Eq_detect_err_exp} that 
\begin{align}
\Pr(\cE_d| \bD^a=\ \bd) &= \sum_{(\kappa_1,\kappa_2) \in \mathcal{W}_{d}^{\ell_n}} \Pr(\cE_{\kappa_1,\kappa_2}| \bD^a=\ \bd) \notag \\
& \leq 2  v_n^2 \exp[-\tilde{E}_n \gamma ] \notag \\
& = 2 \exp\left[ -\tilde{E}_n \left(\gamma - \frac{ 2 \ln v_n}{\tilde{E}_n}\right)\right], \quad \bd \in \cB^n(v_n), n \geq \max(n_0,n'_0). \label{Eq_detect_err_uppr}
\end{align}
By the definition of $v_n$ and $\tilde{E}_n$,
\edit{
\begin{equation}
\frac{2\ln v_n}{\tilde{E}_n} =  \frac{4  \ln (1+c'')}{b c_n \ln \el_n} +  \frac{ 4  \ln k_n}{b c_n \ln \el_n}. \label{Eq_sn_En2}
\end{equation}
The first term on the RHS of~\eqref{Eq_sn_En2} vanishes as $n \to \infty$ since $\ell_n$ is unbounded. The second term on the RHS of~\eqref{Eq_sn_En2} is upper-bounded by $4/(b c_n)$ since $\ln k_n \leq \ln \ell_n$. This vanishes as $c_n\to\infty$ (which is the case when $k_n\log\ell_n=o(n)$), or it can be made arbitrarily small by choosing $c_n=c'$ sufficiently large (when $k_n\log\ell_n=\Theta(n)$). Consequently, we obtain that $\frac{2\ln v_n}{\tilde{E}_n} < \gamma$ for sufficiently large $n$ and $c_n$, which implies that the RHS of~\eqref{Eq_detect_err_uppr} tends to zero as $n \to \infty$.
}

We finish the proof of Lemma~\ref{Lem_usr_detect} by analyzing the third term on the RHS of~\eqref{Eq_detect_err_uoor}, namely,  $\Pr(\cE_d| |\bD^a| = 0) \Pr(|\bD^a| = 0)$. \edit{Since $ \Pr(|\bD^a| = 0)= \left((1-\alpha_n)^{\frac{1}{\alpha_n}}\right)^{k_n}$, this term is given by}
\begin{align*}
 \Pr(\cE_d| |\bD^a| = 0) \left((1-\alpha_n)^{\frac{1}{\alpha_n}}\right)^{k_n}
\end{align*}
 and vanishes if $k_n$ is unbounded. Next we show that this term also vanishes when  $k_n$ is bounded. When $|\bD^a| = 0$, an error occurs only if there are false alarms. For $\kappa_2$ false alarms, let  $\bar{\bS} \triangleq \sum_{j=1}^{\kappa_2} \bS_j$, and let $S'_i$ denote the $i$-th component of $\bar{\bS}$. From~\cite[eq.~(303)]{ChenCG17},  we obtain  the following upper bound on the probability that there are $\kappa_2$ false alarms when $|\bD^a| = 0$:
\begin{align*}
P(\cE_{\kappa_2}|  |\bd| =0) & \leq  \binom{\el_n}{\kappa_2} \mathrm{E}_{\underline{\bS}_{\cA_2}} \left[ \Pr\left\{ \sum_{i=1}^{n''} Z_i S'_i\geq \frac{1}{2} \|\bar{\bS} \|^2 \right\} \bigg| \bar{\bS} \right] \notag  \\
& \leq \left(\frac{1}{\mu}\right)^{\kappa_2}  \binom{\el_n}{\kappa_2} \mathrm{E}_{\tilde{\underline{\bS}}_{\cA_2}} \left[ \Pr\left\{ \sum_{i=1}^{n''} Z_i S'_i \geq \frac{1}{2} \|\bar{\bS} \|^2 \right\} \bigg|\bar{\bS} \right]
\end{align*}
where in the last inequality we used~\eqref{Eq_prob_signt_uppr}. By following the analysis that led to~\cite[eq.~(309)]{ChenCG17}, we obtain
\begin{align}
P(\cE_{\kappa_2}|  |\bd| =0)  & \leq  \left(\frac{1}{\mu}\right)^{\kappa_2}   \exp \left[ -\tilde{E}_n (q'_n(\kappa_2) - u_n'(\kappa_2) ) \right] \notag
\end{align}
where 
\begin{align}
q'_n(\kappa_2) &\triangleq  \frac{n''}{2\tilde{E}_n}\log \left( 1+ \frac{\kappa_2 \tilde{E}_n}{4 n''} \right) \notag
\end{align}
and 
\begin{align}
u'_n(\kappa_2) & \triangleq  \frac{ \el_n}{\tilde{E}_n} H_2\left(\frac{\kappa_2}{\el_n}\right). \notag
\end{align}
\edit{As in~\eqref{Eq_mu_lowr}}, we upper-bound $ \left(\frac{1}{\mu}\right)^{\kappa_2} \leq 2$ uniformly in $\kappa_2$  for $n \geq n_0$. Furthermore, we observe that the behaviours of $q'_n(\kappa_2)$ and $u'_n(\kappa_2)$ are similar to $q_n(\kappa_2)$ and $v_n(\kappa_2)$ given \edit{later} in~\eqref{Eq_qn} and in~\eqref{Eq_sn}, respectively. \edit{So by following those steps}, we can show that
\begin{align}
\liminf_{n \to \infty} \min_{1 \leq \kappa_2 \leq v_n} q'_n(\kappa_2) > 0  \notag
\end{align}
and 
\begin{align}
\lim_{n \to \infty}   \min_{1 \leq \kappa_2 \leq v_n} \frac{u'_n(\kappa_2)}{ q'_n(\kappa_2)} <1. \notag
\end{align}
It follows that there exist positive constants $\tau'$ and $ \tilde{n}_0$ such that
\begin{align}
P(\cE_d |  |\bd| =0)  & = \sum_{\kappa_2=1}^{v_n}P(\cE_{\kappa_2} |  |\bd| =0)\\
& \leq 2 v_n \exp \left[ -\tilde{E}_n \tau' \right], \quad n\geq \max(n_0, \tilde{n}_0). \notag
\end{align}
We have already shown that $ v_n^2 \exp [ -\tilde{E}_n \tau' ]$ vanishes as $n \to \infty$ (cf. \eqref{Eq_detect_err_uppr}--\eqref{Eq_sn_En2}), which implies that $ 2 v_n \exp [ -\tilde{E}_n \tau' ]$  vanishes, too as $n \to \infty$. It thus follows that $P(\cE_d |  |\bd| =0)$ tends to zero as $n \to \infty$. This was the last step required to prove Lemma~\ref{Lem_usr_detect}.

\subsection{Proof of Lemma~\ref{Lem_err_exp}}
\label{Sec_err_exp}

We first note that 
\begin{align}
\min_{\bd \in \cB^n(v_n)} \min_{(\kappa_1,\kappa_2) \in \cW^{\el_n}_{\bd}} g^n_{\lambda, \rho}(\kappa_1,\kappa_2, \bd)  & = \min \{ \min_{\bd \in \cB^n(v_n)}  \min_{1 \leq \kappa_1 \leq v_n} g^n_{\lambda, \rho}(\kappa_1,0,\bd), \min_{\bd \in \cB^n(v_n)}  \min_{ 1 \leq \kappa_2 \leq v_n } g^n_{\lambda, \rho}(0,\kappa_2, \bd), \notag \\
& \qquad \qquad \min_{\bd \in \cB^n(v_n)}  \min_{ \substack{ 1 \leq \kappa_1 \leq v_n \\ 1 \leq \kappa_2 \leq v_n}} g^n_{\lambda, \rho}(\kappa_1,\kappa_2, \bd)  \}. \label{Eq_inf_gn}
\end{align}
Then, we show that, for $\lambda = 2/3$ and $\rho = 3/4$,
\begin{align}
\liminf_{n \rightarrow \infty} \min_{\bd \in \cB^n(v_n)} \min_{1 \leq \kappa_1 \leq v_n} g^n_{\lambda, \rho}(\kappa_1, 0,\bd) & > 0 \label{Eq_w1_lowr} \\
\liminf_{n \rightarrow \infty} \min_{\bd \in \cB^n(v_n)} \min_{ 1 \leq \kappa_2 \leq v_n } g^n_{\lambda, \rho}(0,\kappa_2,\bd) & > 0 \label{Eq_w2_lowr} \\
\liminf_{n \rightarrow \infty} \min_{\bd \in \cB^n(v_n)} \min_{ \substack{ 1 \leq \kappa_1 \leq v_n \\ 1 \leq \kappa_2 \leq v_n}} g^n_{\lambda, \rho}(\kappa_1,\kappa_2, \bd)  & > 0  \label{Eq_w1w2_lowr}
\end{align}
from which Lemma~\ref{Lem_err_exp} follows.

In order to prove \eqref{Eq_w1_lowr}--\eqref{Eq_w1w2_lowr}, we first lower-bound $g^n_{\lambda, \rho}(\kappa_1,\kappa_2, \bd)$ by using that, for $0 \leq \lambda \rho \leq 1$, we have
\begin{align}
& 2 \log \left(1+ \lambda(1-\lambda \rho)\kappa_2 \tilde{E}_n/n'' + \lambda \rho (1-\lambda \rho) \kappa_1 \tilde{E}_n/n''\right) \notag \\
& \qquad \qquad \geq  \log \left(1+ \lambda(1-\lambda \rho)\kappa_2 \tilde{E}_n/n'' \right) +  \log \left(1+\lambda \rho (1-\lambda \rho) \kappa_1 \tilde{E}_n/n''\right). \label{Eq_log_lowr}
\end{align}
Using~\eqref{Eq_log_lowr} in the second term on the RHS of~\eqref{Eq_err_exp}, we obtain that
\begin{align}
g^n_{\lambda, \rho}(\kappa_1, \kappa_2,\bd) & \geq  a^n_{\lambda, \rho}(\kappa_1,\bd)  + b^{n}_{\lambda, \rho}(\kappa_2)
\label{Eq_gn_lowr}
\end{align}
where 
\begin{align*}
a^n_{\lambda, \rho}(\kappa_1,\bd)  \triangleq \frac{n''}{4\tilde{E}_n} \log \left(1 + \lambda \rho (1-\lambda \rho) \kappa_1 \tilde{E}_n/n''\right) - \frac{|\bd|}{\tilde{E}_n} H_2\left(\frac{\kappa_1}{|\bd|}\right)
\end{align*}
and 
\begin{align*}
b^{n}_{\lambda, \rho}(\kappa_2) \triangleq \frac{n''}{4\tilde{E}_n} \log \left(1+ \lambda(1-\lambda \rho)\kappa_2 \tilde{E}_n/n'' \right) -\frac{(1-\rho)}{2 \tilde{E}_n} \log (1+\lambda \kappa_2 \tilde{E}_n/n'') - \frac{\rho \el_n}{\tilde{E}_n} H_2\left(\frac{\kappa_2}{\el_n}\right).
\end{align*}

\subsubsection{Proof of~\eqref{Eq_w1_lowr}}
We have
\begin{align}
g^n_{\lambda, \rho}(\kappa_1, 0,\bd) & \geq a^n_{\lambda, \rho}(\kappa_1,\bd) + b^{n}_{\lambda, \rho}(0) \notag \\
& \geq  a^n_{\lambda, \rho}(\kappa_1,\bd) \label{Eq_gn_w1_lowr}
\end{align}
by~\eqref{Eq_gn_lowr} and \edit{because} $b^{n}_{\lambda, \rho}(0)  =0$.
Consequently,
\begin{align*}
\min_{\bd \in \cB^n(v_n)}  \min_{1 \leq \kappa_1 \leq v_n} g^n_{\lambda, \rho}(\kappa_1, 0,\bd) & \geq  \min_{\bd \in \cB^n(v_n)}  \min_{1 \leq \kappa_1 \leq v_n} a^n_{\lambda, \rho}(\kappa_1,\bd)
\end{align*}
\edit{and}~\eqref{Eq_w1_lowr} follows by showing that
\begin{align}
\liminf_{n \rightarrow \infty}  \min_{\bd \in \cB^n(v_n)}  \min_{1 \leq \kappa_1 \leq v_n} a^n_{\lambda, \rho}(\kappa_1,\bd) > 0. \label{Eq_first_lowr}
\end{align}
To this end, let
\begin{align*}
i_n(\kappa_1) & \triangleq \frac{n''}{4\tilde{E}_n} \log \left(1 + \lambda \rho (1-\lambda \rho) \kappa_1 \tilde{E}_n/n''\right) \\
j_n(\kappa_1,\bd) & \triangleq \frac{|\bd|}{\tilde{E}_n} H_2\left(\frac{\kappa_1}{|\bd|}\right)
\end{align*}
so that
\begin{align}
a^n_{\lambda, \rho}(\kappa_1,\bd) =  i_n(\kappa_1)   \left(1- \frac{j_n(\kappa_1,\bd)}{i_n(\kappa_1)}\right). \label{Eq_an}
\end{align}
Note that
\begin{equation}
i_n(\kappa_1) \geq  \frac{n''}{4\tilde{E}_n} \log \left(1 + \lambda \rho (1-\lambda \rho) \tilde{E}_n/n''\right), \quad 1 \leq \kappa_1 \leq v_n \label{Eq_in_lowr}
\end{equation}
and
\begin{align}
\frac{j_n(\kappa_1,\bd)}{i_n(\kappa_1)} & = \frac{4 |\bd| H_2\left(\frac{\kappa_1}{|\bd|}\right)}{n''  \log \left(1 + \lambda \rho (1-\lambda \rho) \kappa_1 \tilde{E}_n/n''\right) } \notag \\
& = \frac{4 \kappa_1 \log (|\bd|/\kappa_1) + 4 |\bd|(\kappa_1/|\bd| - 1)   \log (1 -\kappa_1/|\bd|)   }{n''  \log \left(1 + \lambda \rho (1-\lambda \rho) \kappa_1 \tilde{E}_n/n''\right) }  \label{Eq_jn_in_ratio}.
\end{align}
Next, we upper-bound $(\kappa_1/|\bd| - 1)   \log (1 -\kappa_1/|\bd|) $. To this end, we note that the function $f(p) = p - (p-1) \ln (1-p)$, $0 \leq p \leq 1$ satisfies $f(0)=0$ and is monotonically increasing in $p$. \edit{It follows that} $(p-1) \ln (1-p) \leq p$, $0\leq p \leq 1$, which for $ p=\kappa_1/|\bd|$ gives 
\begin{align}
(\kappa_1/|\bd| - 1)   \log (1 -\kappa_1/|\bd|) \leq (\log e) \kappa_1/|\bd|. \label{Eq_fn_uppr}
\end{align}
Using~\eqref{Eq_fn_uppr} in~\eqref{Eq_jn_in_ratio}, we obtain that
\begin{align}
\frac{j_n(\kappa_1,\bd)}{i_n(\kappa_1)}  
& \leq \frac{ 4 \log (|\bd|/\kappa_1) +  4 \log e}{n''   \log \left(1 + \lambda \rho (1-\lambda \rho)  \kappa_1 \tilde{E}_n/n''\right)/\kappa_1 } \notag \\
& \leq \frac{4 \log (|\bd|/\kappa_1) +4 \log e }{n''   \log \left(1 + \lambda \rho (1-\lambda \rho) v_n \tilde{E}_n/n''\right)/v_n} \notag  \\
&  \edit{\leq  \frac{v_n (4 \log (v_n) +4 \log e) }{n''   \log \left(1 + \lambda \rho (1-\lambda \rho) v_n \tilde{E}_n/n''\right)}} \label{Eq_new1}  \\
& = \frac{4 \log v_n +  4 \log e }{\tilde{E}_n \frac{ \log \left(1 + \lambda \rho (1-\lambda \rho) v_n \tilde{E}_n/n''\right)}{v_n\tilde{E}_n/n''}} \label{Eq_ratio_uppr4}  
\end{align}
where the second inequality follows because  $\frac{\log (1+x)}{x}$ is monotonically decreasing in $x > 0$, and the subsequent inequality follows because $|\bd| \leq v_n$ and $1 \leq \kappa_1 \leq v_n$. Combining~\eqref{Eq_an}, \eqref{Eq_in_lowr}, and~\eqref{Eq_ratio_uppr4}, $ a^n_{\lambda, \rho}(\kappa_1,\bd)$ can thus be lower-bounded by
\begin{align}
a^n_{\lambda, \rho}(\kappa_1,\bd) &\geq   \frac{n''}{4\tilde{E}_n} \log \left(1 + \lambda \rho (1-\lambda \rho) \tilde{E}_n/n''\right)\left(1- \frac{4 \log v_n + 4 \log e }{\tilde{E}_n \frac{ \log \left(1 + \lambda \rho (1-\lambda \rho) v_n \tilde{E}_n/n''\right)}{v_n\tilde{E}_n/n''}}  \right). \label{Eq_an_lowr}
\end{align}
Note that the RHS of~\eqref{Eq_an_lowr} is independent of $\kappa_1$ and $\bd$.
\edit{To show~\eqref{Eq_first_lowr}, we consider the following two cases:

\subsubsection*{Case 1---$k_n \log \ell_n = o(n)$} Recall that if $k_n \log \ell_n = o(n)$, then $ \tilde{E}_n = bc_n \ln \ell_n/2$. 
}
 It follows that the term 
 \begin{align}
\frac{ v_n \tilde{E}_n}{n''} & = \frac{ (1+c'')}{2} c_n \frac{k_n \ln \el_n}{n}  \label{Eq_En_sn_zero2} 
\end{align}
tends to zero as $n \to \infty$ since $c_n = \ln \left(\frac{n}{k_n \ln \ell_n}\right)$ and $\frac{k_n \ln \el_n}{n} = o(1)$ by assumption. \edit{This also implies that $\tilde{E}_n/n'' \to 0$ as $n \to \infty$ since $v_n =\Omega(1)$.} We further have that $\tilde{E}_n\to\infty$ and \edit{$\frac{\log v_n}{\tilde{E}_n} \to 0$ as $n \to \infty$ since $v_n = \Theta(k_n)$ and $\tilde{E}_n = \omega(\log \ell_n)$.} It follows that
\begin{align}
& \liminf_{n \to \infty}   \min_{\bd \in \cB^n(v_n)}   \min_{1 \leq \kappa_1 \leq v_n} a^n_{\lambda, \rho}(\kappa_1,\bd) \notag \\
& \qquad \qquad \geq \lim_{n \to \infty} \frac{n''}{4\tilde{E}_n} \log \left(1 + \lambda \rho (1-\lambda \rho) \tilde{E}_n/n''\right)   \lim_{n \to \infty} \left(1- \frac{ 4\log v_n +  4\log e }{\tilde{E}_n \frac{ \log \left(1 + \lambda \rho (1-\lambda \rho) v_n \tilde{E}_n/n''\right)}{v_n\tilde{E}_n/n''}}  \right) \notag \\
& \qquad \qquad= \frac{(\log e) \;  \lambda \rho (1-\lambda \rho)}{4} \label{Eq_new7} 
\end{align}
which for $\lambda=2/3$ and $\rho = 3/4$ is equal to $(\log e)/16$.

\edit{\subsubsection*{Case 2---$k_n\log \ell_n = \Theta(n)$} To analyze this case, we first note that, by the definition of $v_n$, and because $\log k_n\leq \log\ell_n$, the numerator in \eqref{Eq_new1} satisfies $v_n (4 \log (v_n) +4 \log e) = O(k_n\log \ell_n)$. Since $k_n\log \ell_n = \Theta(n)$, and since $n'' =\Theta(n)$, this further implies that there exist $a_2>0$ and $\tilde{n}'_0>0$ such that 
\begin{align}
\frac{v_n (4 \log (v_n) +4 \log e) }{n''} \leq a_2, \quad n \geq \tilde{n}'_0. \label{Eq_new8}
\end{align}
We have
\begin{align}
a^n_{\lambda, \rho}(\kappa_1,\bd) &\geq   \frac{n''}{4\tilde{E}_n} \log \left(1 + \lambda \rho (1-\lambda \rho) \tilde{E}_n/n''\right)\left(1- \frac{v_n (4 \log (v_n) +4 \log e) }{n''   \log \left(1 + \lambda \rho (1-\lambda \rho) v_n \tilde{E}_n/n''\right)} \right). \label{Eq_new9}
\end{align}
The RHS of \eqref{Eq_new9} is independent of $\kappa_1$ and $\bd$. We next show that it is bounded away from zero.

Recall that, if $k_n\log\ell_n = \Theta(n)$, then $\tilde{E}_n = b c' \ln\ell_n /2$. We then choose $c'$ sufficiently large such that, for some $\tilde{n}''_0\geq \tilde{n}_0'$,
\begin{equation}
\label{eq:Lemma14_make_a_choice}
\log \left(1 + \lambda \rho (1-\lambda \rho) v_n \tilde{E}_n/n''\right) >a_2, \quad n \geq \tilde{n}''_0.
\end{equation}
Such a choice is possible since we have
\begin{equation*}
\frac{v_n \tilde{E}_n}{n''} = \frac{1+c''}{2} c' \frac{k_n\ln \ell_n}{n}
\end{equation*}
and, by assumption, $\frac{k_n\ln\ell_n}{n}=\Theta(1)$. Consequently, for sufficiently large $n$, $v_n\tilde{E}_n/n''$ is monotonically increasing in $c'$ and ranges from zero to infinity.
Combining \eqref{eq:Lemma14_make_a_choice} with \eqref{Eq_new8} implies that the expression inside the large parentheses on the RHS of \eqref{Eq_new9} is bounded away from zero for $n \geq \tilde{n}''_0$.

We next consider the remaining term on the RHS of \eqref{Eq_new9}. To this end, we note that, if $k_n$ is unbounded, then the assumption $k_n\log\ell_n=\Theta(n)$ implies that $\ln\ell_n=o(n)$. It follows that $\frac{\tilde{E}_n}{n''}=c'\frac{\ln\ell_n}{2n}\to 0$ as $n\to\infty$.  If $k_n$ is bounded, then $\ln\ell_n=\Theta(n)$, so $\frac{\tilde{E}_n}{n''}$ is bounded. In both cases, $\frac{n''}{4\tilde{E}_n} \log(1 + \lambda \rho (1-\lambda \rho) \tilde{E}_n/n'')$ tends to a positive value as $n \to \infty$.

Applying the above lines of argument to \eqref{Eq_new9}, we obtain that
\begin{align}
 \liminf_{n \to \infty}   \min_{\bd \in \cB^n(v_n)}   \min_{1 \leq \kappa_1 \leq v_n} a^n_{\lambda, \rho}(\kappa_1,\bd) >0 \label{Eq_new3}
\end{align}
which concludes the analysis of the second case.

The claim \eqref{Eq_w1_lowr} follows now by combining the above two cases, i.e., \eqref{Eq_new7} and~\eqref{Eq_new3}.
}

\subsubsection{Proof of~\eqref{Eq_w2_lowr}}
Since $a^{n}_{\lambda, \rho}(0,\bd)  =0 $, we have that
\begin{align*}
\min_{\bd \in \cB^n(v_n)}  \min_{1 \leq \kappa_2 \leq v_n} g^n_{\lambda, \rho}(0, \kappa_2,\bd) & \geq \min_{1 \leq \kappa_2 \leq v_n} b^n_{\lambda, \rho}(\kappa_2).
\end{align*}
Thus, \eqref{Eq_w2_lowr} follows by showing that
\begin{align}
\liminf_{n \rightarrow \infty} \min_{1 \leq \kappa_2 \leq v_n} b^n_{\lambda, \rho}(\kappa_2) > 0. \label{Eq_w2_lowr1}
\end{align}
To prove \eqref{Eq_w2_lowr1}, we define
\begin{align}
q_n(\kappa_2) & \triangleq \frac{n''}{4\tilde{E}_n} \log \left(1+ \lambda(1-\lambda \rho)\kappa_2 \tilde{E}_n/n'' \right) \label{Eq_qn} \\
r_n(\kappa_2) & \triangleq \frac{(1-\rho)}{2 \tilde{E}_n} \log (1+\lambda \kappa_2 \tilde{E}_n/n'')  \\
u_n(\kappa_2) & \triangleq \frac{\rho \el_n}{\tilde{E}_n} H_2\left(\frac{\kappa_2}{\el_n}\right). \label{Eq_sn}
\end{align}
Then, 
\begin{align}
b^n_{\lambda, \rho}(\kappa_1) & =  q_n(\kappa_2)  \left(1- \frac{r_n(\kappa_2)}{q_n(\kappa_2)} - \frac{u_n(\kappa_2)}{q_n(\kappa_2)}\right). \label{Eq_bn_lowr}
\end{align}
Note that
\begin{equation}
q_n(\kappa_2) \geq  \frac{n''}{4\tilde{E}_n} \log \left(1+ \lambda(1-\lambda \rho) \tilde{E}_n/n'' \right), \quad 1 \leq \kappa_2 \leq v_n. \label{Eq_qn_lowr}
\end{equation}
Furthermore,
\begin{align}
\frac{r_n(\kappa_2)}{ q_n(\kappa_2)} &= \frac{\frac{(1-\rho)}{2 \tilde{E}_n} \log (1+\lambda \kappa_2 \tilde{E}_n/n'')}{\frac{n''}{4\tilde{E}_n} \log \left(1+ \lambda(1-\lambda \rho)\kappa_2 \tilde{E}_n/n'' \right)} \notag \\
& \leq  \frac{\frac{(1-\rho)}{2 \tilde{E}_n} \log (1+\lambda v_n \tilde{E}_n/n'')}{\frac{n''}{4\tilde{E}_n} \log \left(1+ \lambda(1-\lambda \rho) \tilde{E}_n/n'' \right)} \notag \\
& =  \frac{\frac{(1-\rho)v_n}{2 n'' } \frac{ \log (1+\lambda v_n \tilde{E}_n/n'')}{\tilde{E}_nv_n/n''}}{ \frac{ \log \left(1+ \lambda(1-\lambda \rho) \tilde{E}_n/n'' \right)}{4\tilde{E}_n/n''}}, \quad 1 \leq \kappa_2 \leq v_n. \label{Eq_rn_qn_ratio}
\end{align}
Finally,
\begin{align}
\frac{u_n(\kappa_2)}{ q_n(\kappa_2)}  & = \frac{4 \rho \el_n H_2\left(\frac{\kappa_2}{\el_n}\right)}{n'' \log \left(1+ \lambda(1-\lambda \rho)\kappa_2 \tilde{E}_n/n'' \right)} \notag \\
& = \frac{4 \rho \left[ \kappa_2 \log (\el_n/\kappa_2) + \el_n (\kappa_2/\el_n -1)   \log (1 -\kappa_2/\el_n) \right]  }{n'' \log \left(1+ \lambda(1-\lambda \rho)\kappa_2 \tilde{E}_n/n'' \right)}\notag\\
& \leq \frac{4 \rho \left[ \kappa_2 \log (\el_n/\kappa_2) + \kappa_2 \log e \right]  }{n'' \log \left(1+ \lambda(1-\lambda \rho)\kappa_2 \tilde{E}_n/n'' \right)} \notag\\
& \leq \edit{ \frac{4 \rho v_n \left[  \log \ell_n +  \log e \right]  }{n'' \log \left(1+ \lambda(1-\lambda \rho)v_n \tilde{E}_n/n'' \right)} }\label{Eq_new2} \\
& =  \frac{4 \rho \left[  \log \el_n +  \log e \right]  }{\tilde{E}_n \frac{ \log \left(1+ \lambda(1-\lambda \rho)v_n \tilde{E}_n/n'' \right)}{v_n\tilde{E}_n/n''}}, \quad 1 \leq \kappa_2 \leq v_n \label{Eq_sn_qn_uppr}
\end{align}
where the first inequality follows from~\eqref{Eq_fn_uppr}. 
Combining~\eqref{Eq_qn_lowr}--\eqref{Eq_sn_qn_uppr} with~\eqref{Eq_bn_lowr} yields the lower bound
\begin{align}
b^n_{\lambda, \rho}(\kappa_2) & \geq \frac{n''}{4\tilde{E}_n} \log \left(1+ \lambda(1-\lambda \rho) \tilde{E}_n/n'' \right)\left(1 - \frac{\frac{(1-\rho)v_n}{2 n'' } \frac{ \log (1+\lambda v_n \tilde{E}_n/n'')}{\tilde{E}_nv_n/n''}}{ \frac{ \log \left(1+ \lambda(1-\lambda \rho) \tilde{E}_n/n'' \right)}{4\tilde{E}_n/n''}}  - \frac{4 \rho \left[  \log \el_n +  \log e \right]  }{\tilde{E}_n \frac{ \log \left(1+ \lambda(1-\lambda \rho)v_n \tilde{E}_n/n'' \right)}{v_n\tilde{E}_n/n''}} \right) \label{Eq_bn_lowr_bnd}
\end{align}
which is independent of $\kappa_2$ and $\bd$.

\edit{To prove \eqref{Eq_w2_lowr1}, we first note that, since $k_n = \Omega(1)$, we have $v_n \geq 1 - \lambda\rho$ for $c''$ sufficiently large. Thus, by the monotonicity of $x\mapsto\frac{\log (1+x)}{x}$,
	\begin{equation}
	\frac{ \frac{ \log (1+\lambda v_n \tilde{E}_n/n'')}{\tilde{E}_nv_n/n''}}{ \frac{ \log \left(1+ \lambda(1-\lambda \rho) \tilde{E}_n/n'' \right)}{\tilde{E}_n/n''}} \leq \frac{1}{1-\lambda \rho}, \quad v_n \geq 1-\lambda\rho. \label{Eq_new12}
	\end{equation}
 Furthermore, the term 
\begin{align}
\frac{2(1-\rho)v_n}{n''} & = \frac{2(1-\rho) k_n(1+c'')}{bn} \label{Eq_new11}
\end{align}
vanishes as $n\to\infty$ since $k_n = o(n)$ by the lemma's assumption that $k_n \log \el_n = O(n)$. Consequently, the RHS of \eqref{Eq_rn_qn_ratio} tends to zero as $n\to\infty$. It follows that
\begin{equation}
\liminf_{n \rightarrow \infty}   \min_{1 \leq \kappa_2 \leq v_n} b^n_{\lambda, \rho}(\kappa_2) \geq \lim_{n \to \infty}   \frac{n''}{4\tilde{E}_n} \log \left(1+ \lambda(1-\lambda \rho) \tilde{E}_n/n'' \right) \left(1-\limsup_{n\to\infty} \frac{4 \rho \left[  \log \el_n +  \log e \right]  }{\tilde{E}_n \frac{ \log \left(1+ \lambda(1-\lambda \rho)v_n \tilde{E}_n/n'' \right)}{v_n\tilde{E}_n/n''}}\right). \label{Eq_new10} 
\end{equation}
To show that the RHS of \eqref{Eq_new10} is positive, we consider the following two cases:}

	\subsubsection*{Case 1---$k_n \log \ell_n = o(n)$} Recall that, in this case, $\tilde{E}_n=b c_n \ln \ell_n /2$ and $c_n\to\infty$ as $n\to\infty$. This implies that $\tilde{E}_n \to \infty$, $\frac{\log \el_n}{ \tilde{E}_n} \to 0$, and $v_n\tilde{E}_n/n'' \to 0$ as $n \to \infty$. It follows that the RHS of \eqref{Eq_sn_qn_uppr} vanishes as $n \to \infty$, so \eqref{Eq_new10} becomes
\begin{align}
\liminf_{n \rightarrow \infty}   \min_{1 \leq \kappa_2 \leq v_n} b^n_{\lambda, \rho}(\kappa_2) &  \geq \lim_{n \to \infty}   \frac{n''}{4\tilde{E}_n} \log \left(1+ \lambda(1-\lambda \rho) \tilde{E}_n/n'' \right) \notag \\
&  =  \frac{(\log e)  \; \lambda(1-\lambda \rho)}{4} \label{Eq_new10_b}
\end{align}
which for $\lambda= 2/3$ and $\rho =3/4$ is equal to $(\log e)/12$. Here, the last step follows by noting that $v_n \tilde{E}_n/n''\to 0$ implies that $\tilde{E}_n/n''\to 0$ since $v_n=\Omega(1)$.

\edit{
\subsubsection*{Case 2---$k_n \log \ell_n = \Theta(n)$} We first note that, in this case, $4 \rho v_n \left[  \log \ell_n +  \log e \right]  = \Theta(n)$. Thus, there exist two positive constants $a_3$ and $\tilde{n}'''_0$ such that 
	\begin{align*}
	\frac{4 \rho v_n \left[  \log \ell_n +  \log e \right] }{n''} & \leq a_3, \quad n \geq \tilde{n}'''_0
	\end{align*}
	By the same arguments that demonstrate \eqref{eq:Lemma14_make_a_choice}, we can show that $c'$ can be chosen sufficiently large so that, for some $\bar{n}_0\geq \tilde{n}'''_0$,
	\begin{equation*}
	\log \left(1 + \lambda \rho (1-\lambda \rho) v_n \tilde{E}_n/n''\right) >a_3, \quad n \geq \bar{n}_0.
	\end{equation*}
	For such a $c'$, the RHS of \eqref{Eq_new2} is strictly less than one. Consequently, the expression inside the large parentheses on the RHS of \eqref{Eq_new10} is bounded away from zero for $n\geq\bar{n}_0$. Furthermore, as noted in the proof of \eqref{Eq_w1_lowr}, when $k_n\log\ell_n=\Theta(n)$, the expression $\frac{n''}{4\tilde{E}_n} \log(1 + \lambda \rho (1-\lambda \rho) \tilde{E}_n/n'')$ tends to a positive value as $n \to \infty$. It follows that
	\begin{align}
	\liminf_{n \rightarrow \infty}   \min_{1 \leq \kappa_2 \leq v_n} b^n_{\lambda, \rho}(\kappa_2) & > 0 \label{Eq_new4}
	\end{align}
	which concludes the analysis of the second case.
	
	The claim \eqref{Eq_w2_lowr} follows now by combining the above two cases, i.e., \eqref{Eq_new10_b} and \eqref{Eq_new4}.
}

\subsubsection{Proof of~\eqref{Eq_w1w2_lowr}} We use~\eqref{Eq_gn_lowr}, \eqref{Eq_an_lowr}, and \eqref{Eq_bn_lowr_bnd} to lower-bound
\begin{align}
g^n_{\lambda, \rho}(\kappa_1,\kappa_2, \bd)  & \geq  \frac{n''}{4\tilde{E}_n} \log \left(1 + \lambda \rho (1-\lambda \rho) \tilde{E}_n/n''\right)\left(1- \frac{ 4\log v_n +  4\log e }{\tilde{E}_n \frac{ \log \left(1 + \lambda \rho (1-\lambda \rho) v_n \tilde{E}_n/n''\right)}{v_n\tilde{E}_n/n''}}  \right) \notag \\
& + \frac{n''}{4\tilde{E}_n} \log \left(1+ \lambda(1-\lambda \rho) \tilde{E}_n/n'' \right)\left(1 - \frac{\frac{(1-\rho)v_n}{2 n'' } \frac{ \log (1+\lambda v_n \tilde{E}_n/n'')}{\tilde{E}_nv_n/n''}}{ \frac{ \log \left(1+ \lambda(1-\lambda \rho) \tilde{E}_n/n'' \right)}{4\tilde{E}_n/n''}}  - \frac{4 \rho \left[  \log \el_n +  \log e \right]  }{\tilde{E}_n \frac{ \log \left(1+ \lambda(1-\lambda \rho)v_n \tilde{E}_n/n'' \right)}{v_n\tilde{E}_n/n''}} \right) \notag
\end{align}
which is independent of $\kappa_1, \kappa_2$, and $\bd$. \edit{It follows that
\begin{equation*}
\liminf_{n \to \infty}  \min_{\bd \in \cB^n(v_n)}  \min_{ \substack{ 1 \leq \kappa_1 \leq v_n \\ 1 \leq \kappa_2 \leq v_n}} g^n_{\lambda, \rho}(\kappa_1,\kappa_2, \bd) \geq \underline{a} + \underline{b}
\end{equation*}
where
\begin{equation*}
\underline{a} \triangleq \liminf_{n \to \infty}\left\{ \frac{n''}{4\tilde{E}_n} \log \left(1 + \lambda \rho (1-\lambda \rho) \tilde{E}_n/n''\right)\left(1- \frac{ 4\log v_n +  4\log e }{\tilde{E}_n \frac{ \log \left(1 + \lambda \rho (1-\lambda \rho) v_n \tilde{E}_n/n''\right)}{v_n\tilde{E}_n/n''}}  \right)\right\}
\end{equation*}
and
\begin{equation*}
\underline{b} \triangleq \liminf_{n \to \infty} \left\{\frac{n''}{4\tilde{E}_n} \log \left(1+ \lambda(1-\lambda \rho) \tilde{E}_n/n'' \right)\left(1 - \frac{\frac{(1-\rho)v_n}{2 n'' } \frac{ \log (1+\lambda v_n \tilde{E}_n/n'')}{\tilde{E}_nv_n/n''}}{ \frac{ \log \left(1+ \lambda(1-\lambda \rho) \tilde{E}_n/n'' \right)}{4\tilde{E}_n/n''}}  - \frac{4 \rho \left[  \log \el_n +  \log e \right]  }{\tilde{E}_n \frac{ \log \left(1+ \lambda(1-\lambda \rho)v_n \tilde{E}_n/n'' \right)}{v_n\tilde{E}_n/n''}} \right)\right\}.
\end{equation*}
We then obtain \eqref{Eq_w1w2_lowr} by noting that it was shown in the proof of \eqref{Eq_w1_lowr} that $\underline{a}>0$, and in the proof of \eqref{Eq_w2_lowr} that $\underline{b}>0$.

Since \eqref{Eq_w1_lowr}--\eqref{Eq_w1w2_lowr} prove Lemma~\ref{Lem_err_exp}, this concludes the proof.
}

\section{Proof of Lemma~\ref{Lem_energy_bound}}
\label{Append_prob_lemma}

\comment{
	To prove Lemma~\ref{Lem_energy_bound}, we represent the messages $(W_1, \ldots, W_{\el_n})$  using an $\ell_n$-length vector such that the $i$-th position of the vector is set to $j,j=0,1,\ldots, M_n$ if user $i$ has message $j$. We further partition the set of all message vectors into $\ell_n +1$ groups depending on the number of zeros in the vector. We then compute the average probability of error for each group except for the group corresponds to all zero vector whose probability vanishes in all non-trivial cases. We then find a lower bound to the probability of error for each group. This lower bound is obtained by further grouping the vectors in each group into subgroups such that any two vector in a subgroup are at a maximum Hamming distance two. Then the maximum difference in the energy of two vectors in a subgroup is a constant multiple of the  energy allowed for a single user. By using a well-known ineqaulity called Birg\'e's inequality, we find a lower bound on the probability of error for each of these subgroups. This then yields a lower bound on the probability of error which proves the lemma.
}

Let $\cW$ denote the set of  the $(M_n+1)^{\el_n}$ messages  of all users. To prove Lemma~\ref{Lem_energy_bound}, we represent each $\bw \in \cW$   using an \edit{length-$\ell_n$} vector such that the $i$-th position of the vector is set to $j$ if user $i$ has message $j$. 
The Hamming distance $d_H$ between two messages $\bw=(w_1,\ldots,w_{\el_n})$ and $\bw'=(w'_1,\ldots,w'_{\el_n})$ is defined as the number of positions at which $\bw$ differs from $\bw'$, i.e.,
$d_H(\bw,\bw') \triangleq \left|\{i: w_i\neq w'_i \}\right|$.

We first group the set $\cW$ into $\el_n +1$ subgroups. 
Two messages $\bw, \bw' \in \cW$ belong to the same subgroup if they have the same number of zeros. Note that all the messages in a subgroup have the same probability, since the probability of a message $\bw$ is determined by the number of zeros in it.

Let $\MS$ denote the set of  messages $\bw \in \cW$ with $t$ non-zero entries, where $t=0, \ldots, \el_n$. 
Further let 
\begin{align}
\text{Pr}(\MS) \triangleq  \text{Pr}(\bW \in \MS)\notag
\end{align}
which can be evaluated as
\begin{equation}
\text{Pr}(\MS) = (1-\alpha_n)^{\el_n-t} \left( \frac{\alpha_n}{M_n}\right)^{t} |\MS|. \label{Eq_type_prob}
\end{equation}
We define
\begin{align}
P_e(\MS) \triangleq \frac{1}{|\MS|} \sum_{\bw\in \MS} P_e(\bw) \label{Eq_type_err_prob}
\end{align}
where $P_e(\bw)$ denotes the probability of error in decoding the set of messages $\bw=(w_1,\ldots,w_{\el_n})$.
It follows that
\begin{align}
P_{e}^{(n)} & =  \sum_{\bw \in \cW}  \text{Pr}(\bW =\bw) P_e(\bw) \notag \\
& =  \sum_{t=0}^{\el_n} \sum_{\bw\in \MS} (1-\alpha_n)^{\el_n-t} \left( \frac{\alpha_n}{M_n}\right)^{t} |\MS| \frac{1}{|\MS|}  P_e(\bw)\notag \\
& =  \sum_{t=0}^{\el_n}  \text{Pr}( \MS) \frac{1}{|\MS|} \sum_{w\in \MS} P_e(\bw) \notag \\
& =  \sum_{t=0}^{\el_n}  \text{Pr}(\MS) P_e(\MS) \notag \\
& \geq \sum_{t=1}^{\el_n}  \text{Pr}(\MS) P_e(\MS) \label{Eq_avg_prob_err3}
\end{align}
where we have used \eqref{Eq_type_prob} and the definition of $P_e(\MS)$ in \eqref{Eq_type_err_prob}.
To prove Lemma~\ref{Lem_energy_bound}, we next show that 
\begin{align}
P_e(\MS) \geq  1  -   \frac{ 256 E_n/N_0+\log 2}{\log \el_n}, \quad t=1,\ldots, \el_n. \label{Eq_prob_typ_lowr}
\end{align}
To this end, we partition each $\MS, t=1,\ldots, \el_n  $ into $D_t$ sets $\cS_d^t$.  For every $ t=1,\ldots, \el_n $, \edit{we then show that this partition satisfies}
\begin{align}
\frac{1}{|\cS_d^t|} \sum_{\bw \in \cS_d^t} P_e(\bw) \geq 1  -   \frac{ 256 E_n/N_0+\log 2}{\log \el_n}. \label{Eq_avg_prob_lowr}
\end{align}
This yields~\eqref{Eq_prob_typ_lowr} since
\begin{align}
P_e(\MS) & = \sum_{d=1}^{D_t} \frac{|\cS_d^t|}{|\MS|} \frac{1}{|\cS_d^t|} \sum_{\bw \in \cS_d^t} P_e(\bw).
\end{align}

Before we continue by defining the sets $\mathcal{S}_d^t$, we note that
\begin{align}
M_n \geq  2 \label{Eq_seqMn_assum}
\end{align}
since $M_n=1$ would contradict the \edit{lemma's first assumption}  that $\CR>0$.
We further have that
\begin{align}
\el_n \geq 5 \label{Eq_seqln_assum}
\end{align}
\edit{by the lemma's second assumption.}

We next define a partition of $\MS,  t=1,\ldots, \el_n  $  that satisfies the following:
\begin{align}
|\cS_d^t| \geq \el_n +1, \quad  d=1, \ldots, D_t \label{Eq_set_lowr}
\end{align}
and
\begin{align}
d_{H}(\bw,\bw') \leq 8, \quad \bw, \bw' \in  \cS_d^t. \label{Eq_distnce_uppr}
\end{align}
To this end, we consider the following four cases:

\subsubsection*{Case 1---$t=1$} For $t=1$, we do not partition the set, i.e., $\cS_1^1 = \cT_ {1}$. Thus, we have $|\cS_1^1| = \el_nM_n$. From~\eqref{Eq_seqMn_assum} and~\eqref{Eq_seqln_assum}, it follows that $|\cS_1^1| \geq \el_n +1$. Since any two messages $\bw, \bw' \in  \cT_ {1}$ 
have only one non-zero entry, we further have that $d_{H}(\bw,\bw') \leq 2$. Consequently, \eqref{Eq_set_lowr} and~\eqref{Eq_distnce_uppr} are satisfied.

\subsubsection*{Case 2---$t=2,\ldots,\el_n -2$}
In this case, we obtain a partition by finding a code $\cC_t$ in $\cT_ {t}$  that has minimum Hamming distance $5$, and for every $\bw\in \MS$, there exists at least one codeword in $\cC_t$ which is at most at Hamming distance 4 from it. 
Such a code exists because, if for some $\bw\in \MS$ all codewords were at Hamming distance 5 or more, then we could add $\bw$ to $\cC_t$ without affecting its minimum distance. 
Thus, for all $\bw \notin \cC_t$, there exists at least one index $j$ such that $d_H(\bw,\bc_t(j)) \leq 4$, where
$\bc_t(1),\ldots, \bc_t(|\cC_t|)$ denote the codewords of \edit{the} code $\cC_t$. With this code $\mathcal{C}_t$, we partition $\cT_t$ into the sets $\mathcal{S}_d^t$, $d=1,\ldots,D_t$ \edit{with $D_t = |\cC_t|$} the following procedure:
\begin{enumerate}
	\item For a given $d=1, \ldots,D_t$, we assign $\bc_t(d)$ to $\cS_d^t$ as well as all $\bw \in \MS$ that satisfy $d_H(\bw, \bc_t(d))\leq 2$. These assignments are unique since the code $\cC_t$ has minimum Hamming distance 5.
	\item  We then consider  all $\bw\in  \MS$ for which there is no codeword $\bc_t(1), \ldots, \bc_t(|\cC_t|)$ satisfying $d_H(\bw, \bc_t(d))\leq  2$ and assign them to the set \edit{$\cS_d^t$} with index $d = \min \{j=1,\ldots, D_t: d_H(\bw, \bc_t(j)) \leq 4 \}$. 
\end{enumerate}
Like this, we obtain a partition of $ \MS$.  
Since any two $\bw, \bw' \in  \cS_d^t$ are at most at a Hamming distance 4 from the codeword $\bc_t(d)$, we have that
$d_{H}(\bw,\bw') \leq 8$. Consequently, \eqref{Eq_distnce_uppr} is satisfied.

To show that~\eqref{Eq_set_lowr} is satisfied, too, we use the following fact:
\begin{align}
\text{For two natural numbers } a \text{ and } b, \text { if } a \geq 4 \text{ and } 2\leq b \leq a-2, \text{ then } b(a-b) \geq a. \label{Eq_prod_seq}
\end{align}
This fact follows since $b(a-b)$ is increasing \edit{in $b$} from $b=2$ to $b = \lfloor a/2\rfloor$ and is 
decreasing \edit{in $b$} from $b = \lfloor a/2\rfloor$ to $b=a-2$. So $b(a-b)$ is minimized at $b=2$ and $b=a-2$, where it has the value $2a-4$. For $a\geq 4$, this value is greater than or equal to $a$, hence the claim follows.

From~\eqref{Eq_prod_seq}, it follows that, if   $|\cS_d^t| \geq 1+ t(\el_n - t)$, then $|\cS_d^t|\geq 1+ \el_n$. It thus remains to show that  $|\cS_d^t| \geq 1+  t(\el_n - t)$.
To this end, for every codeword $\bc_t(d)$, consider all sequences in $\MS$ which differ exactly in one non-zero position and in one zero position from $\bc_t(d)$. There are $ t(\el_n - t)M_n$ such  sequences  in $\MS$, which can be lower-bounded as
\begin{align}
t(\el_n - t)M_n & \geq  t(\el_n - t) \notag \\
& \geq \el_n  \label{Eq_set_lowr2}
\end{align}
by~\eqref{Eq_seqMn_assum}, \eqref{Eq_seqln_assum}, and~\eqref{Eq_prod_seq}.
Since the codeword $\bc_t(d)$ also belongs to $S_d^t$, it follows from~\eqref{Eq_set_lowr2} that 
\begin{align}
| \cS_d^t| &\geq \el_n +1. \notag
\end{align}

\subsubsection*{Case 3---$t=\el_n -1$}
We obtain a partition by defining a code $\cC_t$ in $ \cT_{\el_n -1}$ 
that has 
the same properties as the code used for Case 2. We then use the same procedure as in Case 2 to assign messages in $\bw \in\cT_{\el_n -1}$ to the sets $\cS_d^t$, $d=1,\ldots,D_t$. This gives a partition of $\cT_{\el_n -1}$ where any two $\bw, \bw'\in\cS_d^t$ satisfy $d_H(\mathbf{w},\mathbf{w}')\leq 8$. Consequently, this partition satisfies~\eqref{Eq_distnce_uppr}. 

We next show that this partition also satisfies~\eqref{Eq_set_lowr}. To this end, for every codeword $\bc_t(d)$, consider all the sequences which differ exactly in two non-zero positions from $\bc_t(d)$.
There are $ \binom{\el_n-1}{2} (M_n-1)^2$  such sequences in $ \cT_{\el_n -1}$. Since $\cS_d^t$ also contains the codeword $\bc_t(d)$, we obtain that
\begin{align*}
| \cS_d^t|  & \geq \notag
\binom{\el_n-1}{2} (M_n-1)^2 + 1\\
& \geq \binom{\el_n-1}{2} +1   \\
& \geq  \el_n +1 
\end{align*}  
by~\eqref{Eq_seqMn_assum} and~\eqref{Eq_seqln_assum}.

\subsubsection*{Case 4---$t=\el_n$} We obtain a partition by defining a code $\mathcal{C}_t$ in $\cT_{\el_n-1}$ that has the same properties as the code used in Case 2. We then use the same procedure as in Case 2 to assign messages in $\mathbf{w}\in\cT_t$ to the sets $\mathcal{S}_d^t$, $d=1,\ldots,D_t$. This gives a partition of $\cT_t$ where any two $\mathbf{w}, \mathbf{w}'\in\mathcal{S}_d^t$ satisfy $d_H(\mathbf{w},\mathbf{w}')\leq 8$. Consequently, this partition satisfies~\eqref{Eq_distnce_uppr}.

We next show that this partition also satisfies~\eqref{Eq_set_lowr}. To this end, for every codeword $\bc_t(d)$, consider all sequences which are  at Hamming distance $1$ from $\bc_t(d)$. There are $\el_n(M_n-1)$ such sequences. Since $\cS_d^t$ also contains the codeword, we have 
\begin{align}
| \cS_d^t| & \geq 1+ \el_n(M_n-1) \notag \\
& \geq 1+\el_n \notag
\end{align}  
by~\eqref{Eq_seqMn_assum}. 

Having obtained a partition of $\cT_t$ that satisfies~\eqref{Eq_set_lowr} and~\eqref{Eq_distnce_uppr}, we next derive the lower bound~\eqref{Eq_avg_prob_lowr}.	To this end, we use  a stronger form of Fano's inequality known as Birg\'e's inequality.
\begin{lemma}[Birg\'e's inequality]
	\label{Lem_Berge}
	Let $(\cY, \cB)$ be a measurable space with a $\sigma$-field, and let $P_1,\ldots, P_N$  be probability measures defined on $\cB$. Further let $\cA_i$, $i=1, \ldots,N$  denote $N$ events defined on $\cY$, where $N\geq 2$.
	Then 
	\begin{align*}
	\frac{1}{N} \sum_{i=1}^{N} P_i(\cA_i) \leq \frac{\frac{1}{N^2} \sum_{i,j}^{}  D(P_i\|P_j)+\log 2}{\log (N-1)}.
	\end{align*}
\end{lemma}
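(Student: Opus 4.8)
The plan is to recast the inequality as a Fano-type bound for a multiple-hypothesis testing problem. Let $I$ be uniformly distributed on $\{1,\ldots,N\}$ and, conditionally on $I=i$, let $Y$ have law $P_i$. Assuming — as is automatically the case in our application, where the $\cA_i$ are the decoder's decision regions — that the events $\cA_1,\ldots,\cA_N$ are pairwise disjoint, the estimator $\hat I$ of $I$ given by $\hat I(y)=i$ for $y\in\cA_i$ (and, say, $\hat I(y)=1$ for $y\notin\bigcup_i\cA_i$) is well defined as a function of $Y$ and satisfies
\begin{equation*}
\Pr\{\hat I=I\}=\frac1N\sum_{i=1}^N P_i\{\hat I=i\}\ \geq\ \frac1N\sum_{i=1}^N P_i(\cA_i)\ =:\ \bar a .
\end{equation*}
In particular, its error probability obeys $P_e:=\Pr\{\hat I\neq I\}\leq 1-\bar a$.

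First I would apply Fano's inequality. Since $\hat I$ is a deterministic function of $Y$, we have $H(I\mid Y)\leq H(I\mid\hat I)$; Fano's inequality together with $H_2(P_e)\leq\log 2$ and $\log(N-1)\geq 0$ then gives
\begin{equation*}
H(I\mid Y)\ \leq\ H_2(P_e)+P_e\log(N-1)\ \leq\ \log 2+(1-\bar a)\log(N-1).
\end{equation*}
As $I$ is uniform, $H(I)=\log N$, so
\begin{equation*}
I(I;Y)=\log N-H(I\mid Y)\ \geq\ \log N-\log 2-(1-\bar a)\log(N-1).
\end{equation*}

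Next I would bound the mutual information from above by the quantity in the lemma. Writing $\bar P=\frac1N\sum_{j=1}^N P_j$ for the marginal of $Y$, we have $I(I;Y)=\frac1N\sum_{i=1}^N D(P_i\|\bar P)$, and the convexity of $Q\mapsto D(P\|Q)$ yields $D(P_i\|\bar P)\leq\frac1N\sum_{j=1}^N D(P_i\|P_j)$, hence $I(I;Y)\leq\frac1{N^2}\sum_{i,j}D(P_i\|P_j)$. Combining this with the previous display and using $\log N\geq\log(N-1)$ gives
\begin{equation*}
\frac1{N^2}\sum_{i,j}D(P_i\|P_j)\ \geq\ \bar a\,\log(N-1)-\log 2 ,
\end{equation*}
and solving for $\bar a$ (dividing by $\log(N-1)>0$ when $N\geq 3$; the case $N=2$ is trivial since the right-hand side of the lemma is then infinite) gives the assertion.

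The argument is short and has no real obstacle — it is a careful repackaging of Fano's inequality — so the points that need attention are bookkeeping ones: keeping the inequalities pointing the right way in the Fano and convexity steps, and inserting $H_2(\cdot)\leq\log 2$ and $\log N\geq\log(N-1)$ at the right moments. The only genuinely substantive point is the disjointness of the $\cA_i$: it is what makes $\hat I$ satisfy $\Pr\{\hat I=I\}\geq\bar a$, and the inequality is simply false without it (take all $\cA_i=\cY$ and all $P_i$ equal: the left-hand side is $1$ while the right-hand side drops below $1$ for $N\geq 4$). Since Lemma~\ref{Lem_Berge} is invoked only for decoding regions, which are disjoint by construction, this causes no difficulty.
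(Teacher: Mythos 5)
Your proof is correct, and it is a complete argument where the paper gives none: the paper simply cites \cite{Yatracos88} for Lemma~\ref{Lem_Berge}, so there is no in-paper proof to compare against. Your derivation is the standard route to this generalized Fano inequality --- introduce a uniform index $I$, pass from $H(I\mid Y)$ to $H(I\mid \hat I)$ for the estimator induced by the $\cA_i$, apply Fano with $H_2(P_e)\leq \log 2$, and upper-bound $I(I;Y)=\frac1N\sum_i D(P_i\Vert \bar P)$ by $\frac{1}{N^2}\sum_{i,j}D(P_i\Vert P_j)$ via convexity of relative entropy in its second argument --- and every inequality points the right way, including the final use of $\log N\geq\log(N-1)$ and the disposal of the $N=2$ case. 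The one genuinely substantive observation you make deserves emphasis: the lemma as stated in the paper omits the hypothesis that the events $\cA_1,\ldots,\cA_N$ are pairwise disjoint, and your counterexample (all $\cA_i=\cY$, all $P_i$ equal, $N\geq 4$) shows the statement is false without it. This is a genuine imprecision in the paper's formulation rather than in your proof; it is harmless in context because the lemma is invoked in the proof of Lemma~\ref{Lem_energy_bound} only with the $\cA_j$ taken to be the decoder's decision regions, which partition $\cY^n$ by construction.
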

\begin{proof}
	See~\cite{Yatracos88} and references therein.
\end{proof}

To apply Lemma~\ref{Lem_Berge} to the problem at hand, we set $N=|\cS_d^t|$ and $P_j = P_{\bY|\bX}(\cdot|\bx(j))$, where $\bx(j)$ denotes the set of codewords transmitted to convey the set of messages $j \in \cS_d^t$.
We further define $\cA_j$ as the subset of $\cY^n$ for which the decoder declares the set of messages $j\in\cS_d^t$. Then, the probability of
error in decoding messages $j\in\cS_d^t$ is given by $P_e(j) =1-P_j(\cA_j)$, and $\frac{1}{|\cS_d^t|} \sum_{j\in \cS_d^t} P_j(\cA_j)$ denotes the average probability of correctly decoding a \edit{set of messages} in $\cS_d^t$.

For two multivariate Gaussian distributions \mbox{${\bf Z}_1 \sim \cN(\boldsymbol {\mu_1 }, \frac{N_0}{2}I)$}
and ${\bf Z}_2 \sim \cN(\boldsymbol {\mu_2}, \frac{N_0}{2}I)$ (where $I$ denotes the identity matrix),
the relative entropy $D({\bf Z}_1\| { \bf Z}_2)$ is given  by $ \frac{ ||\boldsymbol {\mu_1 - \mu_2}||^2}{N_0}$. We next note that $P_{\bw} =  \cN(\overline{\bx}(\bw), \frac{N_0}{2}I)$ and $P_{\bw'} = \cN(\overline{\bx}(\bw'), \frac{N_0}{2}I)$, where $\overline{\bx}(j)$ denotes the sum of codewords contained in $\bx(j)$.
\edit{By construction}, any two messages $\bw, \bw' \in \cS_d^t$ are at a Hamming distance of at most 8. Without loss of generality, let us assume that $w_j = w'_j$ for $j=9, \ldots, \el_n$. Then
\begin{align}
\big\|\sum_{j=1}^{\el_n} \bx_j(w_j) -\sum_{i=1}^{\el_n} \bx_j(w'_j)\big\|^2 & = \big\|\sum_{i=1}^{8} \bx_j(w_j) - \bx_j(w'_j)\big\|^2 \notag \\
& \leq \big\|\sum_{j=1}^{8} |\bx_j(w_j) - \bx_j(w'_j)|\big\|^2 \notag\\
& \leq (8 \times 2\sqrt{E_n})^2 \notag \\
& = 256 E_n \notag
\end{align}
where we have used the triangle inequality and that the energy of a codeword for any user is upper-bounded by $E_n$. Thus, $D(P_{\bw}\| P_{\bw'}) \leq 256 E_n/N_0$.

It  follows from Birg\'e's inequality that
\begin{align}
\frac{1}{|\cS_d^t|} \sum_{\bw \in \cS_d^t} P_e(\bw) & \geq 1  -   \frac{ 256 E_n/N_0+\log 2}{\log (|\cS_d^t|-1)} \notag \\
&  \geq 1  -   \frac{ 256 E_n/N_0+\log 2}{\log \el_n } \label{Eq_prob_set_lowr}
\end{align}
where the last step holds because  $ |\cS_d^t|-1 \geq \el_n$. This proves~\eqref{Eq_avg_prob_lowr} and hence also~\eqref{Eq_prob_typ_lowr}.

Combining~\eqref{Eq_prob_typ_lowr} and~\eqref{Eq_avg_prob_err3}, we obtain
\begin{align*}
P_{e}^{(n)} & \geq \left( 1  -   \frac{ 256 E_n/N_0+\log 2}{\log \el_n } \right)  \sum_{i=1}^{\el_n}  \text{Pr}(\cT_{i}) \\
&  = \left( 1  -   \frac{ 256 E_n/N_0+\log 2}{\log \el_n } \right) (1-\text{Pr}(\cT_{0} )). 
\end{align*}
\edit{
	The probability $\textnormal{Pr}(\mathcal{T}_0)=\left((1-\alpha_n)^{\frac{1}{\alpha_n}}\right)^{k_n}$ is upper-bounded by $e^{-k_n}$ so, by the lemma's assumption $k_n=\Omega(1)$,
	\begin{equation*}
	\limsup_{n\to\infty} \textnormal{Pr}(\mathcal{T}_0) < 1.
	\end{equation*}
}
Consequently, $P_{e}^{(n)}$ \edit{may tend to zero as $n \to \infty$} only if 
\begin{align}
E_n  & = \Omega\left(\log \el_n \right).\notag
\end{align}
This proves Lemma~\ref{Lem_energy_bound}.

\comment{
	\section{Proof of Lemma~\ref{Lem_detect_uppr}}
	\label{Sec_appnd_ortho}
	Let $\bY_1$ denote the received vector of length $n/\el_n$ corresponding to user 1 in the orthogonal-access scheme.
	From the pilot signal, which is the first symbol $Y_{11} $ of $\bY_1$, the receiver guesses whether user 1 is active or not. Specifically, the user is estimated as active if $Y_{11} > \frac{\sqrt{tE_n}}{2}$ and as inactive otherwise.
	If the user is declared as active, then the receiver decodes the message from the rest of $\bY_1$.
	Let $\Pr( \hat{W}_1 \neq w |W_1 = w)$ denote the decoding error probability when  message $w,w=0, \ldots, M_n$ was transmitted.
	Then, $P_1$ is given by
	\begin{align}
	P_1 & = (1-\alpha_n)\Pr( \hat{W}_1 \neq 0) + \frac{\alpha_n}{M_n} \sum_{w=1}^{M_n} \Pr( \hat{W}_1 \neq w |W_1 = w) \notag \\
	& \leq \Pr( \hat{W}_1 \neq 0|W_1=0) + \frac{1}{M_n} \sum_{w=1}^{M_n} \Pr( \hat{W}_1 \neq w  | W_1 = w). \label{Eq_err_prob_uppr}
	\end{align}
	If $W_1=0$, then an error occurs  if $Y_{11} > \frac{\sqrt{tE_n}}{2}$. So, we have
	\begin{align}
	\Pr( \hat{W}_1 \neq 0|W_1=0) & = Q\left( \frac{\sqrt{tE_n}}{2} \right). \label{Eq_err_prob_uppr2}
	\end{align}
	Let $\cE_{11}$ denote the event $Y_{11} \leq \frac{\sqrt{tE_n}}{2}$ and $D_w$ denote the  error event in decoding message $w$ for the transmission scheme described in Section~\ref{Sec_ortho_access} when the user is known to be active. Then, for every $w=1,\ldots,M_n$
	\begin{align}
	\Pr( \hat{W}_1 \neq w |W_1 = w) & =   \Pr(\cE_{11} \cup  \{ \cE_{11}^c \cap \hat{W}_1 \neq w \}| W_1 = w) \notag  \\
	& \leq  \Pr(\cE_{11}| W_1 = w)  + \Pr(  \cE_{11}^c | W_1 = w)  \Pr(  \hat{W}_1 \neq w | W_1 = w, \cE_{11}^c  ) \notag \\
	& \leq   \Pr(\cE_{11}| W_1 = w)  +  \Pr( D_w | W_1 = w) \notag
	\end{align}
	where the last step follows because $\Pr(\cE_{11}^c|W_1=w)\leq 1$ and by the definition of $D_w$. 
	
	We  next define $\Pr(D) = \frac{1}{M_n} \sum_{w=1}^{M_n} \Pr(D_w)$. Since $P(\cE_{11} | W_1 =w)  = Q\left( \frac{\sqrt{tE_n}}{2} \right)$,
	it  follows from~\eqref{Eq_err_prob_uppr} that
	\begin{align}
	P_1 & \leq 2 Q\left( \frac{\sqrt{tE_n}}{2} \right)+ P(D). \label{Eq_singl_usr_uppr}
	\end{align}
	We next upper-bound $P(D)$. To this end, we use the following upper bound on the average probability of error $P(\mathcal{E})$ of the Gaussian point-to-point channel for a code of blocklength $n$ with power $P$~\cite[Section~7.4]{Gallager68}
	\begin{align}
	P(\cE) & \leq  M_n^{ \rho}  \exp[-nE_0(\rho, P)],  \; \mbox{ for every } 0< \rho \leq 1 \label{Eq_upp_dec_AWGN} 
	\end{align}
	where 
	\begin{align}
	E_0(\rho, P) & \triangleq \frac{\rho}{2} \ln \left(1+\frac{2P}{(1+\rho)N_0}\right). \notag
	\end{align} 
	By substituting in~\eqref{Eq_upp_dec_AWGN} $n$ by $\frac{n}{\el_n} - 1$ and  $P$ by $P_n = \frac{(1-t)E_n}{\frac{n}{\el_n} -1}$, we obtain that $P(D)$ can be upper-bounded in terms of the rate per unit-energy $\CR=\frac{\log M_n}{E_n}$ as follows:
	\begin{align}
	P(D) & \leq  M_n^{ \rho}  \exp\left[-\left(\frac{ n}{\el_n}-1\right)E_0(\rho, P_n)\right] \nonumber \\
	& = \exp\left[   \rho  \ln M_n -  \left(\frac{ n}{\el_n}-1\right) \frac{\rho}{2} \ln \left(1+\frac{ 2E_n(1-t)}{ \left(\frac{ n}{\el_n}-1\right)(1+\rho)N_0}\right) \right] \nonumber \\
	& = \exp\left[ -E_n(1-t) \rho   \left(  \frac{\ln \left(1+\frac{ 2E_n(1-t)}{ \left(\frac{ n}{\el_n}-1\right)(1+\rho)N_0}\right)}{ \frac{2E_n(1-t)}{ \left(\frac{ n}{\el_n}-1\right)} } -\frac{\CR}{(1-t) \log e} \right)\right]. \label{Eq_err_uppr}
	\end{align}
	
	We next choose $E_n = c_n \ln n$ with $c_n \triangleq \ln\bigl(\frac{n}{\el_n\ln n}\bigr)$. Since, by assumption, $\el_n = o(n / \log n)$, this implies that $\frac{\el_nE_n}{n} \to 0$ as $n \to \infty$, hence 
	$\frac{E_n}{n/\el_n -1} \to 0$. Thus,
	the first term in the inner most bracket in \eqref{Eq_err_uppr} tends to $1/((1+\rho)N_0)$ as $n \to \infty$. It follows that for $\CR < \frac{\log e}{N_0}$, there exists a sufficiently large $n'_0$, a $ t > 0$, a $\rho > 0$, and a $\delta>0$ such that, for $n\geq n'_0$, the RHS of \eqref{Eq_err_uppr} is upper-bounded by $\exp[-E_n(1-t) \rho \delta]$. It follows that, for our choice $E_n=c_n\ln n$, we have for $n\geq n_0'$
	\begin{align}
	P(D) & \leq \exp \left[ \ln \left(\frac{1}{n}\right)^{c_n\delta \rho(1-t)} \right]. \notag
	\end{align}
	Since $c_n \to \infty $ as $n\to\infty$, and hence also
	$c_n\delta \rho(1-t) \to \infty$, this yields
	\begin{align}
	P(D) & \leq \frac{1}{n^2} \label{Eq_act_dec_uppr}
	\end{align}
	for sufficiently large $n \geq n_0'$.
	
	Similary, for $n\geq \tilde{n}_0$ and sufficiently large $\tilde{n}_0$, we can upper-bound
	\begin{equation}
	2 Q\left(\frac{\sqrt{t E_n}}{2}\right) \leq \frac{1}{n^2} \label{Eq_usr_det_uppr}
	\end{equation}
	by upper-bounding the $Q$-function as $Q(\beta)\leq \frac{e^{-\beta^2/2}}{\sqrt{2\pi}\beta}$ and evaluating the resulting bound for $E_n=c_n\ln n$.
	Using~\eqref{Eq_act_dec_uppr} and~\eqref{Eq_usr_det_uppr} in~\eqref{Eq_singl_usr_uppr}, we obtain for $n \geq \max(\tilde{n}_0,n_0')$ that
	\begin{align}
	P_1 \leq \frac{2}{n^2}. \notag
	\end{align}
	This proves Lemma~\ref{Lem_detect_uppr}.
}

\section*{Acknowledgment}
The authors wish to thank the Associate Editor A.~Anastasopoulos and the anonymous referees for their valuable comments.

\end{appendices}

\bibliography{Bibliography.bib}

\begin{thebibliography}{10}
\providecommand{\url}[1]{#1}
\csname url@samestyle\endcsname
\providecommand{\newblock}{\relax}
\providecommand{\bibinfo}[2]{#2}
\providecommand{\BIBentrySTDinterwordspacing}{\spaceskip=0pt\relax}
\providecommand{\BIBentryALTinterwordstretchfactor}{4}
\providecommand{\BIBentryALTinterwordspacing}{\spaceskip=\fontdimen2\font plus
\BIBentryALTinterwordstretchfactor\fontdimen3\font minus
  \fontdimen4\font\relax}
\providecommand{\BIBforeignlanguage}[2]{{%
\expandafter\ifx\csname l@#1\endcsname\relax
\typeout{** WARNING: IEEEtran.bst: No hyphenation pattern has been}%
\typeout{** loaded for the language `#1'. Using the pattern for}%
\typeout{** the default language instead.}%
\else
\language=\csname l@#1\endcsname
\fi
#2}}
\providecommand{\BIBdecl}{\relax}
\BIBdecl

\bibitem{ChenCG17}
X.~Chen, T.~Y. Chen, and D.~Guo, ``Capacity of {G}aussian many-access
  channels,'' \emph{IEEE Transactions on Information Theory}, vol.~63, no.~6,
  pp. 3516--3539, June 2017.

\bibitem{LiuPopovski18}
L.~{Liu}, E.~G. {Larsson}, W.~{Yu}, P.~{Popovski}, C.~{Stefanovic}, and E.~{de
  Carvalho}, ``Sparse signal processing for grant-free massive connectivity: A
  future paradigm for random access protocols in the internet of things,''
  \emph{IEEE Signal Processing Magazine}, vol.~35, no.~5, pp. 88--99, September
  2018.

\bibitem{ShahiTD18}
S.~Shahi, D.~Tuninetti, and N.~Devroye, ``The strongly asynchronous massive
  access channel,'' {\tt arXiv: 1807.09934 [cs.IT]}, July 2018.

\bibitem{Polyanskiy17}
Y.~Polyanskiy, ``A perspective on massive random-access,'' in \emph{IEEE
  International Symposium on Information Theory (ISIT)}, Aachen, Germany, June
  2017, pp. 2523--2527.

\bibitem{OrdentlichP17}
O.~{Ordentlich} and Y.~{Polyanskiy}, ``Low complexity schemes for the random
  access gaussian channel,'' in \emph{2017 IEEE International Symposium on
  Information Theory (ISIT)}, Aachen, Germany, June 2017, pp. 2528--2532.

\bibitem{Vem19}
A.~{Vem}, K.~R. {Narayanan}, J.~{Chamberland}, and J.~{Cheng}, ``A
  user-independent successive interference cancellation based coding scheme for
  the unsourced random access {G}aussian channel,'' \emph{IEEE Transactions on
  Communications}, vol.~67, no.~12, pp. 8258--8272, December 2019.

\bibitem{Amalladinne20}
V.~K. {Amalladinne}, J.~F. {Chamberland}, and K.~R. {Narayanan}, ``A coded
  compressed sensing scheme for unsourced multiple access,'' \emph{IEEE
  Transactions on Information Theory}, vol.~66, no.~10, pp. 6509--6533, October
  2020.

\bibitem{Fengler19}
A.~{Fengler}, P.~{Jung}, and G.~{Caire}, ``{SPARC}s and {AMP} for unsourced
  random access,'' in \emph{2019 IEEE International Symposium on Information
  Theory (ISIT)}, Paris, France, July 2019, pp. 2843--2847.

\bibitem{Fengler20}
------, ``Unsourced multiuser sparse regression codes achieve the symmetric
  {MAC} capacity,'' in \emph{2020 IEEE International Symposium on Information
  Theory (ISIT)}, Los Angeles, USA, June 2020, pp. 3001--3006.

\bibitem{KowshikPISIT19}
S.~S. Kowshik and Y.~Polyanskiy, ``Quasi-static fading {MAC} with many users
  and finite payload,'' in \emph{IEEE International Symposium on Information
  Theory (ISIT)}, Paris, France, July 2019, pp. 440--444.

\bibitem{KowshikP19}
------, ``Fundamental limits of many-user {MAC} with finite payloads and
  fading,'' {\tt arXiv: 1901.06732 [cs.IT]}, 2019.

\bibitem{KowshikTCOM20}
S.~S. {Kowshik}, K.~{Andreev}, A.~{Frolov}, and Y.~{Polyanskiy}, ``Energy
  efficient coded random access for the wireless uplink,'' \emph{IEEE
  Transactions on Communications}, vol.~68, no.~8, pp. 4694--4708, August 2020.

\bibitem{ZadikPT19}
I.~Zadik, Y.~Polyanskiy, and C.~Thrampoulidis, ``Improved bounds on {G}aussian
  {MAC} and sparse regression via {G}aussian inequalities,'' in \emph{IEEE
  International Symposium on Information Theory (ISIT)}, Paris, France, July
  2019, pp. 430--434.

\bibitem{Verdu90}
S.~Verd\'u, ``On channel capacity per unit cost,'' \emph{IEEE Transactions on
  Information Theory}, vol.~36, no.~5, pp. 1019--1030, September 1990.

\bibitem{ShannonGB67}
C.~E. Shannon, R.~G. Gallager, and E.~R. Berlekamp, ``Lower bounds to error
  probability for coding on discrete memoryless channels {I},''
  \emph{Information and Control}, vol.~10, pp. 65--103, January 1967.

\bibitem{ViterbiO79}
A.~J. Viterbi and J.~K. Omura, \emph{Principles of Digital Communication and
  Coding}, 1st~ed.\hskip 1em plus 0.5em minus 0.4em\relax New York, NY, USA:
  McGraw-Hill Book Company, 1979.

\bibitem{Polyanskiy18}
Y.~Polyanskiy, ``Information theoretic perspective on massive
  multiple-access,''
  http://people.lids.mit.edu/yp/homepage/data/NASIT18-MAC-tutorial.pdf, June
  2018.

\bibitem{Gallager68}
R.~G. Gallager, \emph{Information Theory and Reliable Communication},
  1st~ed.\hskip 1em plus 0.5em minus 0.4em\relax New York, NY, USA: John Wiley
  \& Sons, 1968.

\bibitem{CoverJ06}
T.~Cover and J.~Thomas, \emph{Elements of Information Theory}, 2nd~ed.\hskip
  1em plus 0.5em minus 0.4em\relax New York, NY, USA: Wiley-Interscience, 2006.

\bibitem{PolyanskiyPV11}
Y.~Polyanskiy, H.~V. Poor, and S.~Verd\'u, ``Minimum energy to send $k$ bits
  through the {G}aussian channel with and without feedback,'' \emph{IEEE
  Transactions on Information Theory}, vol.~57, no.~8, pp. 4880--4902, August
  2011.

\bibitem{Gallager85}
R.~Gallager, ``A perspective on multiaccess channels,'' \emph{IEEE Transactions
  on Information Theory}, vol.~31, no.~2, pp. 124--142, March 1985.

\bibitem{Lapidoth17}
A.~Lapidoth, \emph{A Foundation in Digital Communication}, 2nd~ed.\hskip 1em
  plus 0.5em minus 0.4em\relax Cambridge, UK: Cambridge University Press, 2017.

\bibitem{Yatracos88}
Y.~G. Yatracos, ``A lower bound on the error in nonparametric regression type
  problems,'' \emph{The Annals of Statistics}, vol.~16, no.~3, pp. 1180--1187,
  September 1988.

\end{thebibliography}
\bibliographystyle{IEEEtran}

\end{document}